\documentclass[acmsmall]{acmart}\settopmatter{printfolios=true,printccs=false,printacmref=false}

\newif\iffull
\fulltrue



\usepackage{amsmath}

\usepackage{amssymb}
\usepackage{array, multirow}
\usepackage[plain,Fig]{algorithm}
\usepackage{algorithmicx}
\usepackage[noend]{algpseudocode}
\usepackage{comment}
\usepackage{enumitem}
\usepackage{graphicx}
\usepackage[final]{listings}
\usepackage{pifont}
\usepackage{relsize}
\usepackage{stmaryrd}
\usepackage{syntax}
\usepackage{textcomp}
\usepackage{xspace}
\usepackage{wrapfig}

\usepackage{float}

\usepackage{amsthm}

\usepackage{booktabs}
\usepackage{multirow}

\usepackage{hhline}
\usepackage{upgreek}

\usepackage{fancyvrb}

\usepackage{subcaption}

\usepackage{varwidth}
\usepackage{makecell}
\usepackage{placeins}

\makeatletter
\def\PY@reset{\let\PY@it=\relax \let\PY@bf=\relax%
    \let\PY@ul=\relax \let\PY@tc=\relax%
    \let\PY@bc=\relax \let\PY@ff=\relax}
\def\PY@tok#1{\csname PY@tok@#1\endcsname}
\def\PY@toks#1+{\ifx\relax#1\empty\else%
    \PY@tok{#1}\expandafter\PY@toks\fi}
\def\PY@do#1{\PY@bc{\PY@tc{\PY@ul{%
    \PY@it{\PY@bf{\PY@ff{#1}}}}}}}
\def\PY#1#2{\PY@reset\PY@toks#1+\relax+\PY@do{#2}}

\expandafter\def\csname PY@tok@gd\endcsname{\def\PY@tc##1{\textcolor[rgb]{0.63,0.00,0.00}{##1}}}
\expandafter\def\csname PY@tok@gu\endcsname{\let\PY@bf=\textbf\def\PY@tc##1{\textcolor[rgb]{0.50,0.00,0.50}{##1}}}
\expandafter\def\csname PY@tok@gt\endcsname{\def\PY@tc##1{\textcolor[rgb]{0.00,0.27,0.87}{##1}}}
\expandafter\def\csname PY@tok@gs\endcsname{\let\PY@bf=\textbf}
\expandafter\def\csname PY@tok@gr\endcsname{\def\PY@tc##1{\textcolor[rgb]{1.00,0.00,0.00}{##1}}}
\expandafter\def\csname PY@tok@cm\endcsname{\let\PY@it=\textit\def\PY@tc##1{\textcolor[rgb]{0.25,0.50,0.50}{##1}}}
\expandafter\def\csname PY@tok@vg\endcsname{\def\PY@tc##1{\textcolor[rgb]{0.10,0.09,0.49}{##1}}}
\expandafter\def\csname PY@tok@vi\endcsname{\def\PY@tc##1{\textcolor[rgb]{0.10,0.09,0.49}{##1}}}
\expandafter\def\csname PY@tok@vm\endcsname{\def\PY@tc##1{\textcolor[rgb]{0.10,0.09,0.49}{##1}}}
\expandafter\def\csname PY@tok@mh\endcsname{\def\PY@tc##1{\textcolor[rgb]{0.40,0.40,0.40}{##1}}}
\expandafter\def\csname PY@tok@cs\endcsname{\let\PY@it=\textit\def\PY@tc##1{\textcolor[rgb]{0.25,0.50,0.50}{##1}}}
\expandafter\def\csname PY@tok@ge\endcsname{\let\PY@it=\textit}
\expandafter\def\csname PY@tok@vc\endcsname{\def\PY@tc##1{\textcolor[rgb]{0.10,0.09,0.49}{##1}}}
\expandafter\def\csname PY@tok@il\endcsname{\def\PY@tc##1{\textcolor[rgb]{0.40,0.40,0.40}{##1}}}
\expandafter\def\csname PY@tok@go\endcsname{\def\PY@tc##1{\textcolor[rgb]{0.53,0.53,0.53}{##1}}}
\expandafter\def\csname PY@tok@cp\endcsname{\def\PY@tc##1{\textcolor[rgb]{0.74,0.48,0.00}{##1}}}
\expandafter\def\csname PY@tok@gi\endcsname{\def\PY@tc##1{\textcolor[rgb]{0.00,0.63,0.00}{##1}}}
\expandafter\def\csname PY@tok@gh\endcsname{\let\PY@bf=\textbf\def\PY@tc##1{\textcolor[rgb]{0.00,0.00,0.50}{##1}}}
\expandafter\def\csname PY@tok@ni\endcsname{\let\PY@bf=\textbf\def\PY@tc##1{\textcolor[rgb]{0.60,0.60,0.60}{##1}}}
\expandafter\def\csname PY@tok@nl\endcsname{\def\PY@tc##1{\textcolor[rgb]{0.63,0.63,0.00}{##1}}}
\expandafter\def\csname PY@tok@nn\endcsname{\let\PY@bf=\textbf\def\PY@tc##1{\textcolor[rgb]{0.00,0.00,1.00}{##1}}}
\expandafter\def\csname PY@tok@no\endcsname{\def\PY@tc##1{\textcolor[rgb]{0.53,0.00,0.00}{##1}}}
\expandafter\def\csname PY@tok@na\endcsname{\def\PY@tc##1{\textcolor[rgb]{0.49,0.56,0.16}{##1}}}
\expandafter\def\csname PY@tok@nb\endcsname{\def\PY@tc##1{\textcolor[rgb]{0.00,0.50,0.00}{##1}}}
\expandafter\def\csname PY@tok@nc\endcsname{\let\PY@bf=\textbf\def\PY@tc##1{\textcolor[rgb]{0.00,0.00,1.00}{##1}}}
\expandafter\def\csname PY@tok@nd\endcsname{\def\PY@tc##1{\textcolor[rgb]{0.67,0.13,1.00}{##1}}}
\expandafter\def\csname PY@tok@ne\endcsname{\let\PY@bf=\textbf\def\PY@tc##1{\textcolor[rgb]{0.82,0.25,0.23}{##1}}}
\expandafter\def\csname PY@tok@nf\endcsname{\def\PY@tc##1{\textcolor[rgb]{0.00,0.00,1.00}{##1}}}
\expandafter\def\csname PY@tok@si\endcsname{\let\PY@bf=\textbf\def\PY@tc##1{\textcolor[rgb]{0.73,0.40,0.53}{##1}}}
\expandafter\def\csname PY@tok@s2\endcsname{\def\PY@tc##1{\textcolor[rgb]{0.73,0.13,0.13}{##1}}}
\expandafter\def\csname PY@tok@nt\endcsname{\let\PY@bf=\textbf\def\PY@tc##1{\textcolor[rgb]{0.00,0.50,0.00}{##1}}}
\expandafter\def\csname PY@tok@nv\endcsname{\def\PY@tc##1{\textcolor[rgb]{0.10,0.09,0.49}{##1}}}
\expandafter\def\csname PY@tok@s1\endcsname{\def\PY@tc##1{\textcolor[rgb]{0.73,0.13,0.13}{##1}}}
\expandafter\def\csname PY@tok@dl\endcsname{\def\PY@tc##1{\textcolor[rgb]{0.73,0.13,0.13}{##1}}}
\expandafter\def\csname PY@tok@ch\endcsname{\let\PY@it=\textit\def\PY@tc##1{\textcolor[rgb]{0.25,0.50,0.50}{##1}}}
\expandafter\def\csname PY@tok@m\endcsname{\def\PY@tc##1{\textcolor[rgb]{0.40,0.40,0.40}{##1}}}
\expandafter\def\csname PY@tok@gp\endcsname{\let\PY@bf=\textbf\def\PY@tc##1{\textcolor[rgb]{0.00,0.00,0.50}{##1}}}
\expandafter\def\csname PY@tok@sh\endcsname{\def\PY@tc##1{\textcolor[rgb]{0.73,0.13,0.13}{##1}}}
\expandafter\def\csname PY@tok@ow\endcsname{\let\PY@bf=\textbf\def\PY@tc##1{\textcolor[rgb]{0.67,0.13,1.00}{##1}}}
\expandafter\def\csname PY@tok@sx\endcsname{\def\PY@tc##1{\textcolor[rgb]{0.00,0.50,0.00}{##1}}}
\expandafter\def\csname PY@tok@bp\endcsname{\def\PY@tc##1{\textcolor[rgb]{0.00,0.50,0.00}{##1}}}
\expandafter\def\csname PY@tok@c1\endcsname{\let\PY@it=\textit\def\PY@tc##1{\textcolor[rgb]{0.25,0.50,0.50}{##1}}}
\expandafter\def\csname PY@tok@fm\endcsname{\def\PY@tc##1{\textcolor[rgb]{0.00,0.00,1.00}{##1}}}
\expandafter\def\csname PY@tok@o\endcsname{\def\PY@tc##1{\textcolor[rgb]{0.40,0.40,0.40}{##1}}}
\expandafter\def\csname PY@tok@kc\endcsname{\let\PY@bf=\textbf\def\PY@tc##1{\textcolor[rgb]{0.00,0.50,0.00}{##1}}}
\expandafter\def\csname PY@tok@c\endcsname{\let\PY@it=\textit\def\PY@tc##1{\textcolor[rgb]{0.25,0.50,0.50}{##1}}}
\expandafter\def\csname PY@tok@mf\endcsname{\def\PY@tc##1{\textcolor[rgb]{0.40,0.40,0.40}{##1}}}
\expandafter\def\csname PY@tok@err\endcsname{\def\PY@bc##1{\setlength{\fboxsep}{0pt}\fcolorbox[rgb]{1.00,0.00,0.00}{1,1,1}{\strut ##1}}}
\expandafter\def\csname PY@tok@mb\endcsname{\def\PY@tc##1{\textcolor[rgb]{0.40,0.40,0.40}{##1}}}
\expandafter\def\csname PY@tok@ss\endcsname{\def\PY@tc##1{\textcolor[rgb]{0.10,0.09,0.49}{##1}}}
\expandafter\def\csname PY@tok@sr\endcsname{\def\PY@tc##1{\textcolor[rgb]{0.73,0.40,0.53}{##1}}}
\expandafter\def\csname PY@tok@mo\endcsname{\def\PY@tc##1{\textcolor[rgb]{0.40,0.40,0.40}{##1}}}
\expandafter\def\csname PY@tok@kd\endcsname{\let\PY@bf=\textbf\def\PY@tc##1{\textcolor[rgb]{0.00,0.50,0.00}{##1}}}
\expandafter\def\csname PY@tok@mi\endcsname{\def\PY@tc##1{\textcolor[rgb]{0.40,0.40,0.40}{##1}}}
\expandafter\def\csname PY@tok@kn\endcsname{\let\PY@bf=\textbf\def\PY@tc##1{\textcolor[rgb]{0.00,0.50,0.00}{##1}}}
\expandafter\def\csname PY@tok@cpf\endcsname{\let\PY@it=\textit\def\PY@tc##1{\textcolor[rgb]{0.25,0.50,0.50}{##1}}}
\expandafter\def\csname PY@tok@kr\endcsname{\let\PY@bf=\textbf\def\PY@tc##1{\textcolor[rgb]{0.00,0.50,0.00}{##1}}}
\expandafter\def\csname PY@tok@s\endcsname{\def\PY@tc##1{\textcolor[rgb]{0.73,0.13,0.13}{##1}}}
\expandafter\def\csname PY@tok@kp\endcsname{\def\PY@tc##1{\textcolor[rgb]{0.00,0.50,0.00}{##1}}}
\expandafter\def\csname PY@tok@w\endcsname{\def\PY@tc##1{\textcolor[rgb]{0.73,0.73,0.73}{##1}}}
\expandafter\def\csname PY@tok@kt\endcsname{\def\PY@tc##1{\textcolor[rgb]{0.69,0.00,0.25}{##1}}}
\expandafter\def\csname PY@tok@sc\endcsname{\def\PY@tc##1{\textcolor[rgb]{0.73,0.13,0.13}{##1}}}
\expandafter\def\csname PY@tok@sb\endcsname{\def\PY@tc##1{\textcolor[rgb]{0.73,0.13,0.13}{##1}}}
\expandafter\def\csname PY@tok@sa\endcsname{\def\PY@tc##1{\textcolor[rgb]{0.73,0.13,0.13}{##1}}}
\expandafter\def\csname PY@tok@k\endcsname{\let\PY@bf=\textbf\def\PY@tc##1{\textcolor[rgb]{0.00,0.50,0.00}{##1}}}
\expandafter\def\csname PY@tok@se\endcsname{\let\PY@bf=\textbf\def\PY@tc##1{\textcolor[rgb]{0.73,0.40,0.13}{##1}}}
\expandafter\def\csname PY@tok@sd\endcsname{\let\PY@it=\textit\def\PY@tc##1{\textcolor[rgb]{0.73,0.13,0.13}{##1}}}


\makeatother

\setlength{\grammarindent}{3.5em}
\setlength{\grammarparsep}{3pt plus 1pt minus 1pt}

\definecolor{indigo}{HTML}{283593}
\definecolor{red}{HTML}{D50000}
\definecolor{green}{HTML}{558B2F}
\definecolor{orange}{HTML}{FF9800}
\definecolor{lightgreen}{HTML}{C5E1A5}
\definecolor{lightlime}{HTML}{B2FF59}
\definecolor{lightindigo}{HTML}{9FA8DA}
\definecolor{lightpurple}{HTML}{E1BEE7}
\definecolor{lightblue}{HTML}{90CAF9}
\definecolor{lightorange}{HTML}{FFB74D}
\definecolor{light-gray}{gray}{0.6}
\definecolor{super-light-gray}{gray}{0.77}

\makeatletter
\newcommand{\oset}[3][0ex]{%
  \mathrel{\mathop{#3}\limits^{
      \vbox to#1{\kern-2\ex@
        \hbox{$\scriptstyle#2$}\vss}}}}
\makeatother

\makeatletter
\newenvironment{btHighlight}[1][]
{\begingroup\tikzset{bt@Highlight@par/.style={#1}}\begin{lrbox}{\@tempboxa}}
{\end{lrbox}\bt@HL@box[bt@Highlight@par]{\@tempboxa}\endgroup}

\newcommand\btHL[1][]{%
  \begin{btHighlight}[#1]\bgroup\aftergroup\bt@HL@endenv%
}
\def\bt@HL@endenv{%
  \end{btHighlight}%
  \egroup
}
\newcommand{\bt@HL@box}[2][]{%
  \tikz[#1]{%
    \pgfpathrectangle{\pgfpoint{0.3pt}{0pt}}{\pgfpoint{\wd #2}{\ht #2}}%
    \pgfusepath{use as bounding box}%
    \node[anchor=base west,fill=lightorange,outer sep=0pt,inner xsep=0.3pt,inner ysep=0pt,minimum height=\ht\strutbox+0.3pt,#1]{\raisebox{0.3pt}{\strut}\strut\usebox{#2}};
  }%
}
\makeatother

\makeatletter
\let\OldStatex\Statex
\renewcommand{\Statex}[1][3]{%
  \setlength\@tempdima{\algorithmicindent}%
  \OldStatex\hskip\dimexpr#1\@tempdima\relax}
\makeatother

\newcommand{\irule}[2]%
{\mkern-2mu\displaystyle\frac{#1}{\vphantom{,}#2}\mkern-2mu}

\usepackage{tikz}
\usepackage{pgfplots}
\usetikzlibrary{calc,trees,positioning,arrows,chains,shapes.geometric,%
  decorations.pathreplacing,decorations.pathmorphing,decorations.text,shapes,%
  matrix,shapes.symbols,patterns,shadows}

\algnewcommand\algorithmicswitch{\textbf{switch}}
\algnewcommand\algorithmiccase{\textbf{case}}
\algnewcommand\algorithmicdefault{\textbf{default}}

\algdef{SE}[SWITCH]{Switch}{EndSwitch}[1]{\algorithmicswitch\ #1:}{\algorithmicend\ \algorithmicswitch}
\algdef{SE}[CASE]{Case}{EndCase}[1]{\algorithmiccase\ #1:}{\algorithmicend\ \algorithmiccase}%
\algdef{SE}[SWITCH]{Default}{EndDefault}{\algorithmicdefault:}{\algorithmicend\ \algorithmicdefault}%
\algtext*{EndSwitch}%
\algtext*{EndCase}%
\algtext*{EndDefault}

\relpenalty=9999
\binoppenalty=9999

{\begin{list}{$\bullet$}{\setlength{\leftmargin}{1.5ex}%
\setlength{\itemindent}{.5ex}}
\setlength{\parindent}{2ex}
}%
{\end{list}}


\setcopyright{none}

\bibliographystyle{ACM-Reference-Format}

\newcommand{\cmark}{\ding{51}}%
\newcommand{\xmark}{\ding{55}}%

\begin{document}

\title{Verifying Correct Usage of Context-Free API Protocols (Extended Version)
 }

\author{Kostas Ferles}

\affiliation{
  \institution{The University of Texas at Austin}            
  \country{USA}                    
}
\email{kferles@cs.utexas.edu}          

\author{Jon Stephens}

\affiliation{
  \institution{The University of Texas at Austin}           
  \country{USA}                   
}
\email{jon@cs.utexas.edu}         

\author{Isil Dillig}

\affiliation{
  \institution{The University of Texas at Austin}           
  \country{USA}                   
}
\email{isil@cs.utexas.edu}         

\begin{CCSXML}
<ccs2012>
<concept>
<concept_id>10011007.10011006.10011008</concept_id>
<concept_desc>Software and its engineering~General programming languages</concept_desc>
<concept_significance>500</concept_significance>
</concept>
<concept>
<concept_id>10003456.10003457.10003521.10003525</concept_id>
<concept_desc>Social and professional topics~History of programming languages</concept_desc>
<concept_significance>300</concept_significance>
</concept>
</ccs2012>
\end{CCSXML}

\ccsdesc[500]{Software and its engineering~General programming languages}
\ccsdesc[300]{Social and professional topics~History of programming languages}


\newcommand{\program}{\mathcal{P}}
\newcommand{\cfgrammar}{\mathcal{G}}
\newcommand{\cfgprog}{\cfgrammar_\program}
\newcommand{\cfgspec}{\cfgrammar_S}
\newcommand{\lang}{\mathcal{L}}
\newcommand{\toolname}{{\sc CFPChecker}} 
\newcommand{\todo}[1]{{\color{red} \bf #1}}
\newcommand{\mcode}[1]{{\footnotesize \mathsf{#1}}\xspace}
\newcommand{\code}[1]{{\small \texttt{#1}}}

\newcommand{\vars}{\mathit{V}}
\newcommand{\fields}{\mathit{F}}
\newcommand{\pstate}{\mathcal{\sigma}}
\newcommand{\ptrace}{\tau}

\newcommand{\preds}{\mathit{Pred}}
\newcommand{\cfa}{\mathcal{A}}
\newcommand{\cfas}{\mathit{PCFA}}
\newcommand{\meths}{\mathit{M}}
\newcommand{\meth}{\mathit{m}}
\newcommand{\stmts}{\mathit{\Sigma}}
\newcommand{\states}{\mathit{S}}
\newcommand{\initst}{\mathit{I}}
\newcommand{\transrel}{\delta}
\newcommand{\finalst}{\mathit{F}}
\newcommand{\s}{\mathit{s}}
\newcommand{\plocs}{\mathit{Loc}}
\newcommand{\ploc}{\mathit{l}}
\newcommand{\pred}{\varphi}
\newcommand{\main}{\mathit{main}}
\newcommand{\ccubes}{\textsf{CompleteCubes}}

\newcommand{\satloc}[2]{#1 {\shortdownarrow} {#2}}
\newcommand{\locof}{\mathit{Loc}}
\newcommand{\predof}{\mathit{Pred}}
\newcommand{\inof}{\mathit{In}}
\newcommand{\outof}{\mathit{Out}}
\newcommand{\initsof}{\mathit{Entry}}
\newcommand{\finsof}{\mathit{Exit}}
\newcommand{\transof}{\mathsf{Trans}}

\newcommand{\terminals}{\mathit{T}}
\newcommand{\nonterms}{\mathit{N}}
\newcommand{\rules}{\mathit{R}}
\newcommand{\startsym}{\mathit{S}}

\newcommand{\locrule}{\mathcal{L}}
\newcommand{\methrule}{\mathcal{M}}

\newcommand{\wildcardsof}{\mathcal{W}}
\newcommand{\typeor}{\Gamma}
\newcommand{\terminalsof}{\mathcal{T}}
\newcommand{\match}{\mathit{InstTerm}}
\newcommand{\getguard}{\mathit{guard}}
\newcommand{\inst}{\mathit{Inst}}
\newcommand{\instspec}{\hat{\mathcal{G}}}

\newcommand{\traceproj}[2]{#1 {\shortdownarrow} {#2}}
\newcommand{\project}{\emph{TraceToWord}}

\begin{abstract}
Several real-world libraries (e.g., reentrant locks, GUI frameworks, serialization libraries) require their clients to use the provided API in a manner that conforms to a context-free specification.  Motivated by this observation,
this paper describes a new technique for verifying the correct usage
of context-free API protocols.  
The key idea underlying our technique is to over-approximate the program's  
feasible API call sequences using a context-free grammar (CFG) and then check language inclusion between  this grammar and the specification. 
However, since this inclusion check may fail
due to imprecision in the program's CFG abstraction, we propose a
novel refinement technique to progressively improve the CFG. In
particular, our method obtains counterexamples from CFG inclusion queries and 
uses them to introduce new non-terminals and productions to the
grammar while still over-approximating the program's relevant
behavior. 

We have implemented the proposed  algorithm in a tool called
\toolname\ and evaluate it on {10} popular Java applications that use at least one API with a context-free specification. Our evaluation shows that \toolname\ is able to verify  correct usage of the API in clients that use it correctly and produces counterexamples for those that do not. We also compare our method against three relevant baselines and demonstrate that \toolname\ enables  verification of safety properties that are beyond the reach of existing tools.

\end{abstract}

\maketitle

\section{Introduction}\label{sec:intro}

Over the last decade, there has been a flurry of research activity on checking the correct usage of APIs~\cite{typestate-fink,bierhoff2009,aldrich2009,joshi2008,arzt2015,pradel2012,lam2004,bierhoff2007}. Despite significant advances in this area, almost all  existing verification techniques focus on \emph{typestate analysis}~\cite{strom1986typestate}, which requires the API protocol to be expressible as a \emph{regular language}. In reality, however, several APIs have context-free --rather than regular-- specifications.  For instance, almost all reentrant lock APIs require   calls to \code{lock} to be balanced by a corresponding call to \code{unlock}.
Similarly, many  APIs provide functionality for saving and restoring internal state, and it is an error to call  \code{restore}  more times than the corresponding \code{save} function. As a final example, in APIs for structured document formats (e.g., JSON), the usage of the library needs to conform  to the underlying context-free document specification. All of these examples are instances of context-free API protocols, and incorrect usage of such APIs typically results in  run-time exceptions or resource leaks.

Motivated by this observation, prior research has developed
\emph{run-time} techniques for specifying context-free properties and
monitoring them during program
execution~\cite{pql,hawk,meredith2010efficient,jin2012javamop}. However,
there has been very little (if any) work on \emph{statically}
verifying conformance between a program and a context-free API
protocol. In this paper, we present a new verification technique that
addresses this problem.  In particular, given a specification
expressed as a \emph{parameterized} context-free grammar (CFG)
$\cfgspec$ and a program $\program$ using that API, our method
automatically checks whether or not $\program$ conforms to protocol
$\cfgspec$. However, solving this problem introduces two key technical
challenges that motivate the novel components of our solution: First,
we need to prove that the program satisfies the API protocol for all,
potentially infinite, relevant objects created by the input
program. To address this challenge, we propose a novel program
instrumentation that transforms the input program so it uses the same
vocabulary as $\cfgspec$ and ensures that if the transformed program
conforms to the API protocol so does the original. Second, because
such APIs are often used in recursive procedures, it is important to
reason precisely both about inter-procedural control flow as well as
feasible API call sequences. Since \emph{both} of these properties,
namely matching call-and-return structure as well as the target API
protocol, are context-free, standard program analysis techniques, such
as CFL reachability~\cite{reps1995precise} or visibly pushdown
automata~\cite{alur2004visibly}, do not address our problem. Instead,
we reduce the context-free protocol verification problem to that of
checking inclusion between two CFGs\footnote{While inclusion checking
  between two CFGs is undecidable, many problems of practical interest
  can be solved by existing tools.}  and propose a
\emph{counterexample-guided abstraction refinement (CEGAR)} approach
for checking whether \emph{every} feasible execution of the program
belongs to the grammar defined by the protocol~(see
Figure~\ref{fig:cegar}).

The heart of our technique consists of a novel abstraction mechanism that represents the input program $\program$ as a context-free grammar $\cfgprog$, whose language $\lang(\cfgprog)$  defines $\program$'s feasible API call sequences. The productions $R$ of this grammar model relevant API calls  as well as intra- and inter-procedural control-flow. For instance, a production  such as $L_1 \rightarrow f L_2$ indicates that API method $f$ is called at program location $L_1$ and  that $L_2$ is a successor of $L_1$. In addition,  productions precisely model inter-procedural control flow  and enforce that every call statement must be matched by its corresponding return.

While the CFG extracted from the program is 
 always \emph{sound}, it may
 be \emph{imprecise} due to data  dependencies that are not captured by the current CFG productions. That
is, if an API call sequence $w$ is feasible in some program execution,
then $w$ is guaranteed to be in $\lang(\cfgprog)$; however, the
membership of $w$ in $\lang(\cfgprog)$ does not  guarantee
the feasibility of the corresponding API call sequence. Our verification approach deals with
this potential imprecision by using a novel abstraction refinement technique that iteratively improves
the program's CFG abstraction until the property can be either refuted or verified.

In more detail, our approach works as follows: First, given  context-free protocol $\cfgspec$ and  current program abstraction $\cfgprog$, we query whether there exists a word $w$ that is in $\cfgprog$ but not $\cfgspec$. If not, then the algorithm terminates with a proof of correctness.  Otherwise, our method reconstructs the corresponding program path $\pi$ associated with $w$ and checks its feasibility using an SMT solver. If  $\pi$ is indeed feasible, then so is the call sequence $w$, and our method terminates with a real counterexample. Otherwise, $w$ must be a \emph{spurious} counterexample caused by imprecision in the CFG. In this case, our algorithm refines the CFG abstraction by computing a proof of infeasibility of $\pi$ in the form of a \emph{nested sequence interpolant}~\cite{nested-interp}. Similar to many other software model checkers, the interpolant drives the refinement process inside the CEGAR loop; however, \emph{unlike}  other  techniques, our approach uses the interpolant to figure out which new non-terminals and productions to add to the grammar. 
In essence, these new non-terminals correspond to ``clones'' of existing program locations and allow us to selectively introduce both intra- and inter-procedural path-sensitivity to our CFG-based program abstraction.

\begin{figure}[!t]
\begin{center}
\includegraphics[scale=0.4]{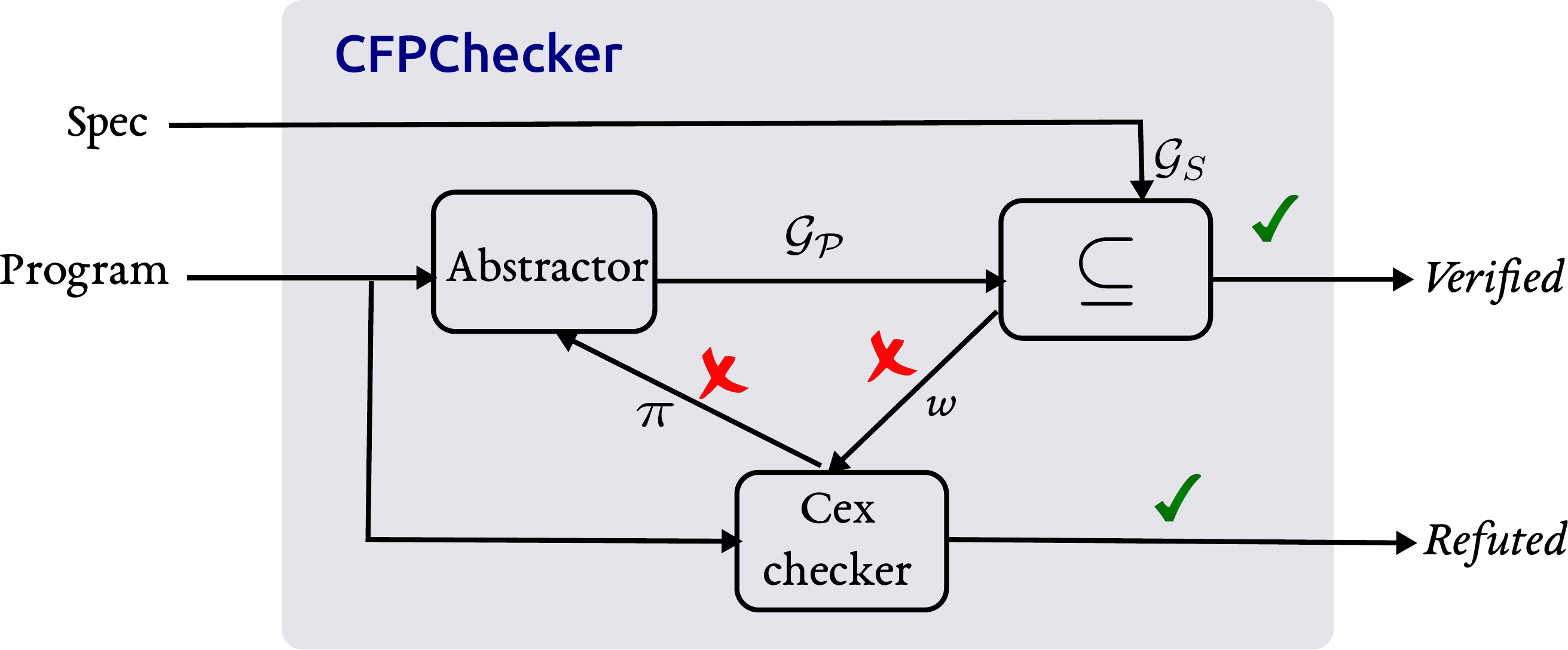}
\caption{Overview of verification approach}\label{fig:cegar}
\end{center}
\end{figure}

We have implemented our proposed verification algorithm in a prototype
called \toolname\ for Java programs and evaluated it on 10 widely-used clients of 5 popular APIs with context-free specifications. Our evaluation demonstrates that \toolname\  is able to verify  correct usage of the API in clients that use it correctly and produces counterexamples for those that do not. We also implement and evaluate three  baselines that reduce the problem to assertion checking and then discharge these assertions using existing tools. Our experiments demonstrate  that \toolname\  is practical enough to successfully analyze real-world Java applications and that it enables  the verification of safety properties that are beyond the reach of existing tools. 

In summary, this paper makes the following contributions:

\begin{itemize}[leftmargin=*]
\item We propose a novel CEGAR-based verification algorithm for
 verifying correct usage of context-free API protocols.
\item We describe a new CFG-based program abstraction that over-approximates
feasible API call sequences.
\item We propose a new refinement method that \emph{selectively and
  modularly} adds path-sensitivity to the program abstraction by
  introducing new non-terminals and productions.
\item We evaluate our method on widely-used clients of popular Java APIs with context-free specifications and demonstrate that our proposed approach is applicable to real-world software verification tasks. 
\end{itemize} 

\section{Motivating Example}\label{sec:motiv}

In this section, we give a high level overview of our approach through a  simple motivating example. Consider a re-entrant lock API that requires every call to \code{lock} on some object \code{o} to be matched by the same number of calls to \code{unlock} on \code{o}. This property is context-free but not regular because it requires "counting" the number of calls to \code{lock} and \code{unlock}. In our framework, the user can specify this property using the following parametrized context-free grammar $\cfgspec$:
\begin{equation}\label{eq:spec}
S \ \rightarrow \ \epsilon \ \ | \ \ \$1.lock() \ S \ \$1.unlock() \ S
\end{equation}
This CFG is parametrized in the sense that it uses a "wildcard"  symbol $\$1$ that matches any object of type \code{Lock}. Thus, the specification requires that, for \emph{every} object $o$, each call \code{o.lock()} must be matched by a  call to \code{o.unlock()}.

\begin{figure}
\begin{subfigure}{.5\textwidth}
\centering
\begin{varwidth}{.5\textwidth}
\begin{Verbatim}[commandchars=\\\{\},numbers=left,firstnumber=1,stepnumber=1]
\PY{k+kt}{void} \PY{n+nf}{foo}\PY{o}{(}\PY{n}{Lock} \PY{n}{l}\PY{o}{)}\PY{o}{\PYZob{}}
  \PY{k}{if} \PY{o}{(}\PY{o}{*}\PY{o}{)} \PY{o}{\PYZob{}}
    \PY{n}{acquire}\PY{o}{(}\PY{n}{l}\PY{o}{)}\PY{o}{;}
    \PY{n}{foo}\PY{o}{(}\PY{n}{l}\PY{o}{)}\PY{o}{;}
    \PY{n}{release}\PY{o}{(}\PY{n}{l}\PY{o}{)}\PY{o}{;}
  \PY{o}{\PYZcb{}}
\PY{o}{\PYZcb{}}

\PY{k+kt}{void} \PY{n+nf}{acquire}\PY{o}{(}\PY{n}{Lock} \PY{n}{l1}\PY{o}{)}\PY{o}{\PYZob{}}
  \PY{n}{l1}\PY{o}{.}\PY{n+na}{lock}\PY{o}{(}\PY{o}{)}\PY{o}{;}
\PY{o}{\PYZcb{}}

\PY{k+kt}{void} \PY{n+nf}{release}\PY{o}{(}\PY{n}{Lock} \PY{n}{l2}\PY{o}{)}\PY{o}{\PYZob{}}
  \PY{n}{l2}\PY{o}{.}\PY{n+na}{unlock}\PY{o}{(}\PY{o}{)}\PY{o}{;}
\PY{o}{\PYZcb{}}
\end{Verbatim}
\end{varwidth}
\caption{Original Program}
\label{fig:origprog}
\end{subfigure}%
\begin{subfigure}{.5\textwidth}
\centering
\begin{varwidth}{.5\textwidth}
\begin{Verbatim}[commandchars=\\\{\},numbers=left,firstnumber=1,stepnumber=1]
\PY{k+kd}{static} \PY{n}{Lock} \PY{n}{\PYZdl{}1} \PY{o}{=} \PY{o}{*}\PY{o}{;}

\PY{k+kt}{void} \PY{n+nf}{foo}\PY{o}{(}\PY{n}{Lock} \PY{n}{l}\PY{o}{)}\PY{o}{\PYZob{}}
  \PY{k}{if} \PY{o}{(}\PY{o}{*}\PY{o}{)} \PY{o}{\PYZob{}}
    \PY{n}{acquire}\PY{o}{(}\PY{n}{l}\PY{o}{)}\PY{o}{;}
    \PY{n}{foo}\PY{o}{(}\PY{n}{l}\PY{o}{)}\PY{o}{;}
    \PY{n}{release}\PY{o}{(}\PY{n}{l}\PY{o}{)}\PY{o}{;}
  \PY{o}{\PYZcb{}}
\PY{o}{\PYZcb{}}

\PY{k+kt}{void} \PY{n+nf}{acquire}\PY{o}{(}\PY{n}{Lock} \PY{n}{l1}\PY{o}{)}\PY{o}{\PYZob{}}
  \PY{k}{if} \PY{o}{(}\PY{n}{l1} \PY{o}{=}\PY{o}{=} \PY{n}{\PYZdl{}1}\PY{o}{)}
    \PY{n}{\PYZdl{}1}\PY{o}{.}\PY{n+na}{lock}\PY{o}{(}\PY{o}{)}\PY{o}{;}
\PY{o}{\PYZcb{}}

\PY{k+kt}{void} \PY{n+nf}{release}\PY{o}{(}\PY{n}{Lock} \PY{n}{l2}\PY{o}{)}\PY{o}{\PYZob{}}
  \PY{k}{if} \PY{o}{(}\PY{n}{l2} \PY{o}{=}\PY{o}{=} \PY{n}{\PYZdl{}1}\PY{o}{)}
    \PY{n}{\PYZdl{}1}\PY{o}{.}\PY{n+na}{unlock}\PY{o}{(}\PY{o}{)}\PY{o}{;}
\PY{o}{\PYZcb{}}
\end{Verbatim}

\end{varwidth}
\caption{Transformed Program}
\label{fig:transprog}
\end{subfigure}
\caption{Motivating Example}
\label{fig:motiv}
\end{figure}

\begin{figure}
  \centering
  \begin{subfigure}{\textwidth}
    \centering
    \includegraphics[width=\textwidth]{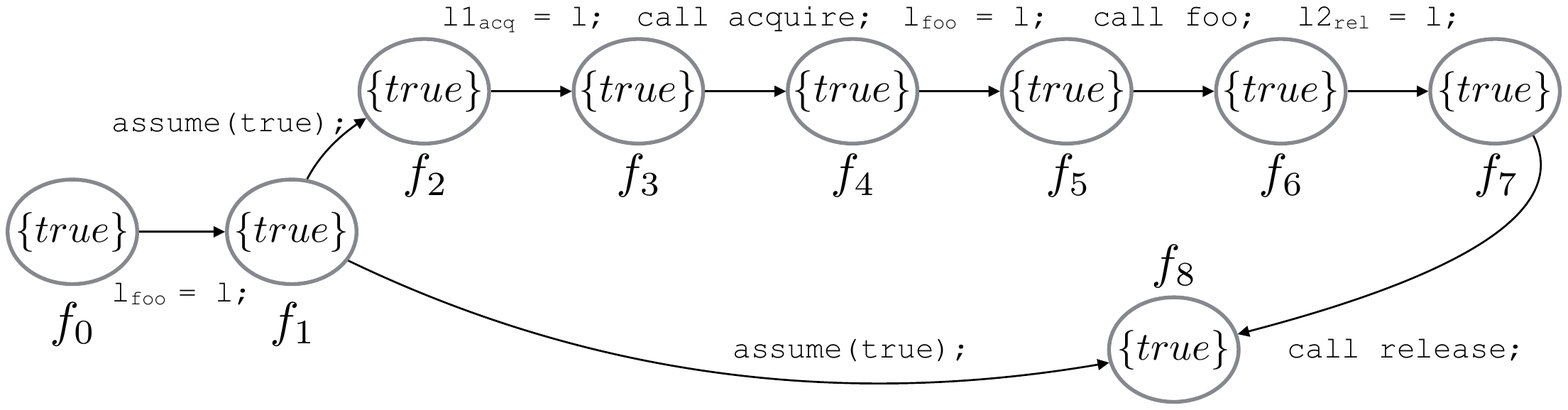}
    \caption{Initial $\mathit{PCFA}$ for foo.}
  \end{subfigure}
  \par\bigskip
  \begin{subfigure}{\textwidth}
    \centering
    \begin{subfigure}{.5\textwidth}
      \includegraphics[width=\textwidth]{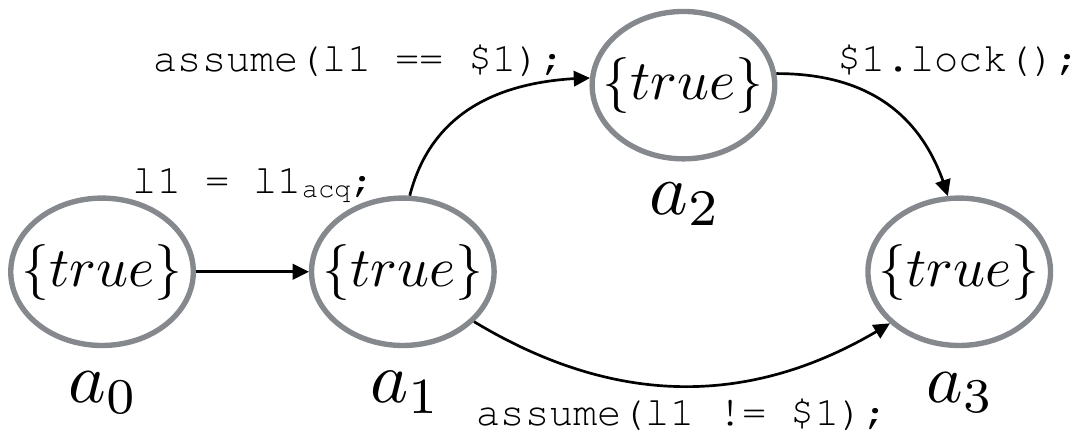}
      \caption{Initial $\mathit{PCFA}$ for acquire.}\label{fig:init-acquire}
    \end{subfigure}%
    \begin{subfigure}{.5\textwidth}
      \includegraphics[width=\textwidth]{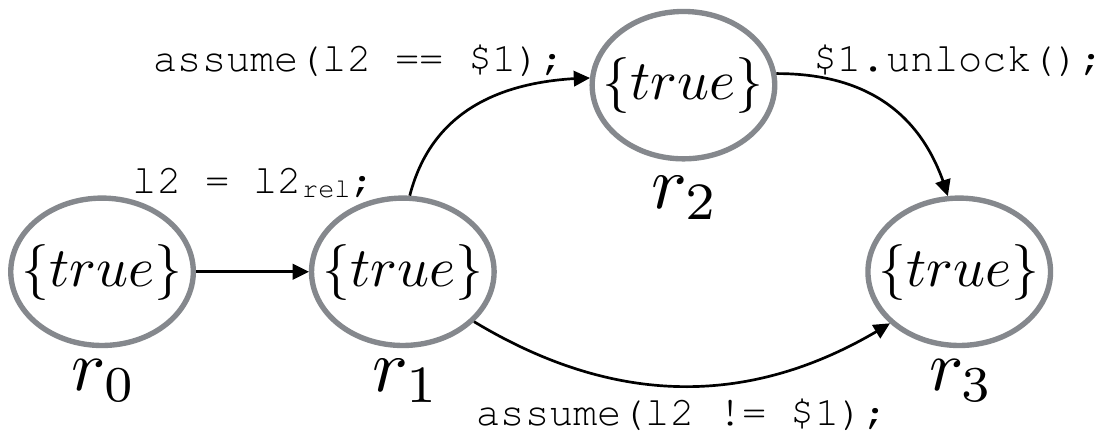}
      \caption{Initial $\mathit{PCFA}$ for release.}
    \end{subfigure}
  \end{subfigure}%
  \caption{Initial $\mathit{PCFAs}$ for input program. The PCFAs contain additional formal-to-actual assignments.}
  \label{fig:pcfaorig}
\end{figure}

\begin{figure}[h]
  \centering
  \begin{subfigure}{.3\textwidth}
    \centering
    \begin{grammar}
      \let\syntleft\relax
      \let\syntright\relax
      <$\mathit{Foo}$> $\rightarrow$ $\mathcal{F}_{0}$

      <$\mathcal{F}_{0}$> $\rightarrow$ $\mathcal{F}_1$

      <$\mathcal{F}_1$> $\rightarrow$ $\mathcal{F}_{2}$ | $\mathcal{F}_8$

      <$\mathcal{F}_{2}$> $\rightarrow$ $\mathcal{F}_{3}$

      <$\mathcal{F}_{3}$> $\rightarrow$ $\mathit{Acquire}$ $\mathcal{F}_{4}$

      <$\mathcal{F}_{4}$> $\rightarrow$ $\mathcal{F}_{5}$

      <$\mathcal{F}_{5}$> $\rightarrow$ $\mathit{Foo}$ $\mathcal{F}_{6}$

      <$\mathcal{F}_{6}$> $\rightarrow$ $\mathcal{F}_{7}$

      <$\mathcal{F}_{7}$> $\rightarrow$ $\mathit{Release}$ $\mathcal{F}_{8}$

      <$\mathcal{F}_{8}$> $\rightarrow$ $\epsilon$
    \end{grammar}
  \end{subfigure}%
  \begin{subfigure}{.3\textwidth}
    \centering
    \begin{grammar}
      \let\syntleft\relax
      \let\syntright\relax
      <$\mathit{Acquire}$> $\rightarrow$ $\mathcal{A}_0$

      <$\mathcal{A}_0$> $\rightarrow$ $\mathcal{A}_1$

      <$\mathcal{A}_1$> $\rightarrow$ $\mathcal{A}_2$ | $\mathcal{A}_3$

      <$\mathcal{A}_2$> $\rightarrow$ \code{\$1.lock()} $\mathcal{A}_3$

      <$\mathcal{A}_3$> $\rightarrow$ $\epsilon$ \\
    \end{grammar}
  \end{subfigure}%
  \begin{subfigure}{.3\textwidth}
    \centering
    \begin{grammar}
      \let\syntleft\relax
      \let\syntright\relax
      <$\mathit{Release}$> $\rightarrow$ $\mathcal{R}_0$

      <$\mathcal{R}_0$> $\rightarrow$ $\mathcal{R}_1$

      <$\mathcal{R}_1$> $\rightarrow$ $\mathcal{R}_2$ | $\mathcal{R}_3$

      <$\mathcal{R}_2$> $\rightarrow$ \code{\$1.unlock()} $\mathcal{R}_3$

      <$\mathcal{R}_3$> $\rightarrow$ $\epsilon$
    \end{grammar}
  \end{subfigure}
  \par\bigskip
  \hrulefill
  \caption{Initial context-free grammar.}
  \label{fig:initcfg}
\end{figure}

To illustrate  our technique,  Figure~\ref{fig:motiv}(a) shows a very simple client of this \code{Lock} API. Here, \code{foo} is a recursive procedure that calls \code{l.lock} before every recursive call to \code{foo} and calls \code{l.unlock} afterwards. Since the receiver object is the same before and after the call, the specification from Equation~\ref{eq:spec} is satisfied. In the remainder of this section, we explain how our technique  verifies correct usage of the \code{Lock} API in this example.

The first step in our technique is to automatically instrument the program from Figure~\ref{fig:motiv}(a) so that   API calls in the program involve the same wildcard symbol $\$1$ used in the specification. The instrumented version is shown in Figure~\ref{fig:motiv}(b), which uses a new global variable called \code{\$1} (i.e., the wildcard symbol in the grammar) and 
replaces every call to \code{x.lock()} (resp. \code{x.unlock()}) with the conditional  invocation \code{if(x = \$1) \$1.lock()} (resp. \code{if(x = \$1) \$1.unlock()}).   Intuitively, the goal of this instrumentation is two-fold: First, it  ensures that the CFG abstraction of the program uses the same "vocabulary" (i.e., terminals) as the specification CFG. Second, it deals with challenges that arise from potential aliasing between pointers.

In the next step,  our method extracts a
context-free grammar that over-approximates the relevant API call
behavior of the program. Towards this goal, we represent the program
as a mapping from each function to a \emph{predicated control-flow
  automaton} (\emph{PCFA}) that will be iteratively refined as the
algorithm progresses. {At a high level, a {PCFA} 
captures control-flow within a method while also maintaining a mapping from
  program locations to a set of logical predicates.}  For example,
Figure~\ref{fig:pcfaorig} shows the initial PCFAs for
Figure~\ref{fig:motiv}(b): here, nodes correspond to program locations, and edges correspond to 
transitions.
Observe that the PCFAs from Figure~\ref{fig:pcfaorig} contain a \emph{single} node for each program location;  hence, these PCFAs look like  standard \emph{control flow automata (CFA)} used in software model checking~\cite{reps1995precise,nested-interp}.
 However,  the PCFA representation diverges from a standard CFA   as the algorithm proceeds. In particular, the PCFA can  contain multiple nodes for the {same} program location  and allows our method to selectively introduce path-sensitivity to the program abstraction.

\begin{figure}
  \begin{subfigure}{.5\textwidth}
    \includegraphics[width=.9\textwidth]{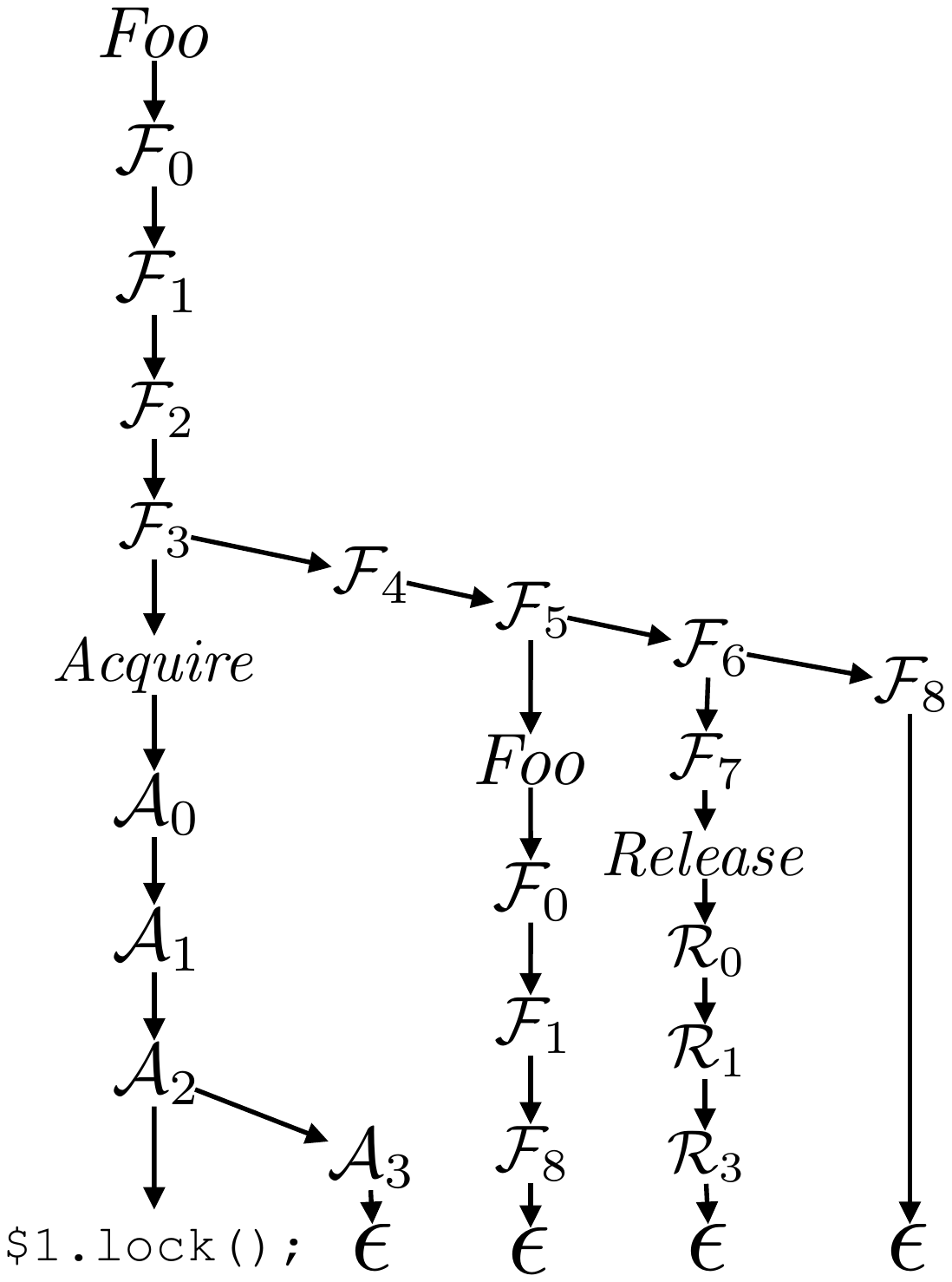}
    \caption{Parse Tree.}\label{fig:cex-tree}
  \end{subfigure}%
  \begin{subfigure}{.5\textwidth}
    \includegraphics[width=.9\textwidth]{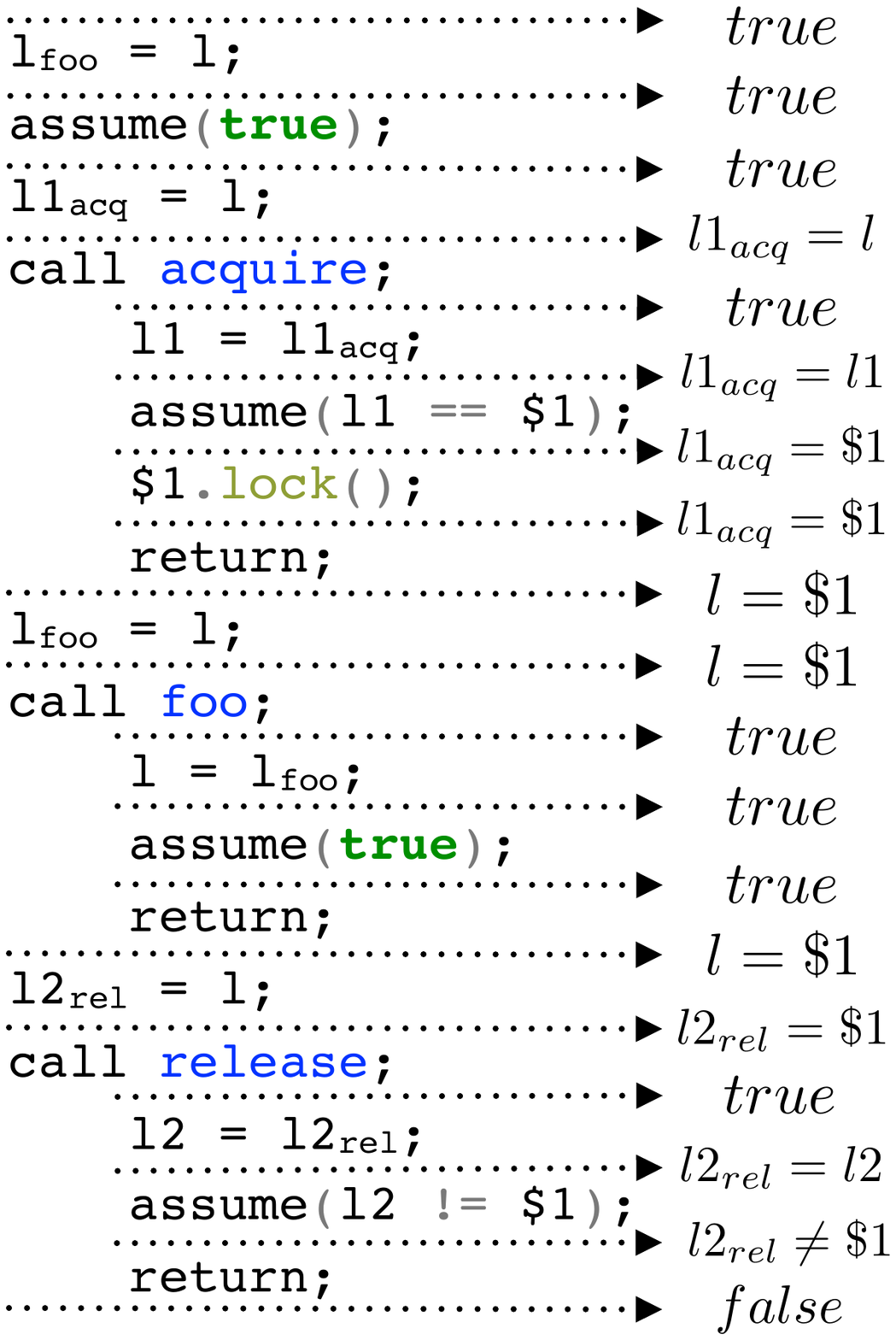}
    \caption{Trace \& Interpolants.}\label{fig:cex-trace}
  \end{subfigure}
  \caption{Tree and Trace for Counterexample \code{\$1.lock()}.}\label{fig:cex}
\end{figure}

Given these initial PCFAs, our method programmatically
extracts  from them a context-free grammar over-approximating the program's feasible API call sequences.
 In particular, Figure~\ref{fig:initcfg} shows 
the initial CFG abstraction for our  example.
Here, non-terminals  (e.g., $\mathcal{F}_1,
\mathcal{A}_2$)  correspond to nodes (e.g., $f_1, a_2$) in the PCFAs, and  terminals
(e.g., \code{\$1.lock()}) denote API calls. Additionally, there
is one non-terminal symbol (e.g., \emph{Foo, Acquire}) for each
method. The productions in the CFG are obtained
directly from the PCFA by ignoring all statements that are not
function calls: For example, the production $\mathcal{A}_2
\ \rightarrow \ \code{\$1.lock()} \ \mathcal{A}_3$ comes from the PCFA
edge from $a_2$ to $a_3$.
In addition, the CFG productions  faithfully and precisely 
model
inter-procedural control flow. For instance, the production $\mathcal{F}_3
\ \rightarrow \ Acquire \ \mathcal{F}_4$  models the call from $Foo$ to $Acquire$ and 
  $\mathcal{A}_3\ \rightarrow \ \epsilon$ models its corresponding return.

\begin{figure}
  \centering
  \begin{subfigure}{\textwidth}
    \centering
    \includegraphics[width=\textwidth]{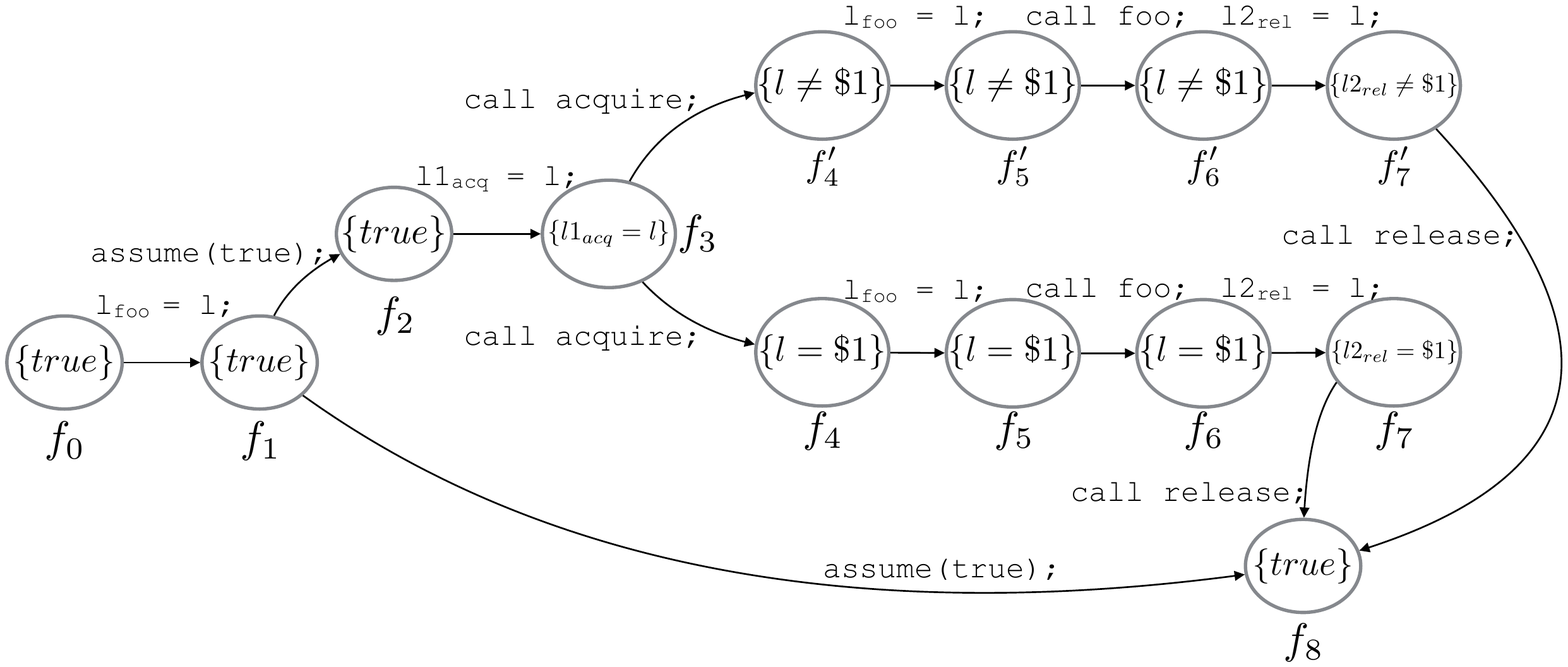}
    \caption{Refined $\mathit{PCFA}$ for foo.}
  \end{subfigure}
  \par\bigskip
  \begin{subfigure}{\textwidth}
    \centering
    \begin{subfigure}{.5\textwidth}
      \includegraphics[width=\textwidth]{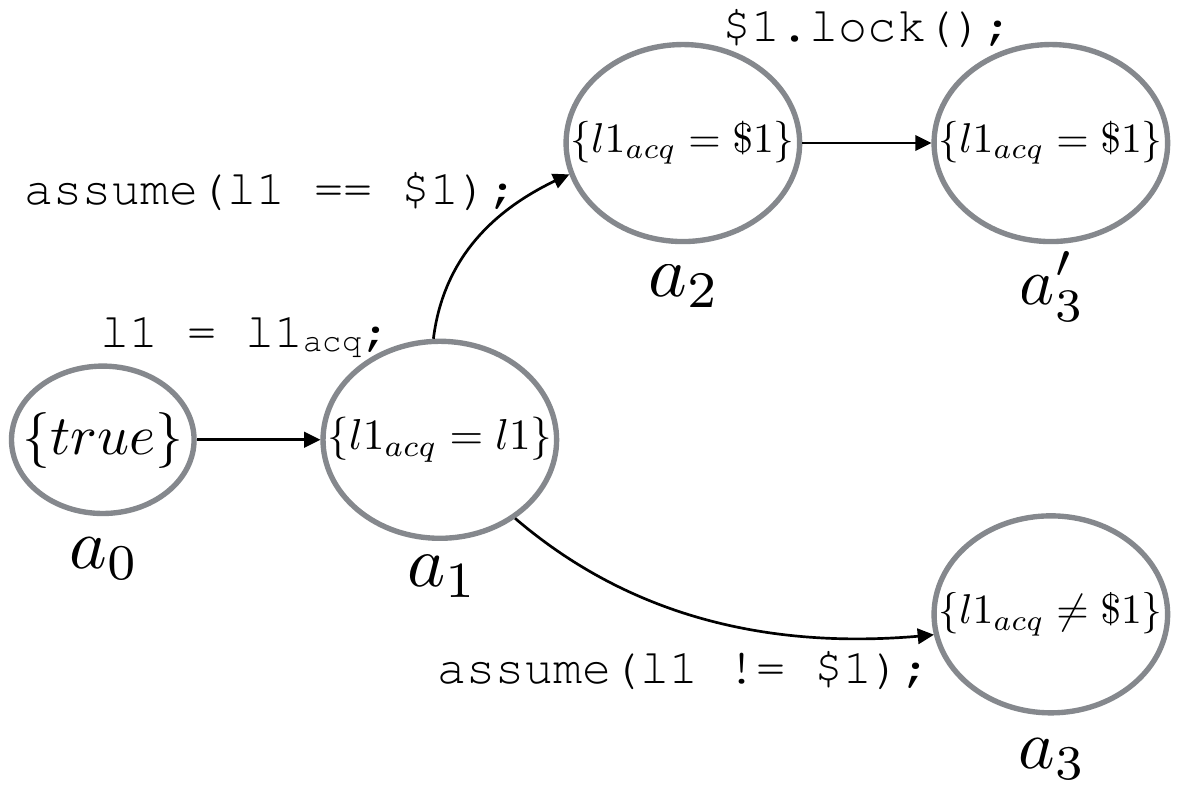}
      \caption{Refined $\mathit{PCFA}$ for acquire.}\label{fig:acq-ref}
    \end{subfigure}%
    \begin{subfigure}{.5\textwidth}
      \includegraphics[width=\textwidth]{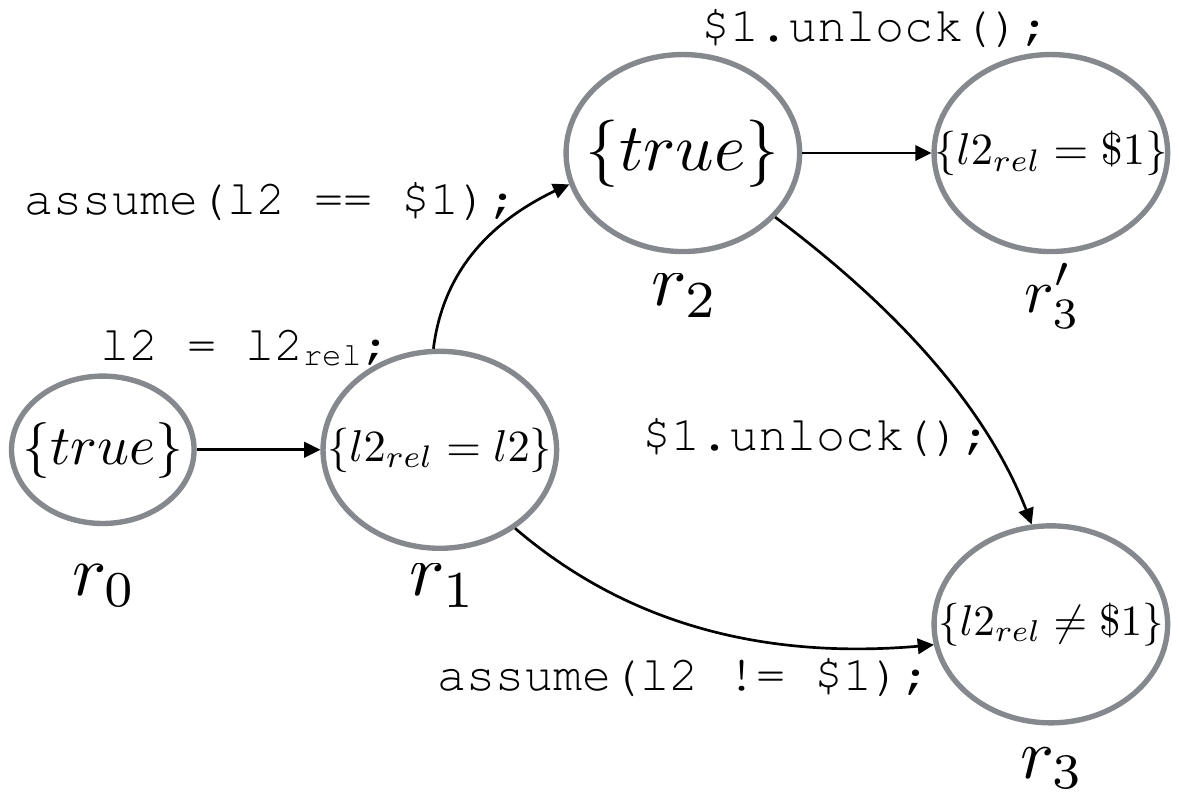}
      \caption{Refined $\mathit{PCFA}$ for release.}\label{fig:rel-ref}
    \end{subfigure}
  \end{subfigure}%
  \caption{Refined $\mathit{PCFAs}$ for input program.}
  \label{fig:pcfaref}
\end{figure}

Next,  our method
checks inclusion between the grammar $\cfgprog$ extracted from the program and API protocol
$\cfgspec$. While this problem is, in general, undecidable, we have found the resulting CFG inclusion checking problems to be amenable to automation by modern tools.  Going back to our running example, the language of $\cfgprog$ from
Figure~\ref{fig:initcfg} is \emph{not} a subset of the language of
$\cfgspec$ --- for example, the word \code{\$1.lock()} can be generated using
$\cfgprog$ but not $\cfgspec$. {This means that either the
  program actually misuses the API or the current abstraction is imprecise. In
  order to determine which one, our method
  maps the word \code{\$1.lock()} to an execution path of the program. Towards this goal, we
  first obtain the parse tree from   Figure~\ref{fig:cex-tree}
  that shows how  \code{\$1.lock()}}  can be derived from $\cfgprog$. This derivation corresponds precisely to the program path,  shown in Figure~\ref{fig:cex-trace}. {Furthermore,
  observe that this path goes through the ``then'' branch of the
  \code{if} statement in method \code{acquire} and the ``else'' branch
  in method \code{release}. However, this path is clearly infeasible, so 
  we need to refine  $\cfgprog$ to eliminate
the spurious derivation.

Our method refines the program's CFG abstraction by  adding new non-terminals and productions to the grammar. Towards this goal, we first refine the PCFA abstraction by selectively cloning some program locations, with the goal of introducing path-sensitivity where needed. The cloning of PCFA nodes is driven by an interpolation engine  that computes a \emph{sequence of nested interpolants}~\cite{nested-interp}. In particular, the right-hand side of Figure~\ref{fig:cex-trace} shows the interpolants computed for each program location for our running example. Intuitively, "tracking" these predicates at the corresponding program location would allow us to remove the spurious  trace. Thus, in the next iteration, we generate the new PCFAs shown in Figure~\ref{fig:pcfaref} by cloning all PCFA nodes that correspond to program locations in the counterexample. Observe that the refined PCFAs  contain multiple nodes (e.g., $f_4, f_4'$) for the same program location, and the predicates in the PCFA correspond to those that appear in the interpolant. For instance, even though  nodes $r_3, r_3'$ both represent the same program location, one is annotated with predicate $l2_{rel} \neq \$1$, whereas $r_3'$ is annotated with  $l2_{rel} = \$1$. Furthermore, the refined PCFA contains an edge between two nodes iff the semantics of the statement labeling that edge are consistent with the annotations of the source and target nodes. For instance, there is an edge from node $a_1$ to $a_3$ but not from $a_1$ to $a_3'$ because the predicates $l1_{acq} = l1$, $l1_{acq} = \$1$ labeling $a_1$ and $a_3'$  are inconsistent with the statement \code{assume(l1 != \$1)}.

\begin{figure}
  \centering
  \begin{subfigure}{.3\textwidth}
    \centering
    \begin{grammar}
      \let\syntleft\relax
      \let\syntright\relax
      <$\mathit{Foo}$> $\rightarrow$ $\mathcal{F}_{0}$

      <$\mathcal{F}_{0}$> $\rightarrow$ $\mathcal{F}_1$

      <$\mathcal{F}_1$> $\rightarrow$ $\mathcal{F}_{2}$ | $\mathcal{F}_{8}$

      <$\mathcal{F}_{2}$> $\rightarrow$ $\mathcal{F}_3$

      <$\mathcal{F}_{3}$> $\rightarrow$ $\mathit{Acquire}_{\phi_{1}}$ $\mathcal{F}_{4}$
      \alt $\mathit{Acquire}_{\phi_{2}}$ $\mathcal{F}_{4}'$

      <$\mathcal{F}_{4}$> $\rightarrow$ $\mathcal{F}_5$

      <$\mathcal{F}_{4}'$> $\rightarrow$ $\mathcal{F}_5'$

      <$\mathcal{F}_{5}$> $\rightarrow$ $\mathit{Foo}$ $\mathcal{F}_{6}$

      <$\mathcal{F}_{5}'$> $\rightarrow$ $\mathit{Foo}$ $\mathcal{F}_{6}'$

      <$\mathcal{F}_{6}$> $\rightarrow$ $\mathcal{F}_7$

      <$\mathcal{F}_{6}'$> $\rightarrow$ $\mathcal{F}_7'$

      <$\mathcal{F}_{7}$> $\rightarrow$ $\mathit{Release}_{\phi_3}$ $\mathcal{F}_{8}$

      <$\mathcal{F}_{7}'$> $\rightarrow$ $\mathit{Rlease}_{\phi_4}$ $\mathcal{F}_{8}$

      <$\mathcal{F}_{8}$> $\rightarrow$ $\epsilon$
    \end{grammar}
  \end{subfigure}%
  \begin{subfigure}{.3\textwidth}
    \centering
    \begin{grammar}
      \let\syntleft\relax
      \let\syntright\relax
      <$\mathit{Acquire}_{\phi_1}$> $\rightarrow$ $\mathcal{A}_{0,\phi_1}$

      <$\mathcal{A}_{0,\phi_1}$> $\rightarrow$ $\mathcal{A}_{1,\phi_1}$

      <$\mathcal{A}_{1,\phi_1}$> $\rightarrow$ $\mathcal{A}_{2,\phi_1}$

      <$\mathcal{A}_{2,\phi_1}$> $\rightarrow$ \code{\$1.lock()} $\mathcal{A}_{3,\phi_1}'$

      <$\mathcal{A}_{3,\phi_1}'$> $\rightarrow$ $\epsilon$\\\\

      <$\mathit{Acquire}_{\phi_2}$> $\rightarrow$ $\mathcal{A}_{0,\phi_2}$

      <$\mathcal{A}_{0,\phi_2}$> $\rightarrow$ $\mathcal{A}_{1,\phi_2}$

      <$\mathcal{A}_{1,\phi_2}$> $\rightarrow$ $\mathcal{A}_{3,\phi_2}$

      <$\mathcal{A}_{3,\phi_2}$> $\rightarrow$ $\epsilon$
    \end{grammar}
  \end{subfigure}%
  \begin{subfigure}{.3\textwidth}
    \centering
    \begin{grammar}
      \let\syntleft\relax
      \let\syntright\relax
      <$\mathit{Release}_{\phi_3}$> $\rightarrow$ $\mathcal{R}_{0,\phi_3}$

      <$\mathcal{R}_{0,\phi_3}$> $\rightarrow$ $\mathcal{R}_{1,\phi_3}$

      <$\mathcal{R}_{1,\phi_3}$> $\rightarrow$ $\mathcal{R}_{2,\phi_3}$

      <$\mathcal{R}_{2,\phi_3}$> $\rightarrow$ \code{\$1.unlock()} $\mathcal{R}_{3,\phi_3}'$

      <$\mathcal{R}_{3,\phi_3}'$> $\rightarrow$ $\epsilon$\\\\
  
      <$\mathit{Release}_{\phi_4}$> $\rightarrow$ $\mathcal{R}_{0,\phi_4}$

      <$\mathcal{R}_{0,\phi_4}$> $\rightarrow$ $\mathcal{R}_{1,\phi_4}$

      <$\mathcal{R}_{1,\phi_4}$> $\rightarrow$ $\mathcal{R}_{2,\phi_4}$

      <$\mathcal{R}_{1,\phi_4}$> $\rightarrow$ $\mathcal{R}_{3,\phi_4}$

      <$\mathcal{R}_{2,\phi_4}$> $\rightarrow$ \code{\$1.unlock()} $\mathcal{R}_{3,\phi_4}$

      <$\mathcal{R}_{3,\phi_4}$> $\rightarrow$ $\epsilon$
    \end{grammar}
  \end{subfigure}
  \par\bigskip
  \hrulefill
  \caption{Refined CFG, where $\phi_1 = \{l1_{acq} = \$1\}$, $\phi_2 =
    \{l1_{acq} \neq \$1\}$, $\phi_3 = \{l2_{rel} = \$1\}$, and $\phi_4
    = \{l2_{rel} \neq \$1\}$.}
  \label{fig:refcfg}
\end{figure}

\begin{figure}
  \begin{subfigure}{.5\textwidth}
    \centering
    \includegraphics[width=.8\textwidth]{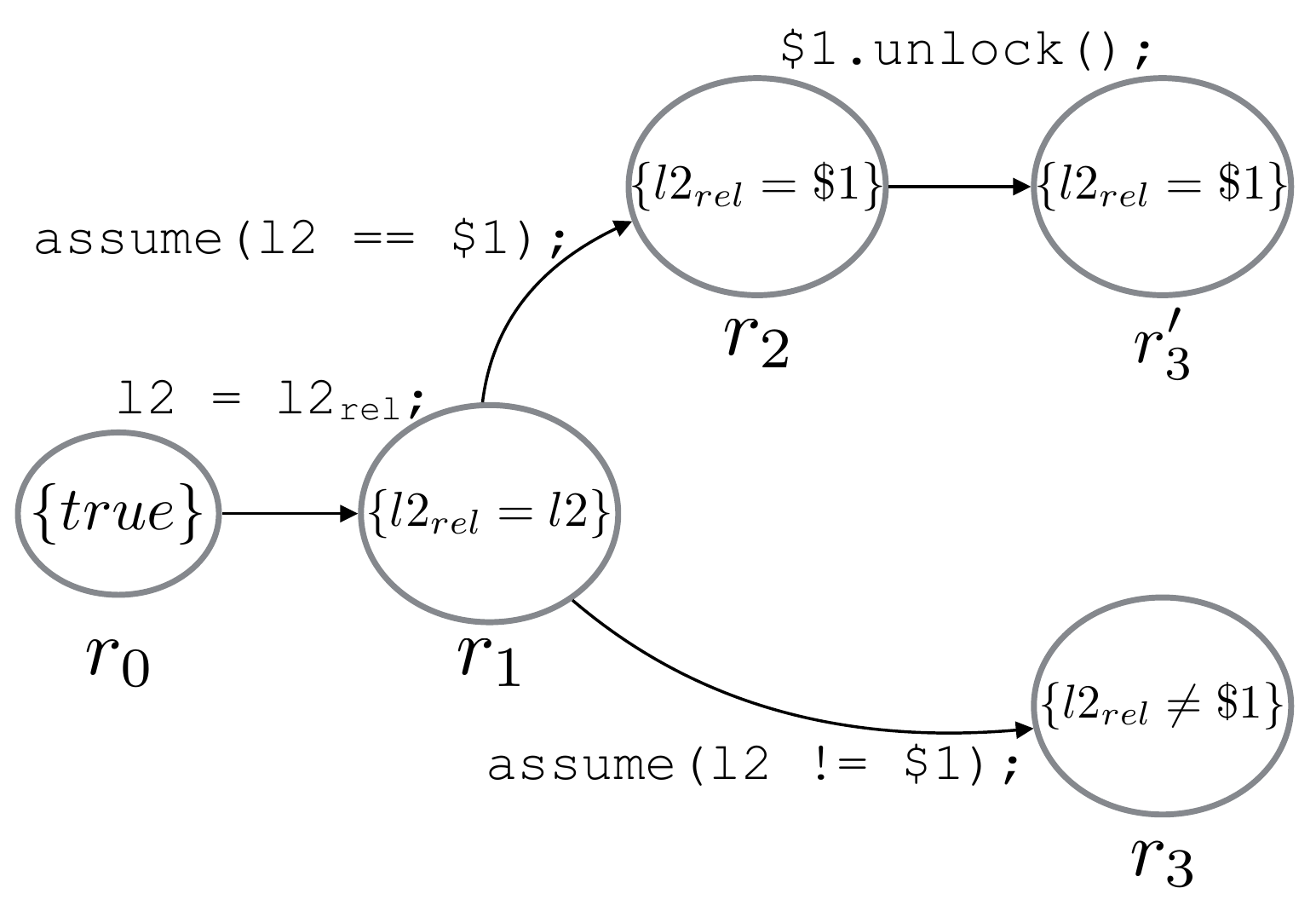}
    \caption{Second refined PCFA for method \code{release}.}\label{fig:rel-ref-pcfa2}
  \end{subfigure}%
  \begin{subfigure}{.5\textwidth}
    \centering
    \begin{grammar}
      \let\syntleft\relax
      \let\syntright\relax
      <$\ \ \ \ \ \ \ \ \ \ \ \ \ \mathit{Release}_{\phi_3}$> $\rightarrow$ $\mathcal{R}_{0,\phi_3}$

      <$\ \ \ \ \ \ \ \ \ \ \ \ \ \mathcal{R}_{0,\phi_3}$> $\rightarrow$ $\mathcal{R}_{1,\phi_3}$

      <$\ \ \ \ \ \ \ \ \ \ \ \ \ \mathcal{R}_{1,\phi_3}$> $\rightarrow$ $\mathcal{R}_{2,\phi_3}$

      <$\ \ \ \ \ \ \ \ \ \ \ \ \ \mathcal{R}_{2,\phi_3}$> $\rightarrow$ \code{\$1.unlock()} $\mathcal{R}_{3,\phi_3}'$

      <$\ \ \ \ \ \ \ \ \ \ \ \ \ \mathcal{R}_{3,\phi_3}'$> $\rightarrow$ $\epsilon$\\\\

      <$\ \ \ \ \ \ \ \ \ \ \ \ \ \mathit{Release}_{\phi_4}$> $\rightarrow$ $\mathcal{R}_{0,\phi_4}$

      <$\ \ \ \ \ \ \ \ \ \ \ \ \ \mathcal{R}_{0,\phi_4}$> $\rightarrow$ $\mathcal{R}_{1,\phi_4}$

      <$\ \ \ \ \ \ \ \ \ \ \ \ \ \mathcal{R}_{1,\phi_4}$> $\rightarrow$ $\mathcal{R}_{3,\phi_4}$

      <$\ \ \ \ \ \ \ \ \ \ \ \ \ \mathcal{R}_{3,\phi_4}$> $\rightarrow$ $\epsilon$
    \end{grammar}
    \caption{Refined grammar for $\mathit{Release}$.}\label{fig:rel-ref-cfg2}
  \end{subfigure}
  \caption{Refined PCFA and grammar for method \code{release} (second iteration).}\label{fig:rel-ref-it2}
\end{figure}

Given this new PCFA representation, our verification algorithm
extracts the refined grammar $\cfgprog'$  shown in Figure~\ref{fig:refcfg}.
As before, we construct the grammar
based on PCFA edges; however,
note that there are  two different sets of grammar rules for each of
the methods \code{acquire} and \code{release}. In general, for a given
function $f$, our technique introduces as many non-terminals for $f$ as there are
PCFA nodes for $f$'s exit location. This strategy allows our verification 
algorithm to  lazily perform "method cloning", thereby introducing \emph{inter-procedural} path-sensitivity where needed.
 For instance, observe that there are two
non-terminals ($\mathit{Acquire}_{\phi_1}$,
$\mathit{Acquire}_{\phi_2}$) representing \code{acquire} in
Figure~\ref{fig:refcfg}, and predicates $\phi_1, \phi_2$ correspond to
the predicates $l1_{acq} = \$1$, $l1_{acq} \neq \$1$ labeling nodes
$a_3$ and $a_3'$ in the PCFA from
Figure~\ref{fig:pcfaref}(b). Furthermore, observe that there are two
different sets of grammars for $\mathit{Acquire}_{\phi_1}$ and
$\mathit{Acquire}_{\phi_2}$, and each grammar is generated by looking
at the portion of the PCFA that is backwards reachable from the
corresponding exit node. For example, there is no production
$\mathcal{A}_{1, \phi_2} \rightarrow \mathcal{A}_{2, \phi_2}$ in
Figure~\ref{fig:refcfg} because node $a_2$ is not backwards reachable
from the exit node labeled with $\phi_2$ in
Figure~\ref{fig:pcfaref}(b).

In the second iteration, our algorithm again checks inclusion between
the two grammars, namely $\cfgprog'$ and $\cfgspec$. This time,
$\cfgprog'$ is still not contained in $\cfgspec$, and the new
counterexample is \code{\$1.unlock()}, whose derivation corresponds to a program path that
goes through the ``else'' branch in \code{acquire} and ``then'' branch
in \code{release}.  In this case, the culprit is the PCFA edge between
nodes $r_2$ and $r_3$ in method \code{release}
(Figure~\ref{fig:rel-ref}), which can again be eliminated by
computing nested interpolants and cloning node $r_2$.

In the next and final iteration, our algorithm can now prove that the
language defined by the program's CFG is indeed a subset of the
specification $\cfgspec$, and the algorithm terminates with a proof of
correctness. The final abstraction is identical to the one from previous
iteration except for method \code{release} whose final
PCFA and context-free grammar are shown in
Figure~\ref{fig:rel-ref-it2}.

\section{Problem Statement}\label{sec:probstmt}

In this section, we introduce context-free API protocols and formally define our problem in the context of a simple object-oriented programming language. 

\subsection{Input Language}\label{sec:in-lang}

\begin{figure}
\small
  \centering
    \begin{grammar}
      \let\syntleft\relax
      \let\syntright\relax
      <Class $\mathit{C}$> $::=$ \code{class $\mathit{C}$ \{ }$\mathit{fld}\text{*}$ $\mathit{m}\text{*}\code{ \}}$

      <Field $\mathit{fld}$> $::=$ $f : \uptau = e$ | $\code{static}\ f : \uptau = e$

      <Method $\mathit{m}$> $::=$ $\code{void } \mathit{m}(\vec{v})\code { \{}  s\text{*}; \code{\}}$

      <Stmt $s$> $::=$ $\mathit{skip}$ | $s_1;s_2$ | $v := e$ | $v.f := e$ | \code{assume($p$)} 
     | 
      $\code{if } (p) \code{ \{} s_1 \code{\} else \{} s_2 \code{\}}$ | $v := \code{new}\ C$
      \alt
       $\code{call}\ v.m(\vec{v})$ | $\code{api\_call}\ v.m(\vec{v})$

      <Expr $e$> $::=$ $v$ | $v.f$ | $c$ | $*$ | $e_1 \ominus e_2,$ $\ominus \in \{+,-,\times\}$
      
      <Pred $p$> $::=$ $e$ | $\neg p$ | $p_1\land p_2$ | $p_1 \lor p_2 \ | \ e_1 \oplus e_2$, \  $\oplus \in \{\textless, \textgreater, =\}$
    \end{grammar}
    \caption{Input Language.}
    \label{fig:src-lang}
\end{figure}

Figure~\ref{fig:src-lang} presents the programming language
used for our formalization. In this language, a class consists of 
a set of field declarations followed by a set of method definitions. Fields can be either
object-specific (declared as $f:\tau$) or \code{static}, meaning 
they are shared between all instances of the class.  Statements 
include standard constructs like assignment, load, store,
etc. We differentiate between two
kinds of call statements, namely \code{call} which is a call to a
regular method defined in the same program and \code{api_call} which
invokes a method defined by a third-party API. We assume that the source code of third-party libraries are not available for analysis; thus, we require any side effects of  API calls to be modeled using stub methods. In particular, we assume that each call to an API method \code{foo} in the original program has been replaced by a stub  \code{foo_stub} that invokes \code{foo} and captures its side effects via  assignment. \emph{Thus, in the remainder of the paper, we assume, without loss of generality, that API calls have no side effects on program state. }

For the purposes of this paper, a program state $\pstate$ is a mapping
from program variables ($\vars$) and field references ($\vars \times
\fields$) to an integer value. We use the notation $\langle s, \pstate
\rangle \Downarrow \pstate'$ to indicate that $\pstate'$ is the
resulting state after executing statement $s$ on program state
$\pstate$. Furthermore, we use $sp(s, P)$ to denote the strongest
postcondition of statement $s$ with respect to the first-order logic
formula $P$. A program trace, $\ptrace = \langle s_1, \pstate_1
\rangle, \langle s_2, \pstate_2 \rangle, ..., \langle s_n, \pstate_n
\rangle$, is a sequence of (statement, program state) pairs such that
$\langle s_i, \pstate_i \rangle \Downarrow
\pstate_{i+1}$.\footnote{{We assume that program traces are in SSA
    form. That is, each re-definition of a program variable is
    assigned a unique name within the trace.}}  Given a program
$\program$, we write $\emph{Traces}(\program)$ to denote the
(infinite) set of traces that can arise during executions of
$\program$.

\subsection{Context-Free API Protocols}\label{sec:spec}

We express API protocols using
a (parametrized) context-free grammar $\cfgspec = (\terminals, \nonterms,
\rules, \startsym)$ where each terminal $t \in T$ is of the form
``\code{api\_call $\$i_{1}$.m($\$i_{2}$, ..., $\$i_{p}$)}'', $n \in N$ is a non-terminal, $R$ is a set of productions,
and $S$ is the start symbol. Given 
grammar $\cfgspec$, we write $T_m$ to denote the subset of terminals 
involving a call to method $m$. As
mentioned in Section~\ref{sec:motiv}, each $\$i_{j}$ is a so-called
\emph{wildcard} that  can match any value of the appropriate type. To omit explicit type declarations, we assume
the existence of a typing oracle
$\typeor$ that
returns the type of a wildcard $w$, and, as standard, we use the notation
$\typeor \vdash w : \uptau$ to indicate that $w$ is of
type~$\uptau$.  
We also define a function to 
extract all wildcard symbols that appear in the grammar:

\begin{definition} {\bf (Wildcard extractor, $\wildcardsof$)} 
Given a
context-free protocol  $\cfgspec =  (\terminals, \nonterms,
\rules, \startsym)$, we write $\wildcardsof(\cfgspec)$ to
denote the set of all wildcard symbols that appear in $\cfgspec$. 
\end{definition}

\subsection{Semantic Conformance to API Protocol}\label{sec:conform}

Intuitively, a program $\program$ conforms to a parametrized CFG specification $\cfgspec$ if it satisfies the spec for every possible instantiation of the wildcards in $\cfgspec$. To make this statement more precise, we first introduce the notion of an \emph{instantiated API protocol}:

\begin{definition} {\bf (Instantiated spec)} Given an API  specification
  $\cfgspec$, we say that $\instspec$ is an instantiation of $\cfgspec$, written  $\instspec \in \inst(\cfgspec)$,  if it can be obtained from $\cfgspec$ by substituting every wildcard symbol $w_i \in \wildcardsof(\cfgspec)$  with a concrete value of the appropriate type. 
  \end{definition}

  Next, to determine if a program trace $\tau$ conforms to an instantiated specification $\instspec$, we will check ``inclusion" of the trace in the language defined by $\instspec$. To this end, we convert the trace to a word over the terminal symbols in $\instspec$ using the following \emph{TraceToWord} function:

\begin{definition}{\bf (Trace-to-Word)} Let $\ptrace$ be a trace and let $\instspec =  (\terminals, \nonterms,
\rules, \startsym)$  be an (instantiated) API protocol. We define $\project(\ptrace, { \instspec})$ as follows\footnote{We use the notation $[ s \mid ... ]$ to describe a filter operation on the input trace. The output preserves the relative order of statements in the input trace.}:
\[
  \project(\ptrace, \instspec) = [ s' \mid s' \in \terminals, \langle s,\ \sigma \rangle \in \ptrace,\ s' = s[\sigma(\vec{v})/\vec{v}],\ \vec{v} = \textsf{Vars}(s)]
\]
\end{definition}

\begin{example}

    Consider the following  trace $\ptrace$:
    \begin{equation*}
      \begin{split}
        \ptrace = \langle \code{l1 = new Lock}, \pstate_1 \rangle, \langle \code{l1.lock()}, \pstate_2 \rangle, \langle \code{l1.unlock()}, \pstate_3 \rangle,\\
        \langle \code{l2 = new Lock}, \pstate_4 \rangle, \langle \code{l2.lock()}, \pstate_5 \rangle, \langle \code{l2.unlock()}, \pstate_6 \rangle
      \end{split}
    \end{equation*}
and suppose that $o_1, o_2$ refer to the addresses of the first and second allocated \code{Lock} objects respectively.
Now, consider the following instantiated spec $\instspec$:
\[ 
  \instspec \ = \    S \ \rightarrow \ \epsilon \ | \ o_1.lock() \ S \ o_1.unlock() \ S
\]
Then, we have:
\[
\project(\ptrace, \instspec) = [\code{$o_1$.lock()}, \code{$o_1$.unlock()}] \\
\] 

Observe that the generated word ``ignores'' all statements other than API calls (e.g., \code{new Lock}). Furthermore, since 
variable \code{l2} has value $o_2$ rather than $o_1$, the last two \code{lock}/\code{unlock} statements in the trace are also not included in the result. \\

\end{example}

\definition{\bf (Semantic conformance)} Given a program $\program$ and a
context-free API protocol $\cfgspec$, $\program$ semantically conforms to 
$\cfgspec$ if and only if the following holds:
\begin{equation}\label{eq:probstmt}
  \forall \ptrace \in \emph{Traces}(\program).\forall \instspec \in \inst(\cfgspec).\ \project(\ptrace, \instspec) \in \lang(\instspec)
\end{equation}

In other words, a program $\program$ satisfies $\cfgspec$ if it satisfies the protocol for all possible instantiations of the wildcards in $\cfgspec$ for every program trace.

\section{Program Instrumentation}\label{sec:instr}

\begin{figure*}
  \[
  \begin{array}{cc}
    (API) &
    \irule{
      \begin{array}{c}
        \terminals_m = \{t_1,...,t_k\}\ \ g_i = \getguard(t_i, s)\\
        s' = \code{if (} g_1 \code{) } t_1 \code{ ... else if(} g_k \code{) } t_k
      \end{array}
    }{
     \typeor, \cfgspec \vdash  s = \code{api\_call } v.m(\vec{v}) \hookrightarrow s'
    }\\\\
        (Seq) &
    \irule{
      \begin{array}{c}
     \typeor, \cfgspec \vdash  s_1 \hookrightarrow s_1' \ \ \ \ 
          \typeor, \cfgspec \vdash   s_2 \hookrightarrow s_2'
      \end{array}
    }{
     \typeor, \cfgspec \vdash  s_1; s_2 \hookrightarrow s_1'; s_2'
    }\\\\
            (\emph{If}) &
    \irule{
      \begin{array}{c}
     \typeor, \cfgspec \vdash  s_1 \hookrightarrow s_1' \ \ \ \ 
          \typeor, \cfgspec \vdash    s_2 \hookrightarrow s_2' 
      \end{array}
    }{
     \typeor, \cfgspec \vdash \code{if}(p) \ \{s_1\} \ \code{else} \{ s_2\}  \hookrightarrow \code{if}(p) \ \{s_1'\} \ \code{else} \  \{ s_2'\} }\\\\
    (Method) &
    \irule{
      \typeor, \cfgspec \vdash s \hookrightarrow s'
    }{
    \typeor, \cfgspec \vdash   \code{void } m(\vec{v})\code{\{} s\code{\}} \hookrightarrow \code{void } m(\vec{v})\code{\{} s' \code{\}}
    }\\\\
    (Class) &
    \irule{
      \begin{array}{c}
         
         w_i \in \wildcardsof(\cfgspec) \ \ \ \ \   \typeor \vdash w_i : \uptau_i \ \ \ \ \ \     f'_i = \code{static } w_i : \uptau_i = \code{*} \ \ \ \ \ \ 
         \typeor, \cfgspec \vdash m_i \hookrightarrow m_i'\ \\
      \end{array}
    }{
     \typeor, \cfgspec  \vdash  \code{class C \{ } f_1\ ...\ f_n\ \ m_1\ ...\ m_k \code{ \}} \hookrightarrow \code{class C \{ } f_1\ ...\ f_n\ f'_1\ ...\ f'_j\ \ m_1'\ ...\ m_k' \code{ \}}
    }
  \end{array}
  \]
  \caption{Rules for instrumenting program $\program$ for a given specification $\cfgspec = (\terminals, \nonterms, \rules, \startsym)$. For statements that are not shown, we have $\typeor, \cfgspec \vdash s \hookrightarrow s$, and the definition of $\getguard$ function is inlined in text.}\label{fig:instr}
\end{figure*}

In the previous section, we defined conformance of a program to an API protocol in terms of all possible program traces and all possible instantiations of the wildcard symbols. While this strategy allows us to formally state the problem, it does not lend itself to a verification algorithm since there are infinitely many possible instantiations of the wildcard symbols as well as infinitely many  program traces. Thus, rather than checking the containment of each trace in all possible instantiations of the parametrized CFG, our strategy is  to instead generate a CFG encoding {all possible}  traces of the program as well as all possible instantiations of the wildcard symbols and then check inclusion between this CFG and the specification grammar. Towards this goal, we first \emph{instrument} the program with new fields that are initialized non-deterministically and that can be used to capture all possible values of the wildcards in the specification.   In addition, our instrumentation deals with challenges that arise from potential aliasing
between different arguments to API calls.

 In more detail, Figure~\ref{fig:instr} describes our program instrumentation using judgments of 
the form $\typeor, \cfgspec \vdash s \hookrightarrow s' $, where $s'$ corresponds to the transformed version of $s$.

\paragraph{Class.} The top-level rule labeled  ``$Class$''  
introduces a static field for every wildcard
symbol that appears in   $\cfgspec$ and
initializes it to a non-deterministic value.  It also
instruments each  method within this
class.

\paragraph{Method, Seq, If.} These three rules  reconstruct the
statement after recursively transforming the statements nested inside them.

\paragraph{API} This rule is the core of our program instrumentation and 
ensures that each terminal symbol in the specification grammar has a (syntactically) corresponding
API call statement while being semantically equivalent to the original API call.
As shown in Figure~\ref{fig:instr}, this rule transforms an API call $s$ to library method $m$
to an \code{if}-\code{then}-\code{else} statement.  Specifically, the rule
iterates over all the terminals $t_k \in \terminals_m$ in $\cfgspec$ and
generates an \code{if} statement for each terminal $t_k$ conditioned
upon the wildcard symbols matching the variables used in $s$.  To achieve this goal, we make use of an
auxiliary $\getguard$ function defined as follows: 
\[ \getguard(t_k, s) =
\bigwedge_{j} \$i_{kj} = v_j\]  
Here, $\vec{\$i_k}$ is the sequence of
wildcards used in $t_k$ and $\vec{v}$ is the sequence of variables used
in $s$. Thus, given an API call $s$ and a set of terminals
$\terminals_m$, we generate the following code:
\begin{equation*}
\begin{split}
  & \code{if}(\$i_{11} = v_1 \ \land \  \ldots \ \land \  \$i_{1n} = v_n)\ \{ \  t_1 \  \} \ \\
  & \ldots \\
  & \code{else if}(\$i_{k1} = v_1 \ \land \  \ldots \ \land \  \$i_{kn} = v_n)\ \{ \  t_k \  \}\\
\end{split}
\end{equation*} 
Hence, our instrumentation ensures  that  API calls syntactically use the
wildcard symbols in the grammar while preserving program behavior relevant to the specification.

The following theorem states the correctness of our instrumentation:

\begin{theorem}\label{thm:equi-safe}
Let $\program$ be a program and $\cfgspec$ a context-free API protocol.
  If we have \ $\typeor, \cfgspec \vdash \program \hookrightarrow
  \program'$ and $\ \program'$ semantically conforms $\cfgspec$, then so does
  $\ \program$.
\end{theorem}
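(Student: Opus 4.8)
The plan is to prove the theorem by exhibiting, for each trace of $\program$, a matching trace of $\program'$ whose projection under the chosen instantiation is \emph{identical}, and then invoking the conformance of $\program'$. Concretely, I would fix an arbitrary trace $\ptrace \in \emph{Traces}(\program)$ and an arbitrary instantiated spec $\instspec \in \inst(\cfgspec)$, in which each wildcard $w_i \in \wildcardsof(\cfgspec)$ is bound to a concrete value $c_i$; the goal is to show $\project(\ptrace, \instspec) \in \lang(\instspec)$. The central object is a \emph{witness trace} $\ptrace'$ of $\program'$ obtained by (i) resolving the nondeterministic initializer of each fresh static field $w_i$ (introduced by the $Class$ rule) to the value $c_i$ dictated by $\instspec$, and (ii) replaying every remaining nondeterministic choice ($*$ conditions, allocations) exactly as in $\ptrace$. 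Since the instrumentation leaves every non-API statement unchanged up to the compositional reconstruction in the $Seq$, $If$, and $Method$ rules, and since API calls have no side effects on program state (our standing assumption), the real program variables and fields evolve identically in $\ptrace$ and $\ptrace'$, each guard on real variables evaluates the same way, and each field $w_i$ retains its value $c_i$ throughout; hence $\ptrace'$ is a well-defined trace of $\program'$.

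The heart of the argument is the claim $\project(\ptrace, \instspec) = \project(\ptrace', \instspec)$, which I would establish by structural induction on the instrumentation derivation $\typeor, \cfgspec \vdash s \hookrightarrow s'$. All cases except the $API$ rule are immediate, since those rules either preserve the statement verbatim or recurse compositionally, and none of the statements they handle (assignments, loads, stores, \code{assume}, \code{new}, or regular \code{call}) has the syntactic shape of a terminal in $\instspec$; thus none contributes a symbol to either projection. The only interesting case is the $API$ rule, which transforms $s = \code{api\_call } v.m(\vec{v})$ into the guarded cascade $\code{if }(g_1)\,t_1 \ldots \code{ else if }(g_k)\,t_k$. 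Here I would case-split on whether the concrete call in $\ptrace$ at that position matches a terminal of $\instspec$: writing $\sigma$ for the current state, the projected statement $s[\sigma(\vec{v})/\vec{v}]$ lies in the terminal set of $\instspec$ iff $\sigma(\vec{v})$ equals the tuple of instantiation values for the wildcards of some terminal $t_i \in \terminals_m$.

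The key observation linking the two programs is that, because $w_i$ holds $c_i$ in $\ptrace'$, the guard $g_i = \bigwedge_j (\$i_{ij} = v_j)$ evaluates to true precisely when $\sigma(\vec{v})$ coincides with the instantiated wildcards of $t_i$ --- exactly the matching condition above. Thus some branch of the cascade fires in $\ptrace'$ iff $s[\sigma(\vec{v})/\vec{v}]$ is a terminal of $\instspec$; and when a branch $t_i$ fires, substituting the current field values into $t_i$ yields its instantiated terminal, which (since the firing guard forces $c_{ij} = \sigma(v_j)$ for all $j$) equals $s[\sigma(\vec{v})/\vec{v}]$ itself. I expect the main obstacle to lie in this case, specifically in handling \emph{aliasing}: distinct wildcards may be instantiated to the same value, so several guards could simultaneously hold while the cascade deterministically selects only the first. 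I would argue this is harmless, because every guard that holds forces its branch's instantiated terminal to coincide with the single concrete call $s[\sigma(\vec{v})/\vec{v}]$, and in $\instspec$ all such terminals collapse to one symbol; hence $\ptrace'$ emits exactly the terminal that $\ptrace$ projects, with the same multiplicity. Concatenating this per-statement correspondence along the trace gives the desired equality of words.

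With $\project(\ptrace, \instspec) = \project(\ptrace', \instspec)$ in hand, the theorem follows immediately: since $\program'$ semantically conforms to $\cfgspec$ and $\ptrace' \in \emph{Traces}(\program')$, the definition of semantic conformance (Equation~\ref{eq:probstmt}) gives $\project(\ptrace', \instspec) \in \lang(\instspec)$, whence $\project(\ptrace, \instspec) \in \lang(\instspec)$. As $\ptrace$ and $\instspec$ were arbitrary, $\program$ semantically conforms to $\cfgspec$.
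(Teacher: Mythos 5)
Your proposal is correct and follows essentially the same route as the paper's proof of Lemma~\ref{lemma:equi-1}: construct a witness trace of $\program'$ by fixing the nondeterministic wildcard fields to the instantiation values of $\instspec$, mirror the remaining execution, and show the two traces project to the same word because a guard $g_i$ fires exactly when the call's arguments match the instantiated wildcards of $t_i$. Your explicit handling of the aliasing case (several guards holding at once, all collapsing to the same instantiated terminal) is in fact slightly more careful than the paper's treatment, but it is the same argument.
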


\begin{proof}
  \iffull{
    The proofs of all theorems are in the appendix.
  }
  \else
  {
    The proofs of all theorems can be found in the extended version of the paper~\cite{}.
  }
  \fi
\end{proof}

Observe that the above theorem only states the soundness, but not
completeness, of our program instrumentation. Completeness does not
hold for arbitrary parametrized CFGs. For example, consider the API
protocol: $\cfgspec \rightarrow \$1.f()\ \$2.g()$, where $\$1$ and
$\$2$ have different types, and the code fragment ``\code{v1.f()}
\code{v2.g()}''. This fragment clearly conforms to the API protocol,
however, our instrumentation would produce the following output:

\begin{center}
  \vspace{0.5em}
\begin{tabular}{l}
  \code{\$1 = *; \$2 = *;}\\
  \code{if (v1 == \$1) \$1.f();}\\
  \code{if (v2 == \$2) \$2.g();}
\end{tabular}
  \vspace{0.5em}
\end{center}

The instrumented program does not satisfy the API protocol because it
generates the words ``\$1.f()'' and ``\$2.g()'' that do not belong in
$\lang(\cfgspec)$. Such protocols typically do not occur in practice
because such examples refer to relationships between methods
defined in different classes, so this is no longer a protocol for a
single API.

However, completeness \emph{does} hold if all terminals in the grammar
use the same set of wildcards. In practice, every API protocol we have
encountered conforms to this restriction.

\section{Verification Algorithm}\label{sec:verification}

Our verification algorithm takes as input a program that has been instrumented as described in Section~\ref{sec:instr}. The main idea underlying the algorithm is to extract a context-free grammar from the instrumented program and iteratively refine this CFG abstraction until the property is either refuted or verified. Since our algorithm operates over 
 \emph{predicated control flow automata (PCFA)}, we start with a discussion of PCFAs and then describe our CEGAR-based verification approach.

\subsection{Predicated Control-Flow Automata}\label{sec:pcfa}

We represent each program using a generalized form of \emph{control flow automaton (CFA)} that is commonly used in software model checking~\cite{love-automata,lazy-abs,abs-from-proofs}. A CFA is  a directed graph where nodes correspond to program locations, and an edge from $n$ to $n'$ labeled with $s$ indicates that the program transitions from location $n$ to $n'$ upon the execution of statement $s$. 
Predicated control flow automata (PCFA)  augment CFA nodes with logical predicates:

\definition{\bf (PCFA)} 
A predicated control-flow
automaton $\cfa$ is a tuple $\cfa = (\stmts, \states,
\transrel)$ where:
\begin{itemize}[leftmargin=*]
  \item $\stmts$ is the set of atomic program statements. 
  \item $\states$ is a set of  states, where each
    $\s \in \states$ is a pair $\s = (\ploc_{\meth}, \pred)$. Here,
    $\ploc_{\meth}$ is a program location within method $\meth$, and
    $\pred$ is a formula over some first-order theory.
  \item $\transrel$ is the transition relation  $\transrel
    \subseteq \states \times \stmts \times \states$.
\end{itemize}

\paragraph{Notation.} Given a state $s = (\ploc, \pred)$, we use $\locof(s)$ and
$\predof(s)$ to denote $\ploc$ and $\pred$
respectively. $\transof(\cfa)$ denotes the transition relation of
$\cfa$. We  use the notation $\satloc{S}{\ploc} = \{s \in S \mid
\locof(s) = \ploc\}$ to represent the subset of states in $S$ that
involve program location $\ploc$.  In addition, we write  $\inof(\ploc, \transrel)$ 
(resp. $\outof(\ploc, \transrel)$) to denote the in-coming (resp. out-going) edges of 
location of $\ploc$. Finally, we say that state $s'$ is reachable from state $s$, denoted as
$\cfa \vdash s \leadsto s'$, if and only if $(s, \_, s') \in
\transrel$. As standard, we use $\cfa \vdash s \leadsto^{*} s'$ to
represent the  transitive closure of relation $\leadsto$.

\subsection{Main Algorithm}\label{sec:main-alg}

Figure~\ref{fig:main-alg} presents  our top-level 
verification algorithm. This procedure 
 takes as input an (instrumented)
program $\program$, represented as  a mapping from methods to their PCFAs, as well as a
context-free API protocol $\cfgspec$. The algorithm  either returns ``Verified'' or a
counterexample  indicating an  API misuse. As a convention, procedure names in small caps are formally defined later in this
paper, whereas those in camel case are oracles that provide
 functionality that is orthogonal to our approach. 
 
\begin{figure}
\begin{algorithm}[H]
  \begin{algorithmic}[1]
    \Procedure{Verify}{$\program$, $\cfgspec$}
      \State \textbf{input:} $\program: \meths \rightarrow \cfas$, program.
      \State \textbf{input:} $\cfgspec$, API-Protocol's context-free grammar.
      \State \textbf{output:} $\mathit{Verified}$ or Counterexample.
      \vspace{0.04in}

      \While{$true$} 
        \State $\cfgrammar_{\program} \gets \textsc{ConstructCFG}(\program)$ \label{ln:cfg-const}
        \If{$\exists d.\ d \in InclusionCheck(\cfgprog, \cfgspec)$} \label{ln:incl-check}
          \State $(\pi, \rightsquigarrow) \gets derivation2path(d)$ \label{ln:path-const}
            \If{$\emph{feasible}( (\pi, \rightsquigarrow)) $} \label{ln:feas-check}
             \Return $\pi$ 
          \Else 
            \State $\mathcal{I} \gets \mathit{Interpolant}((\pi, \rightsquigarrow))$ \label{ln:interp} 
            \State $\Psi \gets \left \{\ploc_{\meth} \mapsto \left \{I_{j} \bigm \vert \begin{array}{c}I_{j} \in \mathcal{I}, \sigma_j \in \pi\\ \locof(\sigma_j) = \ploc_{\meth}\end{array} \right \} \right \}$ \label{ln:inter-map}
            \State $\program \gets \textsc{Refine}(\program, \Psi)$ \label{ln:refine}
          \EndIf
        \Else \ \Return $\mathit{Verified}$ \label{ln:verified}
        \EndIf
      \EndWhile
    \EndProcedure
  \end{algorithmic}
\end{algorithm}
\vspace{-2.2em}
\caption{Verification Algorithm}\label{fig:main-alg}
\end{figure}

The main verification algorithm  is a CEGAR loop that consists of the
following steps. First, it calls procedure {\sc{ConstructCFG}}
(line~\ref{ln:cfg-const}) to obtain a context-free grammar $\cfgprog$ that
abstracts the relevant API usage  of $\program$. Next, it checks whether there
exists a word $w$ that belongs in $\lang(\cfgprog)$ but not in
$\lang(\cfgspec)$ (line~\ref{ln:incl-check}). If this is not the case,
 the  program must satisfy  $\cfgspec$, so the
algorithm returns ``Verified''
(line~\ref{ln:verified}).

On the other hand, if there exists a word $w \in \lang(\cfgprog) \backslash \lang(\cfgspec)$,
we need to check whether $w$ corresponds to a feasible execution path of $\program$. Given a derivation $d$ of 
$w$, we  convert this derivation  to an execution path using an oracle called 
\emph{derivation2path} (line \ref{ln:path-const}). Here, we represent an execution path  as a \emph{nested trace}~\cite{nested-interp}, which
 is  a tuple
  $(\pi = \sigma_0 ... \sigma_n,
\rightsquigarrow)$ where $\pi$ is a sequence of program statements
and $\rightsquigarrow$ is a so-called "nesting relation" between indices of $\pi$ that
associates matching call and return statements. That is, if $i
\rightsquigarrow j$, then $\sigma_j$ is a return statement and
$\sigma_i$ is its matching call statement. Given such a nested 
  trace, 
we can easily check whether $\pi$ is feasible by encoding it as an SMT formula
and querying its satisfiability
(line~\ref{ln:feas-check}).  If
the path is  feasible, then the algorithm returns $\pi$ as a witness
of API misuse.

In case $\pi$ is infeasible, then word $w$ is a spurious
counterexample, and our algorithm refines the PCFA abstraction
(lines~\ref{ln:interp}-\ref{ln:refine}) to eliminate the same spurious
counterexample in the next iteration. To this end, we first make use
of another oracle, $\mathit{Interpolant}$, which takes as input a
nested word $(\pi = \sigma_0 ... \sigma_n, \rightsquigarrow)$ and
returns an \emph{inductive sequence of nested interpolants}
$\mathcal{I} = [I_0, ..., I_{n+1}]$. Following~\citet{nested-interp},
we define nested interpolants as a sequence of predicates with the
following properties: (1)~$I_0 = true$, $I_{n+1} = \emph{false}$. (2)~If
$\sigma_i$ is not a return statement, then $sp(\sigma_i, I_i)
\Rightarrow I_{i+1}$. (3) If $\sigma_i$ is a return statement, then
$sp(\sigma_i, I_i \land I_j) \Rightarrow I_{i+1}$ and $j
\rightsquigarrow i$. Intuitively, the first property ensures that $I$
can be used to prove infeasibility of $(\pi, \rightsquigarrow)$,
whereas the latter two properties ensure that $I$ is inductive.

After calculating a nested interpolant, the algorithm builds a mapping $\Psi$ that groups interpolants by program location
(line~\ref{ln:inter-map}). That is, $\Psi$ maps each program location to a set of predicates that should be tracked at that location. The {\sc Refine} procedure uses  $\Psi$ to determine how to clone program locations in the PCFAs such that $(\pi, \rightsquigarrow)$ is  no longer  feasible in the refined program abstraction.

We now state the following two theorems concerning the soundness and progress of our approach:

\begin{theorem}\label{thm:soundness}
  {\bf{(Soundness)}} 
  Let $\program, \program'$ be the  programs before and after
  the call to {\sc Refine} at line 13 respectively. Then, for every feasible execution path $\pi$ in
  $\program$, there exists a derivation $d \in
  \textsc{ConstructCFG}(\program')$ such that $(\pi, \rightsquigarrow)
  = \mathit{derivation2path}(d)$.
\end{theorem}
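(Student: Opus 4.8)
The plan is to show that refinement never discards a genuinely feasible behavior: any path the abstraction could already realize is still realizable after cloning. I would fix a feasible path $\pi = \sigma_0 \ldots \sigma_n$ with nesting relation $\rightsquigarrow$ in $\program$ and exploit its feasibility to obtain a witnessing sequence of concrete states $\pstate_0, \ldots, \pstate_{n+1}$ with $\langle \sigma_i, \pstate_i \rangle \Downarrow \pstate_{i+1}$, where $\sigma_i$ labels the transition from location $\ploc_i$ to $\ploc_{i+1}$. The central observation is that \textsc{Refine} splits each node at a location $\ploc \in \mathrm{dom}(\Psi)$ into clones indexed by the complete cubes $\ccubes(\Psi(\ploc))$ over the tracked predicate set, and these cubes partition the state space at $\ploc$. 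Hence each concrete state $\pstate_i$ satisfies exactly one cube $\hat{\pred}_i$ at $\ploc_i$, which canonically selects a clone $(\ploc_i, \hat{\pred}_i)$ in $\program'$; for $\ploc_i \notin \mathrm{dom}(\Psi)$ the node is untouched and the selection is trivial.

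First I would establish the key local step: for each $i$, the edge $\big((\ploc_i, \hat{\pred}_i),\ \sigma_i,\ (\ploc_{i+1}, \hat{\pred}_{i+1})\big)$ survives in $\program'$. Recall that \textsc{Refine} retains an edge between two clones exactly when the labeling statement's semantics are consistent with the source and target annotations. Since $\pstate_i \models \hat{\pred}_i$, $\langle \sigma_i, \pstate_i \rangle \Downarrow \pstate_{i+1}$, and $\pstate_{i+1} \models \hat{\pred}_{i+1}$, the transition itself witnesses this consistency, so the edge is not pruned. This is precisely where feasibility of $\pi$ is used: only an infeasible transition could fail the consistency check and lose its edge.

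The main work is the inter-procedural bookkeeping, which I would handle by strengthening the statement and inducting on the nesting structure of $(\pi, \rightsquigarrow)$. Because \textsc{ConstructCFG} introduces one non-terminal $m_\phi$ per clone of $m$'s exit location and builds its productions only from the sub-PCFA backward-reachable from that exit clone, the right invariant is: \emph{every feasible sub-execution of a method $m$ from entry to exit, whose exit state satisfies cube $\phi$, admits a derivation from $m_\phi$ producing exactly that sub-execution's API-call word with the induced nesting.} The base case (a call-free straight-line segment) follows by chaining the local step into the intra-procedural productions $(\ploc_i, \hat{\pred}_i) \to t\, (\ploc_{i+1}, \hat{\pred}_{i+1})$. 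For the inductive step, at a matched pair $i \rightsquigarrow j$ I would apply the induction hypothesis to the bracketed callee sub-trace to obtain a sub-derivation from $m'_{\phi'}$, where $\phi'$ is the exit cube satisfied by the callee's return state $\pstate_j$; the caller's call production selects exactly this $m'_{\phi'}$, and the matching $\epsilon$-production reconstructs $j$ as the return index. Splicing these together yields a full derivation $d$, and a final check that reading $d$ back reverses the construction gives $\mathit{derivation2path}(d) = (\pi, \rightsquigarrow)$.

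The hard part will be the backward-reachability restriction in the method-cloning construction: I must argue that the entire callee sub-path lives inside the sub-PCFA backward-reachable from the chosen exit clone $m'_{\phi'}$, so that no production needed to derive it was dropped. This holds because the sub-path actually terminates at that exit clone, making every visited node backward-reachable from it by definition; stating and applying this cleanly is the delicate step. A secondary subtlety is confirming that $\ccubes$ yields mutually exclusive and exhaustive conjunctions over $\Psi(\ploc_i)$, so that the clone selection $\hat{\pred}_i$ is genuinely well-defined and unique rather than merely covering the reachable states.
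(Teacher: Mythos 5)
Your proposal is correct, and it rests on the same pivotal fact as the paper's own argument: the conjunctions produced by $\ccubes$ are mutually exclusive and exhaustive, so a genuinely feasible transition can never be pruned from \emph{every} clone of its target location. The paper, however, packages the proof differently: it factors the theorem into a derivation-to-sequence lemma (a derivation exists in $\cfgprog$ \emph{iff} the corresponding PCFA state sequence is $\program$-feasible) plus a separate lemma, proved by \emph{contradiction}, that {\sc Refine} cannot eliminate a feasible execution path---because doing so would require the strongest postcondition along some edge to be inconsistent with both an interpolant $I_{j+1}$ and its negation $\neg I_{j+1}$. Your argument is the constructive mirror image: you use the concrete states $\pstate_i$ of the feasible run to select the unique surviving cube at each location, exhibit each surviving edge directly, and then assemble the derivation by induction on the nesting structure. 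Your route is more self-contained and makes explicit a point the paper's derivation-to-sequence lemma dismisses with ``this follows from the construction of $\cfgprog$,'' namely that every node on a callee sub-path is backward-reachable from the chosen exit clone along the sub-path itself, so no needed production is dropped; the paper's route buys reusability, since its iff-style lemma also serves as the workhorse for the Progress theorem. One detail to tighten: {\sf CloneStates} produces annotations of the form $\pred \land \pred'$ (pre-existing annotation conjoined with a new cube), so your ``unique cube satisfied by $\pstate_i$'' selection needs the small additional observation that the accumulated annotations at a location remain a partition across refinement iterations.
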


\begin{proof}
  \iffull{
    The proofs of all theorems are in the appendix.
  }
  \else
  {
    The proofs of all theorems can be found in the extended version of the paper~\cite{}.
  }
  \fi
\end{proof}

\begin{theorem}\label{thm:completeness}
  {\bf{(Progress)}} Let $t$ be  a spurious counterexample
  returned by $\mathit{derivation2path}$ and let $\program'$ be the resulting
  program after calling {\sc Refine} on program $\program$.
  Then, there does not exist a derivation $d \in
  \textsc{ConstructCFG}(\program')$ such that $t =
  \mathit{derivation2path}(d)$.
\end{theorem}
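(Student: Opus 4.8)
The plan is to argue by contradiction, reducing the claim to the standard progress property of interpolation-based CEGAR, but adapted to the nested, inter-procedural PCFA-to-grammar setting. Suppose, for contradiction, that there is a derivation $d \in \textsc{ConstructCFG}(\program')$ with $\mathit{derivation2path}(d) = t$, where $t = (\pi = \sigma_0 \ldots \sigma_n, \rightsquigarrow)$ is the spurious counterexample. Since non-terminals of $\cfgprog' = \textsc{ConstructCFG}(\program')$ correspond to PCFA states (cloned locations) and productions correspond to PCFA edges and matched call/return pairs, the derivation $d$ determines a sequence of states $s_0, \ldots, s_{n+1}$ with $\locof(s_i) = \ell_i$ (the location of $\sigma_i$), such that each $\sigma_i$ labels an edge from $s_i$ to $s_{i+1}$ and the call/return productions used in $d$ respect the nesting relation $\rightsquigarrow$. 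Recall that, because $t$ is infeasible, $\mathit{Interpolant}$ returns a nested interpolant $\mathcal{I} = [I_0, \ldots, I_{n+1}]$ with $I_0 = \mathit{true}$, $I_{n+1} = \mathit{false}$, and the inductiveness conditions (2)--(3); and that $\Psi$ (line~\ref{ln:inter-map}) records $I_i \in \Psi(\ell_i)$ for every $i$.

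The heart of the argument is the invariant $\predof(s_i) \Rightarrow I_i$, which I would establish by induction along the nested trace. Two ingredients make the induction go through. First, I would appeal to the defining property of {\sc Refine}: since each $I_i$ is a tracked predicate at $\ell_i$, the clones that {\sc Refine} creates for $\ell_i$ are annotated with cubes that \emph{decide} $I_i$ (this is the role of the \ccubes\ construction), and an edge $s \xrightarrow{\sigma} s'$ is retained only when $sp(\sigma, \predof(s)) \wedge \predof(s')$ is satisfiable. These two facts let me upgrade mere edge-consistency into entailment: if $sp(\sigma_i, \predof(s_i)) \Rightarrow I_{i+1}$ and the edge into $s_{i+1}$ is present, then $\predof(s_{i+1})$ (which decides $I_{i+1}$) must entail $I_{i+1}$, since otherwise $sp(\sigma_i,\predof(s_i)) \wedge \predof(s_{i+1})$ would be unsatisfiable and the edge pruned. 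Second, the inductive step is split according to the kind of $\sigma_i$: for a non-return statement I combine $\predof(s_i) \Rightarrow I_i$ with property (2), namely $sp(\sigma_i, I_i) \Rightarrow I_{i+1}$, to obtain $sp(\sigma_i, \predof(s_i)) \Rightarrow I_{i+1}$ and then apply the upgrade; for a return statement $\sigma_i$ with matching call $\sigma_j$ ($j \rightsquigarrow i$) I additionally use the invariant at the caller state $\predof(s_j) \Rightarrow I_j$ together with property (3), $sp(\sigma_i, I_i \wedge I_j) \Rightarrow I_{i+1}$.

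Once the invariant is in hand, the contradiction is immediate: instantiating it at $i = n+1$ gives $\predof(s_{n+1}) \Rightarrow I_{n+1} = \mathit{false}$, so $\predof(s_{n+1})$ is unsatisfiable. But {\sc ConstructCFG} only emits non-terminals and productions for states whose annotation is satisfiable (equivalently, every edge feeding $s_{n+1}$ is pruned by the consistency check), so $s_{n+1}$ cannot occur in any derivation of $\cfgprog'$, contradicting the existence of $d$. Hence no derivation of $\textsc{ConstructCFG}(\program')$ maps to $t$.

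I expect the main obstacle to be the inter-procedural (return) case of the induction. A PCFA edge is binary, so the coupling between a call-site state $s_j$ and the caller's post-return state $s_{i+1}$ is expressed at the \emph{grammar} level through the method-cloning productions (the $\mathit{Acquire}_{\phi}$-style non-terminals that pair a callee version with a specific caller continuation). I therefore need to show precisely that any derivation realizing the return $\sigma_i$ forces $s_{i+1}$ to be the continuation matched to the caller clone $s_j$ used before the call, so that property (3) can legitimately be applied to the pair $(I_i, I_j)$. Justifying this correspondence between the nesting relation $\rightsquigarrow$, the call/return productions produced by {\sc Refine}, and the complete-cube upgrade at returns is the delicate part; the purely intra-procedural reasoning is the standard interpolation-progress argument and should be routine by comparison.
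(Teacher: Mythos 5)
Your proposal is correct and follows essentially the same route as the paper: the paper factors the argument through a lemma identifying derivations of $\cfgprog'$ with $\program'$-feasible PCFA state sequences (your derivation-to-state-sequence step) and then an induction along the nested trace that uses the complete-cube annotations, the edge-pruning check $SAT(sp(\sigma,\predof(s))\land\predof(s'))$, and the nested-interpolant properties with the same return/non-return case split, reaching the contradiction at $I_{n+1}=\mathit{false}$. The only cosmetic difference is direction: you carry the forward invariant $\predof(s_i)\Rightarrow I_i$, whereas the paper argues contrapositively that any clone assignment containing some $\neg I_i$ must traverse a pruned edge; the inter-procedural coupling you flag as delicate is exactly what the paper's feasibility condition on call/return triples and the $\mathit{feasible}(e,\varphi')$ premise of the Call rule are there to provide.
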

\begin{proof}
  \iffull{
    The proofs of all theorems are in the appendix.
  }
  \else
  {
    The proofs of all theorems can be found in the extended version of the paper~\cite{}.
  }
  \fi
\end{proof}

In the following subsections, we describe the {\sc{Refine}}
(Section~\ref{sec:pcfa-ref}) and {\sc{ConstructCFG}}
(Section~\ref{sec:cfg-const}) procedures in more detail.

\subsection{PCFA Refinement}\label{sec:pcfa-ref}

Our PCFA refinement algorithm is summarized in 
Figure~\ref{fig:ref-alg}.  Given program $\program$ and  mapping $\Psi$ from  locations to predicates,
the  idea is to "clone" any program location $\ploc \in \textsf{dom}(\Psi)$ 
based on the predicates $\Psi(\ploc)$.
Intuitively,
the demand-driven cloning of program locations allows our method to be selectively path-sensitive and removes 
infeasible program paths encountered in previous iterations. Furthermore, our refinement algorithm is modular in the sense that
we can refine the PCFA of each method independently.

\begin{figure}
\begin{algorithm}[H]
  \begin{algorithmic}[1]
    \Procedure{Refine}{$\program$, $\Psi$}
      \State \textbf{input:} $\program: \meths \rightarrow \cfas$, program.
      \State \textbf{input:} $\Psi: \plocs \rightarrow \{\preds\}$, new predicates to track.
      \State \textbf{output:} Refined program with respect to $\Psi$
      \vspace{0.04in}

      \For{$(\ploc_{\meth}, \mathit{Preds}) \in \Psi$}
        \State $(\stmts, \states, \transrel) \gets \program[\meth]$ \label{ln:get-pcfa}
        \State $\Phi \gets \ccubes(\mathit{Preds})$ \label{ln:cubes}
        \State $\states' \gets \textsf{CloneStates}(\states, \ploc_{\meth}, \Phi)$ \label{ln:clone}
        \State $\transrel' \gets \textsf{UpdateTransitions}(\transrel, \ploc_{\meth}, \satloc{\states'}{\ploc_{\meth}})$ \label{ln:trans}
        \State $\program[\meth] \gets (\stmts, \states', \transrel')$ \label{ln:update-pcfa}
      \EndFor
      \State \Return $\program$
    \EndProcedure
  \end{algorithmic}
\end{algorithm}
\vspace{-2.2em}
\caption{Program Refinement Algorithm.}\label{fig:ref-alg}
\end{figure}

In more detail, the {\sc Refine} procedure iterates over each program location $\ploc \in \textsf{dom}(\Psi)$ and determines which new states to create in the PCFA. Specifically, if $\Psi(\ploc)$ contains $n$ new predicates, then, for each state $(\ploc, \phi)$ in the PCFA, we need to create $2^n$ new states, where each clone represents a copy of $\ploc$ under a different boolean assignment to the predicates in $\Psi(\ploc)$. Towards this goal, the {\sc Refine} procedure first invokes  \ccubes \ (line~\ref{ln:cubes}) to generate a different boolean assignment as follows:
\begin{equation*}
\ccubes(P) = \big \{\underset{i = 1}{\overset{|P|}{\bigwedge}} c_{i} \mid c_{i} \in \{p_i, \neg p_i\}, p_i \in P \big \}
\end{equation*}
In other words, \ccubes($P$)\ yields a set $\Phi$ of (conjunctive) formulas such that every $\phi \in \Phi$ corresponds to a different boolean assignment to the predicates in $P$.

Next, given the new set of predicates $\Phi$ 
to track at location $\ploc$, the procedure {\sc CloneStates} (line~\ref{ln:clone}) generates $|\Phi|$ clones of each state $(\ploc, \phi) \in S$ as follows: 
\begin{equation*}
\textsf{CloneStates}(\states, \ploc, \Phi) = (\states \setminus \satloc{\states}{\ploc}) \  \cup \ \{(\ploc, \pred \land \pred') \mid (\ploc, \pred) \in \states,\ \pred' \in \Phi \}
\end{equation*}

In other words, {\sf CloneStates} removes all existing states $(\ploc, \phi)$ associated with location $\ploc$ and then adds a new state $(l, \phi \land \phi')$ for each $\phi' \in \Phi$. Thus, if the PCFA contains $n$ states for location $\ploc$ before refinement, then the refined PCFA contains $n \times |\Phi|$ states for location $\ploc$.

\begin{example}
Consider the initial PCFA
for method \code{acquire}  from Fig.~\ref{fig:init-acquire} and 
suppose  $\Psi(a_3) = P = \{ l1_{acq} = \$1 \}$. In this case, 
we have $\Phi = \ccubes(P) = \{ l1_{acq} = \$1, l1_{acq} \neq \$1\}$. Thus, {\sf CloneStates}
removes the original state $(a_3, \emph{true})$ and generates two new states
$(a_3, l1_{acq} \neq \$1)$ and $(a_3', l1_{acq} = \$1)$ as shown in Figure~\ref{fig:pcfaref}.
\end{example}

After creating the new states $S'$, the {\sc Refine} procedure updates the
transition relation of the PCFA by invoking the {\sf{UpdateTransitions}} function (line~\ref{ln:trans}), defined
as follows:

\begin{equation*}
  \begin{split}
    \textsf{UpdateTransitions}(\transrel, \ploc, \states') = & \ \transrel \setminus (\inof(\transrel, \ploc) \cup \outof(\transrel, \ploc))\ \cup\\
     & \{e = (s, \sigma, s') \mid s' \in S', (s, \sigma, \_) \in \inof(\transrel, \ploc),\ \mathit{feasible}(e)\} \ \cup\\
     & \{e = (s', \sigma, s) \mid s' \in S', (\_, \sigma, s) \in \outof(\transrel, \ploc),\ \mathit{feasible}(e)\}
  \end{split}
\end{equation*}
where $\mathit{feasible}((s_1, \sigma, s_2))$ is defined as $SAT(sp(\sigma, \predof(s_1)) \land \predof(s_2))$.
In other words, \textsf{UpdateTransitions} first removes from $\transrel$ all transitions involving location $\ploc$. Then, for each new state $s' \in S'$ and for each incoming  edge $(s, \sigma, \_)$  to location $\ploc$, it adds a new edge $(s, \sigma, s')$ as long as the annotation  of the new state $s'$ is consistent with the annotation of the source node, $\emph{Pred}(s)$, and the semantics of statement $\sigma$. Outgoing edges from location $\ploc$ are also updated analogously.

\begin{example}
Consider again the new states at the end of method
\code{acquire}. Observe that {\sf{UpdateTransitions}} will not add an
edge between states $a_1$,$a_3'$ and $a_2$,$a_3$ in the refined
version of the PCFA (shown in Fig.~\ref{fig:acq-ref}) because
$\mathit{feasible}$  returns false for these edges.
\end{example}

\subsection{Context-Free Grammar Construction}\label{sec:cfg-const}

\begin{figure}
\begin{algorithm}[H]
  \begin{algorithmic}[1]
    \Procedure{ConstructCFG}{$\program$} \State \textbf{input:}
    $\program: \meths \rightarrow \cfas$, program.  \State
    \textbf{output:} $\cfgprog$, context-free grammar that abstracts
    $\program$.
      \vspace{0.04in}

      \State $(\terminals, \nonterms, \rules) \gets (\emptyset, \emptyset, \emptyset)$
      \State $\Theta \gets \{ (s, m) \mid s \in \finsof(\program[\meth])\}$ \label{ln:clones}
      \vspace{0.05in}
      \For{$(s_i, m) \in \Theta$}
        \State $( \terminals_i, \nonterms_i, \rules_i, \startsym_i) \gets \textsc{GenGrammar}(\program[\meth], s_i, \Theta)$ \label{ln:clone-cfg}
        \State $\terminals \gets \terminals \cup \terminals_i,\ \nonterms \gets \nonterms \cup \nonterms_i,\ \rules \gets \rules \cup \rules_i$ \label{ln:union}
        \If{\textsf{IsMain}($m$)}  $\startsym \gets \startsym_i$   
        \EndIf
      \EndFor
      \State \Return $(\terminals, \nonterms, \rules, \startsym)$
     \EndProcedure
     \vspace{0.04in}
  \end{algorithmic}
\end{algorithm}
\vspace{-2.2em}
\caption{Context-Free Grammar Construction}\label{fig:cfg-const}
\end{figure}

In this section, we describe how to extract a context-free grammar from the PCFAs.  As explained earlier, the main idea is to represent  relevant API invocations as
terminals in the grammar  so that words generated by the CFG correspond to all possible sequences of API calls issued by the program. Towards this goal,  we introduce one non-terminal symbol for each PCFA state  and generate  CFG productions according to the PCFA transitions. The resulting CFG abstraction is (selectively) path-sensitive in that we introduce as many non-terminal symbols for a method as it has exit states. Intuitively, different  non-terminals for  method $m$ correspond to different "summaries" conditioned upon facts that hold at $m$'s call sites.

The {\sc ConstructCFG} procedure is described in more detail  in Figure~\ref{fig:cfg-const}. 
It generates the program's CFG abstraction by  iterating over every exit state $s$ of each method $m$ and constructs a separate grammar for $(s,m)$ using the call to {\sc GenGrammar} at line~\ref{ln:clone-cfg}. The CFG  for the whole program is obtained as the union of all of the individual grammars, and the start symbol for $\cfgprog$ is the one associated with \code{main}.

\begin{figure}
  \[
  \begin{array}{lc}
    (1) &
    \irule{
      \begin{array}{c}
        (s, \sigma, s') \in \transof(\cfa) \ \ \ \ \ \neg \mathit{callStmt}(\sigma)\ \ \ \ \
        \cfa  \vdash s' \leadsto^{*} c \ \ \ \  \  \emph{Pred}(c) = \varphi
       \end{array}
    }{
      \cfa,  c , \Theta \vdash  \{\mathcal{S}_{\pred}, \mathcal{S}_{\pred}'\} \subseteq \nonterms_{c}\ \ \ \ \mathcal{S}_{\pred} \rightarrow \mathcal{S}_{\pred}' \in \rules_{c}
    }\\\\
    (2) &
    \irule{
      \begin{array}{c}
        (s, \sigma, s') \in \transof(\cfa) \ \ \ \ \sigma = api\_call\ \meth(\vec{v}) \ \ \ \
        \cfa \vdash s' \leadsto^{*} c \ \ \  \  \emph{Pred}(c) = \varphi
      \end{array}
    }{
      \begin{array}{c}
        \cfa, c, \Theta \vdash \{\mathcal{S}_{\pred}, \mathcal{S}_{\pred}'\} \subseteq \nonterms_{c}\ \ \sigma \in \terminals_{c}\ \ \mathcal{S}_{\pred} \rightarrow \sigma\ \mathcal{S}_{\pred}' \in \rules_{c}
      \end{array}
    }\\\\
    (3) &
    \irule{
      \begin{array}{c}
        \emph{feasible}(e, \pred') \ \ \  \  \emph{Pred}(c) = \varphi\\
        e = (s, \sigma, s') \in  \transof(\cfa)\ \ \ \ \sigma = call\ \meth'(\vec{v})\ \ \ \
        (c', m') \in \Theta \ \ \ \ \cfa \vdash s' \leadsto^{*} c\ \ \ \ \pred' = \predof(c')
      \end{array}
    }{
      \begin{array}{c}
         \cfa, c, \Theta \vdash  \{\mathcal{S}_{\pred}, \mathcal{S}_{\pred}'\} \subseteq \nonterms_{c}\ \ \ \ \mathcal{S}_{\pred} \rightarrow \methrule_{\pred'}'\ \mathcal{S}_{\pred}' \in \rules_{c}
      \end{array}
   }
   \\\\
   (4) &
   \begin{array}{ccc}
    \irule{
        s \in \initsof(\cfa)\ \ \ \ \cfa \vdash s \leadsto^{*} c \ \ \ \  \varphi = \emph{Pred}(c) 
    }{
     \cfa, c, \Theta \vdash  \methrule_{\pred} \rightarrow \mathcal{S}_{\pred} \in \rules_{c}\ \ \ \ \methrule_{\pred} \in \nonterms_{c} \ \ \ \ \ S_c =   \methrule_{\pred} 
    }  &
    (5) &
    \irule{\varphi = \emph{Pred}(c)}{
      \begin{array}{c}
        \cfa, c, \Theta \vdash  \mathcal{C}_{\pred} \rightarrow \epsilon \in \rules_{c}
      \end{array}
    }
  \end{array}
  \end{array}
  \]
  \caption{Rules for constructing CFG = $(\terminals_c, \nonterms_c, \rules_c, \startsym_c)$ given a exit state $c$ in PCFA $\cfa = (\stmts, \states, \transrel)$, and set $\Theta$. For a  PCFA state $s$ with predicate $\varphi$, the symbol $\mathcal{S}_\varphi$ denotes the corresponding non-terminal in the grammar.} \label{fig:clone-cfg}
\end{figure}

Figure~\ref{fig:clone-cfg} summarizes the {\sc GenGrammar} procedure
using inference rules of the following shape: 
\[ \cfa, c, \Theta \vdash \Delta_1,
\ldots, \Delta_n \]
Here,   the left-hand side of the turnstile
represents the arguments of the {\sc GenGrammar} procedure, and each
$\Delta_i$ is a set inclusion constraint for the CFG symbols  and productions. In more
detail, $\cfa$ is the PCFA for the current method, $c$ is an exit
state in $\cfa$, and $\Theta$ is a set of pairs $(s, m)$ where $s$ is
an exit state in method $m$'s PCFA. (As we will see shortly, {\sc
  GenGrammar} uses $\Theta$ to generate grammar productions for method
calls.) Given a state $s$ in the PCFA and predicate $\varphi$ labeling
exit state $c$,  {\sc GenGrammar}  generates a non-terminal
$\mathcal{S}_\varphi$ for each state in the PCFA.

\paragraph{Statements} The  first rule in Figure~\ref{fig:clone-cfg} applies to all statements that are \emph{not} function calls.
Since atomic statements other than API calls are not relevant
to our abstraction, this rule only captures control-flow
dependencies.
Specifically, let $(s, \sigma, s')$ be a PCFA edge where $\sigma$ is a non-call
statement. First, we introduce non-terminals $\mathcal{S}_\varphi, \mathcal{S}'_\varphi$ for states $s, s'$ and  add a production $\mathcal{S}_\varphi \rightarrow \mathcal{S}'_\varphi$ to capture that $s'$ is a successor of $s$. Observe that this rule (as well as the next two rules) have $\cfa \vdash s' \leadsto^* c$ as a premise because  non-terminals $\mathcal{S}_\varphi, \mathcal{S}'_\varphi$ should only be added to the grammar if $s, s'$ are  backward-reachable from exit state $c$.

\begin{example}
The production $\mathcal{A}_{1,\phi_1} \rightarrow
\mathcal{A}_{2,\phi_1}$ in Fig.~\ref{fig:refcfg} is generated using
the \emph{Stmt} rule based on the PCFA from Fig~\ref{fig:pcfaref}.
\end{example}

\paragraph{API} The next rule generates productions for  API calls. This rule is  similar to the previous one but with two key differences: First,  it also adds $\sigma$ to terminals $T_c$. Second, it generates the production $\mathcal{S}_\varphi \rightarrow \sigma \mathcal{S}'_\varphi$ instead of $\mathcal{S}_\varphi \rightarrow \mathcal{S}'_\varphi$ because $\sigma$ is relevant to the program's API usage.

\begin{example}
Consider the production $\mathcal{A}_{2,\phi_1} \rightarrow$
\code{\$1.lock()} $\mathcal{A}_{3,\phi_1}'$ from
Figure~\ref{fig:refcfg}. This production is generated due to the PCFA
transition $(a_2,\code{\$1.lock()},a_3')$ from
Figure~\ref{fig:pcfaref}.
\end{example}

\paragraph{Call.} 
The third rule applies to PCFA edges $(s, \sigma, s')$ where $\sigma$ is a
call to  method $m'$. Since there
are multiple "clones" of $m'$, let us consider one specific clone $c'$
with "summary" $\varphi'$. In this case, we generate the production
$\mathcal{S}_\varphi \rightarrow \mathcal{M'}_{\varphi'}
\mathcal{S'}_\varphi$, where $\mathcal{M'}_{\varphi'}$ is the start
symbol for the grammar associated with this clone of $m'$. However,
since predicate $\varphi'$ may be inconsistent with PCFA transition
$(s, \sigma, s')$, we first check whether this \emph{particular} clone
of $m'$ is feasible at this  call site. This is
done by requiring \emph{feasible}$(e, \varphi')$, 
defined as follows:
\begin{equation*}
  \mathit{feasible}((s, \code{call m'($\vec{v}$)}, s'), \varphi') \equiv SAT(\predof(s) \land \predof(s') \land \varphi')
\end{equation*}

\begin{example}
  Consider the PCFAs from Figure~\ref{fig:pcfaref}. Here, the
  production $\mathcal{F}_3 \rightarrow
  \mathit{Acquire}_{\{l1_{acq}=\$1\}} \mathcal{F}_4$ belongs to
  $\cfgprog$ because we have \emph{feasible}($e, l1_{acq}=\$1$) for
  the PCFA edge $e$ from $f_3$ to $f_4$. On the other hand, there is
  no production $\mathcal{F}_3 \rightarrow
  \mathit{Acquire}_{\{l1_{acq}=\$1\}} \mathcal{F}_4'$ because
   $
l1_{acq} = l \land l \neq \$1 \land l1_{acq} = \$1
   $ is unsatisfiable.
\end{example}

\paragraph{Entry and exit.} The last two rules in Figure~\ref{fig:clone-cfg} deal with the entry and exit states of the PCFA. Specifically, for any entry state $s$ of the PCFA that is backward-reachable from the target exit state $c$, we add a production $\mathcal{M}_\varphi \rightarrow \mathcal{S}_\varphi$, where $\mathcal{M}_\varphi$ corresponds to the start symbol of the grammar. For exit state $c$, we just add the empty production $\mathcal{C}_\varphi \rightarrow \epsilon$.

\begin{example}
For the PCFA from Figure~\ref{fig:acq-ref}, we add the production
$\mathit{Acquire}_{\phi_1} \rightarrow \mathcal{A}_{0,\phi_1}$ because
$a_0$ is an entry state that is backward-reachable from state
$a_3'$. Similarly, we add a production $\mathcal{A}'_{3, \phi_1}
\rightarrow \epsilon$ for exit state $a_3'$.
\end{example}

\section{Implementation}\label{sec:impl}

We implemented our approach in a prototype  called \toolname\ for analyzing Java programs.
\toolname\ is
implemented in Java  on top of the Soot infrastructure~\cite{soot} and
uses the technique of~\citet{grammar-comp} to perform grammar inclusion
checks. Our implementation
also makes use of SMTInterpol~\cite{smt-interpol} to obtain nested
interpolants  and leverages  Z3~\cite{z3} to 
determine satisfiability.

In the remainder of this section, we discuss some design choices and
optimizations that were omitted from the technical presentation.

\paragraph{Slicing input programs.}
Before running the verification algorithm presented in
Section~\ref{sec:verification}, \toolname\ uses slicing to improve
scalability. Specifically, we first identify all calls to the API
whose usage is being checked and then compute a backward slice of the
program with respect to those
statements~\cite{weiser-slicing,thin-slicing}.

\paragraph{From words to execution paths.}
As mentioned in Section~\ref{sec:verification}, we assume that the
$\mathit{InclusionCheck}$ method returns a derivation $d \in \cfgprog$
for a word $w \in \lang(\cfgprog) \setminus \lang(\cfgspec)$. In
practice, $\cfgprog$ tends to be highly ambiguous, so obtaining such a
derivation for $w$ can be computationally expensive. To address this issue, we first convert $\cfgprog$ to Chomsky Normal Form
(CNF)~\cite{chomsky1959certain} for which 
there is a polynomial
algorithm for obtaining a derivation~\cite{hopcroft2008introduction}, and we then map this
derivation back to the original grammar. While  mapping  the CNF derivation to the 
original grammar is not polynomial time, we have found this strategy to work much better in practice compared to 
directly searching for a derivation in the original grammar.

\paragraph{Handling pointers.} 
In our implementation, we model the heap by using a fairly standard array-based 
encoding that has been popularized by ESC-Java~\cite{esc-java}. 
Specifically, we introduce an array for each field and model
loads and stores using select and update functions
in the theory of arrays.

\paragraph{Obtaining PCFAs.}
Before
generating the PCFA of a method, we first perform a program transformation
similar to the one described by~\citet{pol-pred-abs} to enable polymorphic predicate
abstraction. Specifically, for each method in the program, we 
generate auxiliary variables, referred to as \emph{symbolic constants} in 
prior work, that track the initial value of variables on method entry. 
This transformation allows computing polymorphic interpolants
that can be reused across  call sites.

\paragraph{Optimizations.}
Rather than introducing one non-terminal symbol for every
\emph{program location}, we instead introduce one non-terminal for
each \emph{basic block} in order to make the resulting context-free
grammar smaller. Also, since the refinement algorithm may issue an
exponential number of satisfiability queries, we issue SMT queries in
parallel whenever possible and memoize the results of Z3
queries. Finally, since mapping parse trees from the CNF 
grammar back to the original version can be a performance bottleneck,
 we memoize partial results between refinement
iterations.

\paragraph{Limitations.}
Similar to other verification tools, \toolname\ models several Java
features (e.g., exceptions, reflection) in a ``soundy''
way~\cite{soundiness}. Furthermore, since \toolname\ models program
semantics using the combined theory of arrays and linear integer
arithmetic, it also conservatively over-approximates operations that
fall outside of this theory. In particular, \toolname\ introduces
appropriate uninterpreted functions to model operations that involve
non-integer variables (e.g., floats, doubles, etc.).


\newcommand{\wifi}{WifiLock}
\newcommand{\wake}{WakeLock}
\newcommand{\loc}{LocationManager}
\newcommand{\rlock}{Simple ReentrantLock}
\newcommand{\crlock}{ReentrantLock}
\newcommand{\canvas}{Canvas}
\newcommand{\jsongen}{Json}

\newcommand{\ExoPlayerWifiBug}{ExoPlayer ($\lightning$)}
\newcommand{\ExoPlayerWifiFix}{ExoPlayer}
\newcommand{\ExoPlayerWakeBug}{ExoPlayer ($\lightning$)}
\newcommand{\ExoPlayerWakeFix}{ExoPlayer}
\newcommand{\ConnectBotBug}{ConnectBot ($\lightning$)}
\newcommand{\ConnectBotFix}{ConnectBot}
\newcommand{\Hystrix}{Hystrix}
\newcommand{\Guice}{Guice}
\newcommand{\Bitcoinj}{Bitcoinj}
\newcommand{\Glide}{Glide}
\newcommand{\Atlas}{Atlas}
\newcommand{\RxTool}{RxTool}
\newcommand{\Litho}{Litho}
\newcommand{\Hadoop}{Hadoop}
\newcommand{\HystrixJsonOne}{Hystrix-1}
\newcommand{\HystrixJsonTwo}{Hystrix-2}

\newcommand{\ExoPlayerWifiBugStats}{ 
  Cex  & 63.8        & 0.3              &  27 & 2/6  & 40}
\newcommand{\ExoPlayerWifiFixStats}{ 
  Safe & 35.2        & 0.4              &  20 & 1/4  & 44}
\newcommand{\ExoPlayerWakeBugStats}{ 
  Cex  & 66.6        & 0.3              &  24 & 2/6  & 40}
\newcommand{\ExoPlayerWakeFixStats}{ 
  Safe & 47.0        & 0.5              &  20 & 2/6  & 39}
\newcommand{\ConnectBotBugStats}{    
  Cex  & 392.0       & 9.3              &  42 & 3/9  & 107}
\newcommand{\ConnectBotFixStats}{    
  Safe & 2336.5      & 18.3             &  48 & 3/12 & 133}
\newcommand{\HystrixStats}{          
  Safe & 20.7        & 0.3              &  9  & 1/3  & 21}
\newcommand{\GuiceStats}{            
  Safe & 221.5       & 2.4              &  25 & 3/9  & 92}
\newcommand{\BitcoinjStats}{         
  Safe & 3175.3      & 28.0             &  79 & 1/5  & 115}
\newcommand{\GlideStats}{            
  Safe & 562.6       & 544.7            &  8  & 1/3  & 19}
\newcommand{\RxToolStats}{           
  Safe & 56.4        & 43.8             &  1  & 1/1  & 3}
\newcommand{\LithoStats}{            
  Safe & 14.4       & 0.5              &  5  & 1/3  & 11}
\newcommand{\HadoopStats}{           
  Safe & 140.2       & 64.5             &  49 & 1/4 & 65}
\newcommand{\HystrixJsonOneStats}{   
  Safe & 64.4        & 2.7              &  48 & 2/4 & 62}
\newcommand{\HystrixJsonTwoStats}{   
  Safe & 24.2        & 0.6              &  31 & 1/4 & 55}

\section{Evaluation}\label{sec:eval}

To evaluate \toolname, we collected  real-world use cases of
 Java APIs  and conducted experiments
designed to answer the following research questions:
\begin{itemize}
    \item[{\bf RQ1:}] Can \toolname\ verify  the correct  usage of popular  Java APIs in real-world clients?
    \item[{\bf RQ2:}]  Does the proposed technique advance the state-of-the-art in software verification?
\end{itemize}

To answer these questions, we conduct two sets of experiments. For our first experiment, we collect five popular 
Java APIs with context-free specifications and evaluate \toolname\
on 10 widely-used Java programs that leverage at least one of these five APIs. In our second experiment, we compare \toolname\ against existing verification tools. However, since there is no off-the-shelf  technique that can directly verify correct usage of context-free API protocols,
we instrument (simplified versions of) these 10 Java programs with suitable assertions that enforce correct API usage, and we then try to discharge these assertions using state-of-the-art verification and model checking tools.

All of our experiments are run on an Intel Xeon CPU
E5-2640 v3 @ 2.60GHz machine with 132 GB of memory running the Ubuntu
14.04.1 operating system.

\begin{table}
\small
  \centering
  \begin{tabular}{|c||l|}
  \hline
    {\bf API Name} & {\bf  Specification} \\
    \hline
    ReLock   &
\begin{tabular}{lll}
$S$ & $\rightarrow$ & $\$1.\code{acquire()}\ S\ \$1.\code{release()}\ S$ \\
    & $\ | \ $      & $\epsilon$
\end{tabular} \\
\hline
\begin{tabular}{l}
Wifi \& Wake Lock\\
\end{tabular}&
\begin{tabular}{lll}
$S$ & $\rightarrow$ & $RC \ | \ \$1.\code{setRefCnt(false)} \ NC$ \\\\

$NC$ & $\rightarrow$ & $\epsilon \ | \ NA \ \$1.\code{release()}$\\
$NA$ & $\rightarrow$ & $\$1.\code{acquire()} \ NA \ \$1.\code{release()} \ NA \ | \ \$1.\code{acquire()} \ NA $\\
            & $\ | \ $ & $\ \$1.\code{acquire()}$\\\\

$RC$ & $\rightarrow$ & $\$1.\code{acquire()} \ RC \ \$1.\code{release()} \ RC \ | \ \epsilon$\\
\end{tabular} \\
    \hline
Canvas & 
\begin{tabular}{lll}
$S$ & $\rightarrow$ & $\epsilon \ | \ \$1.\code{save()} \ S \ \$1.\code{restore()} \ S  $ \\
    & $\ | \ $ & $\$1.\code{save()} \ S$\\
\end{tabular} \\
    \hline 
\begin{tabular}{l} Json Gen.\end{tabular} & 
\begin{tabular}{lll}
$S$ & $\rightarrow$ & $\epsilon \ | \ Obj \ | \ Arr \ | \ \$1.\code{writeString()}$  \\
    & $\ | \ $ & $\$1.\code{writeNumber()} \ | \ \$1.\code{writeBoolean()}$\\\\

$Obj$ & $\rightarrow$ & $\$1.\code{writeStartObject()} \ Fld  \ \$1.\code{writeEndObject()}$\\\\
$Fld$ & $\rightarrow$ & $\$1.\code{writeFieldName()} \ S \ Fld \ | \ \epsilon$\\\\

$Arr$ & $\rightarrow$ & $\$1.\code{writeStartArray()} \ Vals \ \$1.\code{writeEndArray()}$\\\\
$Vals$ & $\rightarrow$ & $\epsilon \ | \ S \ Vals$
\end{tabular} \\
\hline
  \end{tabular} 
  \vspace{0.1in}
    \caption{Java API Protocol Specifications}  \label{tbl:java-specs}
\end{table}

\subsection{API Specifications \& Benchmarks}\label{subsec:specs-bench}

For our evaluation, we consider the following five popular  Java APIs  whose correct usage is defined by a context-free specification:

\begin{enumerate}
    \item {\bf ReentrantLock}: a widely-used Java API that implements a reentrant lock
    \item    {\bf WakeLock}: a popular Android API that  allows the client application to  keep the Android device awake
    \item {\bf WifiLock}: another Android API that  allows the applications to keep the Wi-Fi radio awake
    \item {\bf Canvas}:  a graphics API (also for Android) that allows clients to create views and animations
    \item {\bf JsonGenerator}: a serialization library that allows serializing Java objects as JSON documents
\end{enumerate}

\begin{table}
  \centering
  \footnotesize
  \begin{tabular}{l || l || 
      c c c c c c}
    \multicolumn{2}{c||}{Benchmark Info} & \multicolumn{6}{c}{\toolname\ Statistics}\\
    \hline
     & Benchmark & 
    Output & \makecell{Total \\ Time} & \makecell{Incl. \\ Check}  & \# Steps & \makecell{Preds / BB \\ (Avg/Max)} & \# Preds\\
    \hline
    &&&&&&&\\
    \multirow{2}{*}{\rotatebox[origin=c]{90}{Wifi}} & \ExoPlayerWifiBug & \ExoPlayerWifiBugStats \\
                             & \ExoPlayerWifiFix & \ExoPlayerWifiFixStats \\
    &&&&&&&\\
    \multirow{4}{*}{\rotatebox[origin=c]{90}{\wake}}   & \ExoPlayerWakeBug & \ExoPlayerWakeBugStats \\
                             & \ExoPlayerWakeFix & \ExoPlayerWakeFixStats \\
                             & \ConnectBotBug    & \ConnectBotBugStats \\
                             & \ConnectBotFix    & \ConnectBotFixStats \\
    &&&&&&&\\
    \multirow{3}{*}{\rotatebox[origin=c]{90}{\makecell{ReLock}}} & \Hystrix          & \HystrixStats \\
                             & \Guice            & \GuiceStats \\
                             & \Bitcoinj         & \BitcoinjStats \\
    &&&&&&&\\
    \multirow{4}{*}{\rotatebox[origin=c]{90}{\canvas}} & \Glide            & \GlideStats \\
                             & \RxTool           & \RxToolStats \\
                             & \Litho            & \LithoStats \\
    &&&&&&&\\
    \multirow{3}{*}{\rotatebox[origin=c]{90}{\jsongen}} & \Hadoop            & \HadoopStats \\
                             & \HystrixJsonOne            & \HystrixJsonOneStats \\
                             & \HystrixJsonTwo         & \HystrixJsonTwoStats \\
  \end{tabular}
  \vspace{1em}
  \caption{Results for \toolname. Under the ``output" column, "Cex"
    denotes a counterexample and "Safe" indicates that the benchmark
    was verified. Total time indicates end-to-end running time in
    seconds, and ``Incl. check'' shows the time spent performing
    grammar inclusion checking queries. ``\# Steps'': number of
    refinement steps, ``Preds / BB'': Average and max predicates
    tracked per basic block, ``\# Preds'': total number of predicates
    tracked.
  } \label{table:res-cfpchecker}
\end{table}

\begin{table}
  \centering
  \footnotesize
  \begin{tabular}{l || l || c c c || c c c || c c c}
        &      & \multicolumn{3}{c||}{JayHorn}        & \multicolumn{3}{c||}{JPF-BugFinder} & \multicolumn{2}{c}{JPF-Verifier}\\
    \hline
     & Bench. & Out. & \makecell{Correct \\ Output?} & Time & Out. & \makecell{Correct \\ Output?} & Time & Out. & \makecell{Correct \\ Output?} & Time\\
    \hline
    &&&&&&&&&\\
    \multirow{2}{*}{\rotatebox[origin=c]{90}{Wifi}} & \ExoPlayerWifiBug & Unknown & \xmark & 1212.4 & Safe & \xmark & 0.6 & TO & \xmark & - \\
                             & \ExoPlayerWifiFix & Unknown & \xmark & 1119.0 & - & - & - & TO & \xmark & - \\
    &&&&&&&&&&\\
    \multirow{4}{*}{\rotatebox[origin=c]{90}{\wake}}   & \ExoPlayerWakeBug & Unknown & \xmark & 1105.3 & Safe & \xmark & 0.5  & TO & \xmark & -\\
                             & \ExoPlayerWakeFix & Unknown & \xmark & 1162.8 & - & - & - & TO & \xmark & - \\
                             & \ConnectBotBug    & Unknown & \xmark & 333.0 & Safe & \xmark & 0.7 & TO & \xmark & - \\
                             & \ConnectBotFix    & Unknown & \xmark & 164.0 & - & - & - & TO & \xmark & - \\
    &&&&&&&&&&\\
    \multirow{3}{*}{\rotatebox[origin=c]{90}{\makecell{ReLock}}} & \Hystrix          & Unknown & \xmark & 12530.5 & & & & TO & \xmark & - \\
                             & \Guice            & Safe & \cmark & 2383.1 & - & - & - & TO & \xmark & - \\
                             & \Bitcoinj         & OM & \xmark & - & - & - & - & TO & \xmark & - \\
    &&&&&&&&&&\\
    \multirow{4}{*}{\rotatebox[origin=c]{90}{\canvas}} & \Glide            & TO & \xmark & - & - & - & - & TO & \xmark & - \\
                             & \RxTool           & TO & \xmark & - & - & - & - & TO & \xmark & - \\
                             & \Litho            & OM & \xmark & - & - & - & - & TO & \xmark & - \\
    &&&&&&&&&&\\
    \multirow{3}{*}{\rotatebox[origin=c]{90}{\jsongen}} & \Hadoop            & OM & \xmark & - & - & - & - & TO & \xmark & - \\
                             & \HystrixJsonOne          & Safe  & \cmark & 931.7 & - & - & - & TO & \xmark & - \\
                             & \HystrixJsonTwo         & OM & \xmark & - & - & - & - & TO & \xmark & - \\
  \end{tabular}
  \vspace{1em}
  \caption{Results for other safety-checking tools on simplified
    benchmarks using a time limit of 8 hours and memory limit of 16 GB per benchmark. Values in the ``Out.''  columns have the following
    meaning: Cex: feasible counterexample found, Safe: no violations
    found, Unknown: unable to produce neither a counterexample nor a
    proof of correctness, TO: timeout, OM: out of memory. We use a
    ``-'' to indicate that a value is not applicable. All execution
    times are in seconds.} \label{table:res-othertools}
\end{table}

\paragraph{Specifications.} Table~\ref{tbl:java-specs} presents the context-free protocols that
clients of these APIs must adhere to.  As used as a running example
throughout the paper, \code{ReentrantLock} requires calls
to \code{acquire} and \code{release} to be balanced, and failure to
follow this protocol results in deadlocks.  The next two APIs,
namely \code{WakeLock} and \code{WifiLock}, have the exact same
specification and can be used in two different modes of operation,
reference-counted and non-reference-counted. The specification for the
first mode is the same as \code{ReentrantLock} (i.e., each call
to \code{acquire} must be matched by a call to \code{release}). On the
other hand, the second mode is enabled by the
call \code{setRefCnt(false)} and requires the usage pattern to be of
the form $\code{acquire}^n \ \code{release}^m$ where $m \leq n$ and
$n \geq 1 \rightarrow m \geq 1$.  For both the \code{WakeLock}
and \code{WifiLock} APIs, failure to follow the protocol causes
resource leaks (e.g., the application drains the phone's battery).
For the Android \code{Canvas} API, its documentation states "It is an
error to call \code{restore()} more times than \code{save()} was
called."; thus, its specification is of the form
$\code{save}^n \ \code{restore}^m$ where $m \leq n$.  Failure to
follow this protocol results in a run-time exception.  The last API,
called \code{JsonGenerator}, has a relatively complex specification
and requires clients to call API methods
(e.g., \code{writeStartObject()}, \code{writeEndObject()}, etc.) in
accordance with the JSON schema, that is, calls that start
(e.g., \code{writeStartObject()}) and end
(e.g., \code{writeEndObject()}) a JSON element must be matched and
properly nested. Failure to follow this protocol results in the
generation of invalid JSON files.

\paragraph{Clients.}
To evaluate our approach on realistic usage scenarios of these
libraries, we collected ten open-source Java programs that use these
APIs. The clients used in our evaluation are widely-used programs such
as Hadoop/MapReduce (a distributed computing framework), ExoPlayer (an
Android media player), ConnectBot (secure shell client), Netflix
Hystrix (a fault tolerance library for distributed environments),
etc. These applications contain an average of 571 classes and 36,390
lines of Java code (equivalently, 56,114 Soot bytecode
instructions). Recall that we first slice the input program before we
run any of the verifiers (Section~\ref{sec:impl}). The effectiveness
of slicing varies across different benchmarks with the resulting
slices containing between 3-126 classes.

\subsection{ Results for CFPChecker}\label{subsec:results}

Table~\ref{table:res-cfpchecker} summarizes our main experimental results
for \toolname. As we can see from the "Output" column, two of the benchmarks 
(namely, ExoPlayer and ConnectBot) actually misuse at least one API. 
For these benchmarks (indicated with the $\lightning$ symbol), we also 
construct a correct variant (indicated without the $\lightning$ symbol) by manually repairing the original bug. 
We now summarize the key take-away lessons from this evaluation.  \\

\noindent
\emph{Verification results for correct benchmarks.}
\toolname\ is able to successfully verify  all benchmarks that 
correctly use the relevant API. On average, \toolname\ takes 9.3 minutes to 
verify each application, and its median verification time is 60.4 seconds. 
Most of the benchmarks require a significant number of refinement steps, with 
22.5 being the median number of iterations. \\

\noindent
\emph{Counterexamples for buggy benchmarks.} 
  As shown in Table~\ref{table:res-cfpchecker},
 \toolname\ reports three API protocol violations. Two of these violations are in ExoPlayer, which misuses both 
 the WifiLock and WakeLock libraries, and the other violation is in ConnectBot, which misuses WakeLock. 
Using the counterexamples reported by \toolname, we were able to identify the root causes of these errors. 
Interestingly, all three violations share the same root cause. In particular, 
ExoPlayer and ConnectBot both call the  \code{acquire}  method in \code{onStart}  and the corresponding \code{release} method in \code{onStop} of an Android Activity~\cite{android-activity}; however, they fail to 
release the lock in the \code{onPause} method. Since the Android framework may kill a paused activity  when there is memory pressure (see Figure~\ref{fig:android}}), the calls to \code{acquire} and \code{release} are not guaranteed to be matched. Thus, this  bug can result in resource leaks in the form of unintended battery usage.  One simple way to fix this issue is to move the \code{acquire} and \code{release} calls  to the \code{onResume} and \code{onPause} methods instead. In fact, a later version of the ConnectBot application fixes the bug in exactly this way; however,  \toolname\ identified a previously unknown issue in ExoPlayer.  \\

\begin{wrapfigure}{r}{0.4\textwidth}
\vspace{-12pt}
    \centering
      \includegraphics[width=0.35\textwidth]{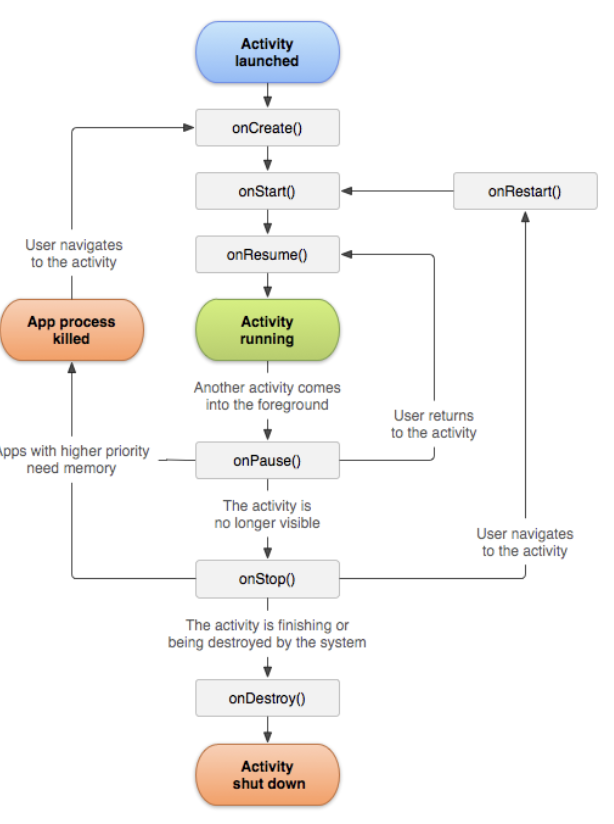}
    \caption{Android lifecycle callbacks}\label{fig:android}
\end{wrapfigure}
\FloatBarrier

\noindent
\emph{Summary.} As these experiments indicate, verifying the correct usage of context-free API protocols is of practical relevance 
in real-world applications. Our results demonstrate that \toolname\ is practical enough to verify the correct usage of context-free API protocols in widely-used Java applications and that it can provide useful counterexamples when the property is violated.

\subsection{Comparison with Baselines}\label{subsec:instr}
Since there is no existing tool for verifying correct usage of context-free  protocols, we cannot \emph{directly} compare our approach against existing baselines. Thus, we construct our own baselines using the following strategy: First, we instrument each program with suitable assertions that enforce correct API usage (as explained below). Then, we try to discharge these assertions using existing safety verifiers. In this section, we report on our experience implementing and evaluating these baselines using JayHorn~\cite{jayhorn}  and JavaPathFinder~\cite{jpf-svcomp} as  the assertion checking back-ends.  Note that JayHorn is a  state-of-the-art Java verification tool based on constrained Horn clause solvers, and JavaPathFinder is a mature model checking tool for Java developed by NASA.

\paragraph{Assertion instrumentation.}  As mentioned earlier, the goal of our instrumentation is to generate a program $P'$ such that $P'$  is free of assertion failures if and only if the original program $P$ obeys a given context-free API protocol. 
One obvious way to perform this instrumentation is to represent the API protocol using a push-down automaton (PDA) and then  introduce variables that keep track of the PDA's state and stack contents. In fact, this strategy has been used in prior work for performing {run-time checking} of correct API usage~\cite{mop-framework,jin2012javamop,meredith2010efficient}. However, since static techniques are typically not very good at reasoning about dynamically allocated data structures (e.g., arrays),  we instead manually perform   \emph{API-specific instrumentation} that  avoids introducing arrays whenever possible. For example,
for \code{ReentrantLock}, we only introduce an integer counter \code{c} that is incremented (resp. decremented) on calls to \code{lock} (resp. \code{unlock}). Then, to enforce the  protocol,  we assert that \code{c} is positive when \code{unlock} is called and that it is zero at the end. Using similar strategies, we can perform instrumentation using \emph{only integer variables} for all APIs except  one   (\code{JsonGenerator}).

Please note that the instrumentation strategy described above requires  human ingenuity and cannot be used to \emph{automatically} check arbitrary API protocols.  However, it is designed to be \emph{as favorable as possible} to assertion checking tools and represents the best possible scenario for existing verifiers.

\paragraph{Slicing and pre-processing.} Recall from Section~\ref{sec:impl} that \toolname\ incorporates a slicing step to enable better scalability. To ensure a fair comparison, we use the exact same slicing procedure before feeding the instrumented programs to the assertion checking tools. However,  even the slices contain several features that cannot be handled by at least one of these two tools. For instance, if we provide the generated slices to JayHorn as-is, it crashes on most benchmarks. Similarly, JavaPathFinder throws an exception whenever it encounters a call to a method whose source code is not available. Therefore, in order to use JayHorn and JavaPathFinder as our assertion-checking back-ends, we further manually simplified our benchmarks from Section~\ref{subsec:specs-bench} in a way that preserves the relevant  API usage-related behavior.  

\paragraph{Configurations of JavaPathFinder.} The JPF tool can be configured in several different ways. 
In this experiment, we use two configurations of JPF for the assertion-checking back-end. The first variant, henceforth called JPF-BugFinder, is a version of Java
Pathfinder that is configured with the default settings for
SVCOMP~\cite{jpf-svcomp}. Note that these settings are suitable for bug finding but not for verification.
To use JavaPathFinder as a verifier, we also consider a second variant where we do not restrict its search space. We refer to this variant as JPF-Verifier. Since JPF-BugFinder is \emph{not} a verifier, we only evaluate it on the buggy benchmarks.

\paragraph{Overall results.} The results of our comparison against these three baselines (JayHorn, JPF-BugFinder, and JPF-Verifier) are presented in Table~\ref{table:res-othertools}. The key take-away from this experiment is that none of the three baselines  are effective at successfully verifying (or finding bugs in) our experimental benchmarks despite manual simplification and instrumentation. In what follows, we describe the results for each of the three baselines in more detail.

\paragraph{Results for JayHorn.}
JayHorn verifies only 2 of the 15 benchmarks. For 7 benchmarks,
JayHorn reports a possible assertion violation, but is unable to
provide a counterexample. For the remaining 6 benchmarks, JayHorn
either fails to terminate within the 8-hour time limit or runs out of
memory. Surprisingly,  one of the benchmarks (Hystrix-1) that can
be verified by JayHorn uses the complex \code{JsonGenerator} API. We
conjecture that JayHorn can verify this benchmark more easily because
it does not involve recursion and all relevant API usage is confined
within a single method.

\paragraph{Results for JPF} When using JPF as a bug finder with the default SV-COMP settings, it fails to find the assertion violations in the three buggy benchmarks and reports them as safe. This result suggests that the API protocol violations  in the buggy benchmarks are non-trivial to find.  On the other hand, JPF-Verifier 
fails to terminate within the eight hour time limit on any benchmark, and it also fails to find the errors in the three buggy benchmarks.

\section{Related Work}
We now survey prior work related to  this paper and highlight their differences from our approach.

\paragraph{Typestate analysis.} Most prior work on checking correct API usage focuses on protocols that can be expressed as a regular language~\cite{strom1986typestate,typestate-fink, ball2006thorough}. This problem is  commonly known as \emph{typestate analysis}~\cite{strom1986typestate}, and researchers have proposed many different approaches to solve this problem ranging from language-based solutions~\cite{aldrich2009,bierhoff2007,deline2004typestates,garcia2014} to program analysis~\cite{typestate-fink,bodden2010efficient,bodden2012clara} and model checking~\cite{ball2001automatically,ball2006thorough} to  bug finding~\cite{joshi2008,yu2018symbolic} and run-time verification~\cite{trace-matching,mop-framework}. {Some prior works have also proposed various generalizations of typestate properties, such as multi-object protocols~\cite{multi-object1,beckman2011empirical}.

\paragraph{Run-time checking  for context-free properties.} 
There have been some proposals, particularly in the context of run
time techniques, for checking correct usage of APIs with
context-free specifications. In particular, these
techniques~\cite{pql,hawk,meredith2010efficient,jin2012javamop}
instrument the program with monitors that keep track of PDA states
and dynamically check for property violations. As shown in our experiments,
such an instrumentation-based approach does not work well for static verification.

\paragraph{Interface grammars.} Prior work has proposed 
\emph{interface grammars} for specifying the sequences of method invocations
that are allowed by a  library~\cite{hughes2008interface}. Given an interface grammar for a component, this technique
generates a stub that can be used to analyze clients of that component. While this work
addresses a somewhat different problem, their technique bears similarities to our
instrumentation-based baseline, which, as shown in our evaluation, does not work well in our setting.

\paragraph{CEGAR}
Similar to all CEGAR approaches~\cite{cegar1,cegar2, blast,lazy-abs,abs-from-proof,heizmann2018ultimate,sea-horn,hsf}, our method starts with a coarse  abstraction and iteratively 
refines it based on spurious counterexamples. However, our method differs from most CEGAR-based techniques
in that we abstract the program using a context-free grammar and perform refinement by adding new non-terminals and productions to the grammar. 

\paragraph{Abstracting programs with CFGs.} Similar to our approach, prior work on has explored abstracting programs using context-free grammars.  For example, Long et al~\cite{long-rupak}  use CFG inclusion checking to  prove assertions in concurrent programs; however, their approach does \emph{not} refine the program's CFG abstraction. Instead, they use a CEGAR approach to solve the  CFG inclusion checking problem through  a sequence of increasingly more precise regular approximations. Furthermore, since they address a different problem, their CFG abstraction is quite different from ours. Another related approach in this space is the work by~\citet{bounded-under} which  also abstracts recursive multi-threaded programs
with a context-free grammar. In contrast to our work, they
\emph{under-approximate} the reachable state space of recursive
multi-threaded programs by generating a succession of bounded
languages that under-approximate the program's CFG.

\paragraph{Interpolants.} Similar to many CEGAR-based
techniques~\cite{abs-from-proof,impact,sea-horn,mcmillan2005applications},
our method also uses \emph{Craig interpolation} to learn new
predicates when a spurious counterexample is discovered.  Given an
unsatisfiable formula $\phi \land \psi$, a Craig interpolant is
another formula $\chi$ such that $\phi \Rightarrow \chi$ is valid and
$\psi \land \chi$ is unsatisfiable. Prior work has proposed many
variants of Craig interpolation, including sequence
interpolants~\cite{abs-from-proof}, tree
interpolants~\cite{tree-interp}, nested
interpolants~\cite{nested-interp}, and DAG
interpolants~\cite{dag-interp}. In this paper, we leverage the notion
of nested interpolants introduced in~\citet{nested-interp} to infer
useful predicates for recursive procedures; however, our refinement
procedure uses these nested interpolants in a very different way.

\paragraph{Control flow refinement.} 
Our refinement technique bears similarities to prior work on
control-flow
refinement~\cite{control-flow-ref1,control-flow-ref2,control-flow-ref3,control-flow-ref4}. Similar to \toolname, these techniques clone program locations 
in order to exclude infeasible paths from
their program abstraction.  However, all of these techniques abstract the 
program using a regular language, and, with the exception of~\citet{control-flow-ref4},
they apply control-flow refinement within a
single procedure and only inside loops. On the other hand, 
\citet{control-flow-ref4} refines cost equations rather than the
 program abstraction. In contrast to all of these techniques, our technique refines the CFG abstraction, performs cloning
inter-procedurally, and supports arbitrary recursion.

\paragraph{Directed proof generation.}
Directed proof
generation (DPG) techniques simultaneously maintain an
under- and an over-approximation of the program and evolve them in a
synergistic way~\cite{mcveto}. Specifically, the under-approximation
is used to find feasible counterexamples and learn new predicates
which refine the over-approximation. Conversely, the
over-approximation is used to generate proofs and guides
counterexample search to paths that are more likely to fail. Similar
to our technique, DPG-like approaches~\cite{synergy,dash,smash,mcveto}
annotate their control-flow representation with logical predicates and
clone program locations. Our approach differs from these techniques in the way it
discovers potential counterexamples and  new
predicates. In particular, \toolname\ performs an inclusion check between
two context-free languages in order to discover a potential API
violation and uses interpolation to discover new predicates. In contrast,
DPG techniques use a combination of graph reachability and test-case
generation.

\paragraph{Equivalence of context-free languages.} Our approach leverages prior work on checking containment between two context-free languages~\cite{grammar-comp,korenjak1966simple,olshansky1977direct,harrison1979equivalence}. While checking inclusion between arbitrary context-free languages is known to be undecidable, prior work has studied decidable fragments, such as LL$(k)$ grammars~\cite{olshansky1977direct}. Our implementation makes use of the algorithm by~\citet{grammar-comp}, which in turn extends prior algorithms for LL grammars. While our technique is orthogonal to checking context-free language containment, it would directly benefit from advances and new algorithms that address this problem.

\paragraph{CFL reachability.}
CFL reachability techniques represent inter-procedural control flow
using a graph representation and then filter out paths that do not
conform to valid call-return structures~\cite{reps1995precise}.  This
formulation has been used to express several fundamental program
analyses, such as context-sensitive pointer
analysis~\cite{multi-cfl-3, multi-cfl-4}. However, adding another
level of sensitivity (e.g., field-sensitivity,) requires solving two
separate CFL reachability problems on the same execution path, which
is known to be undecidable~\cite{reps-undecidable}; hence many
techniques over-approximate one of the two CFL reachability
problems~\cite{multi-cfl-1, multi-cfl-2, multi-cfl-3, multi-cfl-4,
  multi-cfl-5, multi-cfl-6, dyck-graph-simpl} or propose a more
precise generalization of CFL
reachability~\cite{hao-tang,lcl-reach}. Similar to these techniques,
we also need to reason about two context-free properties, namely
matching call-return statements and matching between calls to API
methods. However, this work addresses a somewhat different problem:
instead of filtering out execution paths that do not belong to both
context-free languages, our technique verifies that \emph{every} API
sequence generated by an execution path with a valid call-return
structure belongs to the context-free specification.

\paragraph{Visibly pushdown automata}
Many model checking techniques use variants of pushdown automata, such
as \emph{visibly pushdown automata} (VPAs) or \emph{nested word
  automata} (which are equally expressive), to  reason about
inter-procedural control flow~\cite{chen2002mops, esparza2003model,
  lazy-abs,alur2004visibly}.  Visibly pushdown and nested word
automata are less expressive compared to PDAs;
however, they enjoy various decidability and closure properties for
operations like intersection and complement.
However, VPAs cannot capture two separate context-free properties
on the same execution path, which  is
required by our technique.

There have been some theoretical studies that extend VPAs to use
multiple stacks~\cite{2vpa,multistack-vpa1,multistack-vpa2}, and such
multi-stack VPAs are significantly more expressive compared to
standard VPAs.  For example, 2-VPAs~\cite{2vpa} (i.e., VPAs with two
stacks) accept some context-sensitive languages that are not
context-free and some context-free languages that are not accepted by
any VPA. We believe that it would be possible to solve the problem
addressed in this paper using 2-VPAs, however, the emptiness problem
for 2-VPAs is also undecidable.

\section{Conclusion}

We presented a technique for verifying the correct usage of context-free API protocols. Our approach abstracts the program as a context-free grammar  representing feasible API call sequences and checks whether this CFG is contained inside the specification CFG. 
Our method follows the CEGAR paradigm and  lazily refines the CFG by introducing new productions and non-terminals that represent clones of methods and program locations.

We  implemented the proposed method in a tool called \toolname\ and performed an experimental evaluation on 10 widely-used Java applications that utilize at least one of 5 popular APIs with context-free specifications.  Our evaluation shows that \toolname\ can verify all correct usage patterns while finding counterexamples for the buggy clients. We also implement and evaluate three baselines that reduce this problem to assertion checking and then use oft-the-shelf safety checking tools to discharge these assertions. Our experience with these baselines suggests that our method is more amenable to automation than alternative approaches that reduce the problem to assertion checking.

\section*{Acknowledgments}

We would like to thank our shepherd Pierre Ganty, the anonymous
reviewers, Kenneth McMillan, Swarat Chaudhuri, and the members of the
UToPiA group for their insightful feedback. This work is supported in
part by NSF Award \#1453386 and DARPA Award \#FA8750-20-C-0208.

\bibliography{main}

\iffull{
  \appendix
  \section{Appendix: Correctness of Program Instrumentation}

In this section we prove the correctness of the program
instrumentation presented in Section~\ref{sec:instr}. The proof of
Theorem~\ref{thm:equi-safe} follows from lemma~\ref{lemma:equi-1}.

\begin{lemma}\label{lemma:equi-1}
  If we have \ $\typeor, \cfgspec \vdash \program \hookrightarrow
  \program'$, then for every trace $\ptrace \in \program$ and
  $\instspec \in \inst(\cfgspec, \ptrace)$ there exists a trace
  $\ptrace' \in \program'$ such that $\project(\ptrace, \instspec) =
  \project(\ptrace', \instspec)$.
\end{lemma}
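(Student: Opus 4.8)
The plan is to prove the lemma by explicitly constructing, for each trace $\ptrace \in \program$ and each instantiation $\instspec \in \inst(\cfgspec, \ptrace)$, a matching trace $\ptrace' \in \program'$ in lockstep, and then arguing that the two traces emit exactly the same sequence of terminal symbols under $\project$. Let $\instspec$ assign the concrete value $c_i$ to each wildcard $w_i \in \wildcardsof(\cfgspec)$. The first step is to fix the non-deterministic initialization introduced by the $(Class)$ rule: since that rule adds, for every $w_i$, a static field $\$i$ initialized to $*$, I choose the initializing value to be $c_i$. Because no instrumentation rule ever writes to these fresh fields after initialization, each $\$i$ retains the value $c_i$ throughout $\ptrace'$.

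Next I would build $\ptrace'$ by induction on the length of $\ptrace$, maintaining the invariant that after each step the states of $\ptrace$ and $\ptrace'$ agree on all original variables and fields, differing only on the fresh fields $\$i$, which hold $c_i$ in $\ptrace'$. For statements that are not API calls, the $(Seq)$, $(If)$, $(Method)$ rules and the implicit identity rule $s \hookrightarrow s$ leave the statement unchanged, so $\ptrace'$ takes the identical step and the invariant is preserved; moreover, after substituting the current state, such statements are never elements of $\terminals$, so they contribute nothing to either projected word. For an API statement $s = \code{api\_call } v.m(\vec v)$ evaluated in state $\sigma$, the $(API)$ rule replaces $s$ by the cascade $\code{if }(g_1)\ t_1\ \ldots\ \code{else if }(g_k)\ t_k$, where $g_j = \getguard(t_j, s) = \bigwedge_\ell \$i_{j\ell} = v_\ell$. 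Using the invariant together with $\sigma'(\$i) = c_i$, the guard $g_j$ holds in the instrumented state iff the concrete argument tuple $\sigma(\vec v)$ coincides with the instantiated wildcard pattern of $t_j$; hence at most one branch fires. Since API calls are assumed (w.l.o.g.) to have no side effects on program state, the step leaves all original variables unchanged and preserves the invariant.

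Finally I would show that the filtered sequences coincide term by term. A step of $\ptrace$ contributes a terminal to $\project(\ptrace, \instspec)$ exactly when $s[\sigma(\vec v)/\vec v]$ is a terminal of $\instspec$, i.e.\ when $\sigma(\vec v)$ matches the instantiated pattern of some $t_j$ --- and this is precisely the condition under which the $t_j$ branch is taken in $\ptrace'$. In that case the statement actually executed in $\ptrace'$ is $t_j$, which literally mentions the wildcard fields $\$i$; evaluated in the instrumented state (where $\$i = c_i$) it yields the very same instantiated terminal that $\ptrace$ contributes. When no pattern matches, the original API statement is filtered out (it is not in $\terminals$) and no branch fires in $\ptrace'$, so neither trace emits a symbol. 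Consequently $\project(\ptrace, \instspec) = \project(\ptrace', \instspec)$, as required, and the lemma follows.

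I expect the main obstacle to be the $(API)$ case: one must carefully relate the literal wildcard symbols inserted into each $t_j$ by the instrumentation to the concrete instantiated terminals of $\instspec$, while simultaneously handling possible aliasing among the arguments $\vec v$. This is exactly where pinning the fresh fields $\$i$ to the values $c_i$ and invoking the no-side-effect assumption on API calls is essential, since it is what guarantees both that the guard semantics coincide with the pattern-matching condition used by $\project$ and that the emitted terminal is identical in the two traces.
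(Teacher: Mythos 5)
Your proposal is correct and follows essentially the same route as the paper's proof: fix the non-deterministic initialization of the wildcard fields to the instantiation's concrete values, then construct $\ptrace'$ in lockstep with $\ptrace$, copying non-API statements verbatim and resolving the instrumented conditional at each \code{api\_call} so that a branch fires exactly when the substituted statement is a terminal of $\instspec$. Your version is somewhat more explicit about the state-agreement invariant and the term-by-term equality of the projected words, but the construction and the key observations (wildcard fields are never rewritten, API calls are side-effect free, guard truth coincides with the projection's filtering condition) are the same.
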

\begin{proof}
  We assume the existence of a map $\Lambda: Loc \rightarrow Loc$,
  which given a program location in $\program$ it returns the
  corresponding location in $\program'$. For \code{api_call}
  statements, $\Lambda$ returns the location of the instrumented
  \code{if}-then-\code{else} statement as presented in
  Figure~\ref{fig:instr}.

  Now, given a trace in $\program$ of the form $\ptrace = \langle s_1,
  \sigma_1 \rangle ... \langle s_n, \sigma_n \rangle$ and a
  specification instantiation $\instspec \in \inst(\cfgspec,
  \ptrace)$, we show how to construct a trace $\ptrace' \in \program'$
  with the same trace projection as $\ptrace$. For the rest of the
  proof, we will only describe the statements we append in $\ptrace'$
  without showing the corresponding program states since each program
  state can be obtained by using the operational semantics (i.e.,
  $\Downarrow$ relation) of the preceding (statement, prog. state)
  pair in $\ptrace'$. In what follows, we assume that: 1. $\vec{v}$
  are the concrete values that instantiate all the wildcards in the
  specification, i.e., $\instspec = \cfgspec[\vec{v}/\vec{w}]$.
  2. For any $\langle s_i, \sigma_i \rangle \in \ptrace$ we assume
  that $\sigma_{i+1}$ is the result of executing $s_i$ under
  $\sigma_i$, i.e., $\langle s_i, \sigma_i \rangle \Downarrow
  \sigma_{i+1}$ and 3. For any statement $s_i$ in $\ptrace$ we assume
  that $l_i$ represents its location in $\program$.

  First, for every wildcard field $w_i$ in $\program'$ we append the
  assignment \code{$w_i$ = $v_{i}$} to $\ptrace'$, where $v_{i}$ is
  the i-th element of vector $\vec{v}$. Next, we iterate over every
  pair $\langle s_i, \sigma_i \rangle$ in $\ptrace$ and update
  $\ptrace'$ according to the following rules:
  \begin{enumerate}
     \item If $s_i$ is of the form \code{api\_call m($\vec{v1}$)},
       then for every \code{if (guard($t_i, s_i$)) $t$} in the
       statement at $\Lambda[l_i]$ update $\ptrace'$ as follows: If we
       have that \code{guard($t_i, s_i$)} evaluates to true for any
       pair of $s_i, t_i$ under $\sigma_i$\footnote{This can be easily
         determined by checking whether each value in $\vec{v1}$
         equals to value that instantiates the corresponding wildcard
         in $\vec{v}$.}, then append $t_i$ to $\ptrace'$. Otherwise,
       append \code{skip} to $\ptrace'$.
     \item Otherwise, append $s_i$ to $\ptrace'$.
  \end{enumerate}

  Intuitively, the first step, i.e., the field initialization, ensures
  that the wildcard fields in $\program'$ have the same value as the
  instantiated specification $\instspec$, whereas, the second step
  ensures that the branches taken in $\ptrace'$ are semantically
  consistent with the initialization of the wildcard fields. It is
  easy to see that $\ptrace'$ can be generated by $\program'$ and that
  $\project(\ptrace, \instspec) = \project(\ptrace', \instspec)$ since
  the only guards that will evaluate to true are the ones where
  \emph{all} the program variables of the API call equal to some value
  in $\vec{v}$.
\end{proof}

  \section{Appendix: Soundness and Progress Proofs}

In this section, we prove Theorems~\ref{thm:soundness}
and~\ref{thm:completeness}, which state the soundness and progress
of our approach respectively. Theorem~\ref{thm:soundness} follows from
Lemmas~\ref{lem:deriv2seq}
and~\ref{lem:sound}. Theorem~\ref{thm:completeness} follows from
Lemmas~\ref{lem:deriv2seq} and~\ref{lem:compl}.

\begin{definition}
  {\bf{PCFA State Sequence:}} A PCFA state sequence is a tuple of the
  form $(s_0 ... s_n, \rightsquigarrow)$, where each $s_i$ is a PCFA
  state of some method $m \in \program$ and $\rightsquigarrow$ is a
  nesting relation over the indices~$[0, n]$.
\end{definition}

\begin{definition}\label{def:feas-state-seq}
  \begin{sloppypar}
  {\bf{$\program$-feasible State Sequence:}} We say that a state
  sequence is $\program$-feasible and write $\program \vdash (s_0
  ... s_n, \rightsquigarrow)$ if and only if the following hold:
  \begin{enumerate}
    \item For all $i,\ j$ such that $i \rightsquigarrow j$, there
      exists an edge $(s_i, \code{call m}, s_j)$ in
      $\program$. Furthermore, we have $s_{i+1} \in
      \initsof(\program[m])$, $s_{j-1} \in \finsof(\program[m])$, and
      $SAT(\predof(s_i) \land \predof(s_{j-1}) \land \predof(s_j))$.
    \item For all $i \in [0, n-1]$ for which $\rightsquigarrow$ is
      undefined, there exists an edge $(s_i, \sigma, s_{i+1})$ in
      $\program$.
  \end{enumerate}
  \end{sloppypar}
\end{definition}

\begin{definition}\label{def:feas-path}
  {\bf{$\program$-feasible Execution Path:}} An execution path
  $(\sigma_0 ... \sigma_n, \rightsquigarrow)$ is $\ \program$-feasible
  through state sequence $(s_0 ... s_{n+1}, \rightsquigarrow)$,
  denoted as $\program, (s_0 ... s_{n+1}) \vdash (\sigma_0
  ... \sigma_n, \rightsquigarrow)$, if and only if $\program \vdash
  (s_0 ... s_{n+1}, \rightsquigarrow)$ and for every statement
  $\sigma_i$:
  \begin{enumerate}
    \item If $i \rightsquigarrow j$, then $\sigma_i$ is a call
      statement, $\sigma_{j-1}$ is the matching return statement, and
      $(s_i, \sigma_i, s_j)$ is an edge in the PCFA.
    \item If $\sigma_i$ is not a call or a return statement, then
      $(s_i, \sigma_i, s_{i+1})$ is an edge in the PCFA.
  \end{enumerate}
\end{definition}

\paragraph{Oracle $\mathit{deriv2seq}$}
In a similar manner as $\mathit{derivation2path}$, we assume the
existence of a similar oracle named $\mathit{deriv2seq}$ which given a
derivation $d \in \cfgprog$ it returns its corresponding PCFA sequence
$(s_0 ... s_n, \rightsquigarrow)$. This can be easily derived from a
derivation since every non-terminal in $\cfgprog$ is associated with a
singe PCFA state.

\begin{lemma}\label{lem:deriv2seq}
  A derivation $d$ belongs to $\cfgprog$ if and only if $\ \program
  \vdash \mathit{deriv2seq}(d)$.
\end{lemma}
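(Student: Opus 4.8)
The plan is to prove both directions simultaneously by structural induction on the derivation $d$, exploiting the fact that {\sc ConstructCFG} (Figure~\ref{fig:clone-cfg}) places the productions of $\cfgprog$ into an essentially bijective correspondence with the local conditions of Definition~\ref{def:feas-state-seq}. The crucial observation is that $\mathit{deriv2seq}$ associates each non-terminal occurring in $d$ with a single PCFA state, so reading the derivation top-down yields a unique state sequence $(s_0 \ldots s_n, \rightsquigarrow)$, and each application of a production in $d$ ``witnesses'' exactly one step of this sequence. Thus, checking whether $d$ uses only genuine productions of $\cfgprog$ reduces, step by step, to checking the corresponding clause of $\program$-feasibility.

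First I would set up the per-production correspondence. The intra-procedural rules (1) and (2) generate a production $\mathcal{S}_\varphi \rightarrow \mathcal{S}'_\varphi$ (resp. $\mathcal{S}_\varphi \rightarrow \sigma\,\mathcal{S}'_\varphi$) precisely when $(s,\sigma,s')$ is a PCFA edge with $\sigma$ a non-call (resp. API-call) statement; this is exactly clause (2) of Definition~\ref{def:feas-state-seq} for two consecutive indices $i, i{+}1$ with $s_i = s$, $s_{i+1} = s'$ and $\rightsquigarrow$ undefined at $i$. Rules (4) and (5) record that the sub-derivation of a method non-terminal begins at an entry state ($s \in \initsof$) and ends at the target exit state $c$ via $\mathcal{C}_\varphi \rightarrow \epsilon$. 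The call rule (3) is where the two directions meet the nesting relation: a production $\mathcal{S}_\varphi \rightarrow \methrule'_{\varphi'}\,\mathcal{S}'_\varphi$ exists in $\cfgprog$ iff $(s, \code{call } m', s')$ is an edge and $\mathit{feasible}(e, \varphi') = SAT(\predof(s) \land \predof(s') \land \varphi')$ holds, and this matches clause (1) of Definition~\ref{def:feas-state-seq} once we observe that $\mathit{deriv2seq}$ emits a nesting pair $i \rightsquigarrow j$ with $s_i = s$ and $s_j = s'$.

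The induction then proceeds on the structure of $d$: the forward direction reads each production off $d$ and discharges the matching feasibility clause, while the backward direction assumes $\program \vdash (s_0 \ldots s_n, \rightsquigarrow)$ and, clause by clause, reconstructs a production of $\cfgprog$, so that the two characterizations coincide. I would state the inductive hypothesis for sub-derivations rooted at an arbitrary non-terminal $\mathcal{S}_\varphi$ (equivalently, for an arbitrary PCFA state $s$ together with a target exit state $c$ with $\varphi = \predof(c)$): such a sub-derivation yielding the intra-procedural fragment from $s$ to $c$ uses only $\cfgprog$-productions iff that fragment satisfies Definition~\ref{def:feas-state-seq}, where the nested callee fragments inserted by rule (3) are handled by applying the inductive hypothesis to the callee's sub-derivation.

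The main obstacle, and the place that needs the most care, is the inter-procedural call/return bookkeeping in rule (3). I must verify that the SAT condition of clause (1), namely $SAT(\predof(s_i) \land \predof(s_{j-1}) \land \predof(s_j))$, coincides with the feasibility side-condition $\mathit{feasible}(e, \varphi')$ that admits the production. This requires that the exit state $s_{j-1}$ reached by the sub-derivation of $\methrule'_{\varphi'}$ is exactly the clone $c'$ of $m'$ with $\predof(c') = \varphi'$, so that $\predof(s_{j-1}) = \varphi'$. This alignment is guaranteed because {\sc ConstructCFG} generates a \emph{separate} grammar with a distinct start symbol for each exit state of each method (it iterates over $\Theta$), so the method non-terminal $\methrule'_{\varphi'}$ uniquely determines the target exit state; moreover the backward-reachability premise $\cfa \vdash s' \leadsto^{*} c$ ensures that every non-terminal appearing in a sub-derivation can in fact reach that exit through the terminating production $\mathcal{C}_{\varphi'} \rightarrow \epsilon$. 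Once this correspondence is established, the per-production matching closes both directions of the induction and the biconditional follows.
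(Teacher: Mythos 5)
Your proposal is correct and follows essentially the same route as the paper's proof: both arguments rest on the production-by-production correspondence between the rules of {\sc GenGrammar} in Figure~\ref{fig:clone-cfg} and the two clauses of Definition~\ref{def:feas-state-seq}, with the call rule's side-condition $\mathit{feasible}(e,\varphi')$ matching the $SAT(\predof(s_i) \land \predof(s_{j-1}) \land \predof(s_j))$ requirement because the callee's sub-derivation terminates at the exit clone $c'$ with $\predof(c') = \varphi'$. Your version is somewhat more explicit than the paper's (which simply asserts the correspondence ``follows immediately from the construction of $\cfgprog$''), in that you organize it as a structural induction and spell out the alignment between the method non-terminal $\methrule'_{\varphi'}$ and the target exit state, but the underlying argument is the same.
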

\begin{proof}
  This follows immediately from the construction of $\cfgprog$. Let's
  assume that for a derivation $d \in \cfgprog$ we have that
  $\mathit{deriv2seq}(d) = (s_0 ... s_n, \rightsquigarrow)$.

  ($\Longrightarrow$) For this direction, we will prove that $\program
  \vdash (s_0 ... s_n, \rightsquigarrow)$. In order for
  $\mathit{deriv2seq}$ to return this sequence the following must
  hold:
  \begin{itemize}
    \item For all $i, j$ such that $i \rightsquigarrow j$, there must
      exist a rule in $\cfgprog$ of the form $\mathcal{S}_{i\phi}
      \rightarrow \mathcal{M}_{\phi'} \mathcal{S}_{j\phi}$, where
      $\phi, \phi'$ refer to the method clones of the caller and the
      callee. Additionally we have that $s_{i+1} \in
      \initsof(\program[m])$, $s_{j-1} \in \finsof(\program[m])$, and
      $SAT(\predof(s_i) \land \predof(s_{j}) \land
      \predof(s_{j-1})))$. By construction of $\cfgprog$, this implies
      that there exists an edge of the form $(s_i, \code{call m},
      s_j)$ in $\program$.
    \item For all $i$ that $\rightsquigarrow$ is undefined, there must
      exist a rule in $\cfgprog$ of the form $\mathcal{S}_{i\phi}
      \rightarrow \mathcal{S}_{i+1\phi}$ or $\mathcal{S}_{i\phi}
      \rightarrow t\ \mathcal{S}_{i+1\phi}$ where $t$ is a
      terminal. This also implies that there exist an edge of the form
      $(s_i, \sigma, s_{i+1})$ in $\program$.
  \end{itemize}
  Now, it is easy to see that both these conditions satisfy
  Definition~\ref{def:feas-state-seq}, therefore we have that
  $\program \vdash (s_0 ... s_n, \rightsquigarrow)$.

  ($\Longleftarrow$) We now prove that if $\program \vdash (s_0
  ... s_n, \rightsquigarrow)$, then there exists $d \in \cfgprog$ such
  that $\mathit{deriv2seq(d)} = (s_0 ... s_n, \rightsquigarrow)$.

  This follows by the construction of $\cfgprog$ and
  Definition~\ref{def:feas-state-seq}. Specifically, the first
  condition of~\ref{def:feas-state-seq} implies that there exists at
  least one production in $\cfgprog$ of the form $\mathcal{S}_{i\phi}
  \rightarrow \mathcal{M}_{\phi'}\ \mathcal{S}_{j\phi}$ for some
  clones $\phi$ and $\phi'$. Whereas, the second condition implies
  that there exists a rule of the form $\mathcal{S}_{i\phi}
  \rightarrow \mathcal{S}_{i+1\phi}$ or $\mathcal{S}_{i\phi}
  \rightarrow t\ \mathcal{S}_{i+1\phi}$ for some method clone
  $\phi$. Therefore, we have that there exists at least one derivation
  $d \in \cfgprog$ for which $\mathit{deriv2seq(d)} = (s_0 ... s_n,
  \rightsquigarrow)$.
\end{proof}

\begin{lemma}\label{lem:compl}
  Let $\ \program'$ be the program returned by procedure
  $\textsc{Refine}$ for spurious counterexample $(\sigma_0
  ... \sigma_n, \rightsquigarrow)$ and sequence of nested interpolants
  $\mathcal{I} = [I_0, ..., I_{n+1}]$. Then we have that there does
  not exist state sequence $(s_0 ... s_{n+1}, \rightsquigarrow)$ such
  that $\ \program', (s_0 ... s_{n+1}) \nvdash (\sigma_0 ... \sigma_n,
  \rightsquigarrow)$.
\end{lemma}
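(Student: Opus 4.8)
The plan is to argue by contradiction: assume some state sequence $(s_0 \ldots s_{n+1}, \rightsquigarrow)$ witnesses $\program', (s_0 \ldots s_{n+1}) \vdash (\sigma_0 \ldots \sigma_n, \rightsquigarrow)$ in the sense of Definition~\ref{def:feas-path}, and derive a contradiction from the fact that $I_{n+1} = \emph{false}$. The central observation I would exploit is that $\textsc{Refine}$ clones every location $\locof(\sigma_i)$ appearing in the spurious trace using $\ccubes(\Psi(\locof(\sigma_i)))$, and that $I_i \in \Psi(\locof(\sigma_i))$ by the construction of $\Psi$ at line~\ref{ln:inter-map}. Consequently every state of $\program'$ at location $\locof(\sigma_i)$ carries a \emph{complete} cube that decides $I_i$: its predicate either entails $I_i$ or entails its negation $\neg I_i$. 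Thus in any realization, $\predof(s_i)$ is never agnostic about the tracked interpolant.

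First I would establish, by induction on $i$ from $0$ to $n$, the invariant $\predof(s_i) \Rightarrow I_i$. The base case is immediate since $I_0 = true$. For the inductive step on an ordinary (non-call, non-return) statement $\sigma_i$, Definition~\ref{def:feas-path} guarantees an edge $(s_i, \sigma_i, s_{i+1})$ that $\textsf{UpdateTransitions}$ retained only because $\mathit{feasible}$ held, i.e. $SAT(sp(\sigma_i, \predof(s_i)) \land \predof(s_{i+1}))$. Combining the hypothesis $\predof(s_i) \Rightarrow I_i$ with monotonicity of $sp$ and the inductiveness property $sp(\sigma_i, I_i) \Rightarrow I_{i+1}$ gives $sp(\sigma_i, \predof(s_i)) \Rightarrow I_{i+1}$; hence the satisfiable conjunction above forces $\predof(s_{i+1}) \land I_{i+1}$ to be satisfiable, and since the clone predicate $\predof(s_{i+1})$ decides $I_{i+1}$, it must entail it, yielding $\predof(s_{i+1}) \Rightarrow I_{i+1}$.

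The main obstacle is the interprocedural case. When $\sigma_i$ is a return matched by a call at index $j$ (with $j \rightsquigarrow i$), the inductiveness condition takes the form $sp(\sigma_i, I_i \land I_j) \Rightarrow I_{i+1}$, so I must supply \emph{two} previously established invariants at once: the callee-exit invariant $\predof(s_i) \Rightarrow I_i$ and the call-site invariant $\predof(s_j) \Rightarrow I_j$, the latter available because $j < i$. Feasibility of the matching pair is witnessed by the three-way satisfiability check of Definition~\ref{def:feas-state-seq} linking the call-site, callee-exit, and return-site states, and the delicate bookkeeping is aligning the state-sequence indices used there with the statement indices used by the interpolant conditions. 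The symmetric entry-into-callee transition is handled using $s_{i+1} \in \initsof(\program[m])$ together with property~(2) applied to the call statement. Once the indices are reconciled, the same ``satisfiable conjunction, complete cube decides $I_{i+1}$'' reasoning closes the case.

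Finally, I would push the invariant one step beyond the trace: applying the appropriate inductiveness property to $\sigma_n$ (using $\predof(s_n) \Rightarrow I_n$, and $\predof(s_j) \Rightarrow I_j$ if $\sigma_n$ is a return) gives $sp(\sigma_n, \predof(s_n)) \Rightarrow I_{n+1} = \emph{false}$, so $sp(\sigma_n, \predof(s_n))$ is unsatisfiable. But a feasible realization demands the edge into $s_{n+1}$ satisfy $SAT(sp(\sigma_n, \predof(s_n)) \land \predof(s_{n+1}))$, which is now impossible. This contradiction shows that no such state sequence exists, which is exactly the claim. I expect the return/entry index reconciliation to be the only genuinely subtle part; the internal-statement case and the concluding contradiction are routine given the complete-cube refinement and the feasibility-based edge pruning performed by $\textsf{UpdateTransitions}$.
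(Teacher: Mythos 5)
Your proposal is correct, and it is built from exactly the same ingredients as the paper's proof: every clone's predicate is a complete cube over $\Psi(\ploc)$ and therefore decides the tracked interpolant $I_i$; the inductiveness of the nested interpolants combined with the $\mathit{feasible}$ pruning in $\textsf{UpdateTransitions}$ (and, for call/return, the three-way satisfiability condition in the definition of $\program$-feasible state sequences) forces any surviving realization to follow the interpolants; and $I_0 = \mathit{true}$, $I_{n+1} = \mathit{false}$ anchor the argument. The organization differs in a way worth noting. You run a single forward induction establishing the positive invariant $\predof(s_i) \Rightarrow I_i$ and derive one contradiction at the final edge. The paper instead first imposes simplifying assumptions (first counterexample, each location occurring once, so each $s_i$ splits into exactly $s_i'$ carrying $I_i$ and $s_i''$ carrying $\neg I_i$), then performs a case analysis over which clones appear, showing by induction on the trace length that the path is infeasible through any sequence that enters a negated clone, and only sketches at the very end how the argument lifts when locations repeat or already carry predicates. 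Your invariant formulation absorbs that general case directly — ``the retained edge is satisfiable, $sp$ of the source predicate entails $I_{i+1}$, and the target cube decides $I_{i+1}$, so it must take the positive branch'' works verbatim for arbitrary clone sets — so it is arguably the cleaner write-up. The call/return index reconciliation you flag as the delicate point is indeed the only subtle step, and the paper's own proof handles it no more carefully than your sketch does (it likewise conflates the $sp$-based interpolant condition with the $sp$-free three-way predicate check); spelling out that the callee-exit state is $s_{j-1}$ and the return-site state is $s_j$ relative to the call edge $(s_i, \sigma_i, s_j)$ would complete your argument.
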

\begin{proof}
  Let's assume that $\program$ is the program before the
  refinement. To simplify the proof we make the following assumptions:
  1. $(\sigma_0 ... \sigma_n, \rightsquigarrow)$ is the first
  counterexample, i.e., each PCFA state has true as a predicate 2. a
  program location appears only once in the counterexample and 3. the
  last statement of \code{main} is skip. The third assumption just
  converts the input program to a normal form. Later we show how to
  generalize the proof so it does not require the first two
  assumptions.

  \begin{sloppypar}
  By Definition~\ref{def:feas-path} and assumptions 1 and 2, there
  exists a \emph{singe} state sequence $(s_0 ... s_{n+1},
  \rightsquigarrow)$ such that $\program, (s_0 ... s_{n+1}) \vdash
  (\sigma_0 ... \sigma_{n}, \rightsquigarrow)$ (i.e., the spurious
  counterexample is $\program$-feasible). Now, by the way procedure
  $\textsc{Refine}$ works, we have that states $s_0$ through $s_{n+1}$
  have been replaced by states $s_0',s_0'',...,s_{n+1}',s_{n+1}''$ in
  $\program'$ such that: $\locof(s_i') = \locof(s_i'') = \locof(s)$
  and $\predof(s_i') = I_i$, $\predof(s_i'') = \neg I_i$. We now show
  that for any state sequence of the form $(\bar{s}_0
  ... \bar{s}_{n+1}, \rightsquigarrow)$ where $\bar{s}_i \in \{s',
  s''\}$ we have that: $\program', (\bar{s}_0 ... \bar{s}_{n+1})
  \nvdash (\sigma_0 ... \sigma_{n+1}, \rightsquigarrow)$.
  \end{sloppypar}

  Now recall that procedure $\textsc{Refine}$ removes all the edges
  $(s_1, \sigma, s_2)$ for which $UNSAT(sp(\sigma, \predof(s_1)) \land
  \predof(s_2))$ holds. Since $I_{n+1} = false$ we have that the
  counterexample cannot be feasible through any state sequence that
  ends with $s_{n+1}'$ and from the definition of nested interpolants
  we also get that $UNSAT(sp(\sigma_n, I_n))$. Hence, the
  counterexample cannot be feasible through any state sequence of the
  form $(\bar{s}_0' ... s_n' s_{n+1}'', \rightsquigarrow)$. Last, we
  prove that $\program', (\bar{s}_0 ... s_{n}'' s_{n+1}'') \nvdash
  (\sigma_0 ... \sigma_n, \rightsquigarrow)$ for any combination of
  $\bar{s}_i$ which concludes the proof.

  \begin{sloppypar}
  The proof is by induction on the length of the counterexample:
  \begin{itemize}
    \item {\bf {Base case}}: Here we need to prove that $\program', (s_0''
      s_1'') \nvdash (\sigma_0, \rightsquigarrow)$. Since $I_0$ is
      true, we have $UNSAT(sp(\sigma_0, Pred(s_0'')))$ therefore the
      execution path is not feasible through state sequence $(s_0''
      s_1'')$
    \item {\bf {Inductive Step}}: By inductive hypothesis we have that
      $\program', (\bar{s}_0 ... s_{n+1}'') \nvdash (\sigma_0
      ... \sigma_n, \rightsquigarrow)$ for any $\bar{s}_i$. We will
      now prove that $\program', (s_0' ... s_{n+1}' s_{n+2}'')
      \nvdash (\sigma_0 ... \sigma_{n+1})$. Here we have two cases:
      \begin{enumerate}
        \item $\sigma_{n+1}$ is not a return statement. By the
          definition of nested interpolants we have that
          $sp(\sigma_{n+1}, \predof(s_{n+1}')) \Rightarrow
          \predof(s_{n+2}')$, this implies $UNSAT(sp(\sigma_{n+1},
          \predof(s_{n+1}')) \land \predof(s_{n+2}''))$. Therefore,
          $\textsc{Refine}$ removes the edge $(s_{n+1}, \sigma_{n+1},
          s_{n+2}'')$ which make the execution path infeasible through
          this predicate sequence.
        \item $\sigma_{n+1}$ is a return statement. By the definition
          of nested interpolants we have that $sp(\sigma_{n+1},
          \predof(s_{n+1}') \land \predof(s_{j}')) \Rightarrow
          \predof(s_{n+2}')$ for $j \rightsquigarrow n+2$. This
          implies that $UNSAT(sp(\sigma_{n+1}, \predof(s_{n+1}') \land
          \predof(s_{j}')) \land \predof(s_{n+2}''))$, hence by
          Definition~\ref{def:feas-path} we have that the execution
          path is infeasible through this state sequence.
      \end{enumerate}
  \end{itemize}
  \end{sloppypar}

  Now, if we lift assumptions 1 and 2 from earlier this means that
  each state $s_i$ from the feasible state sequence will be replaced
  by multiple new states in $\program'$. Recall though that procedure
  $\textsc{Refine}$ will clone all the states in $\program$ with the
  same location as $s_i$ and the resulting clones in $\program'$ will
  have one of the complete cubes as their predicate. This means that
  for a given interpolant $I_i$ a clone $s_i'$ in $\program'$ will
  contain either $I_i$ or $\neg I_i$, which implies that
  $\predof(s_i') \Rightarrow I_i$ or $\predof(s_i') \Rightarrow \neg
  I_i$. Therefore, the proof in the general case is similar to the one
  above except that one would have to consider \emph{all} the clones
  of $\program'$ that contain the predicate $\neg I_i$.
\end{proof}

\begin{lemma}\label{lem:sound}
  \begin{sloppypar}
  If $\ \program'$ is the program returned by procedure
  $\textsc{Refine}$ for sequence of nested interpolants $\mathcal{I} =
  [I_0, ..., I_{n+1}]$, then there exists state sequence $(s_0
  ... s_m, \rightsquigarrow)$ such that $\program', (s_0 ... s_m)
  \vdash (\pi, \rightsquigarrow)$ for every feasible execution path in
  $\program$.
  \end{sloppypar}
\end{lemma}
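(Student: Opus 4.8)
The plan is to show that \textsc{Refine} removes only infeasible PCFA edges, so every genuinely feasible execution of $\program$ remains realizable in the refined abstraction $\program'$. I would fix a feasible execution path $\pi = (\sigma_0 \ldots \sigma_n, \rightsquigarrow)$ of $\program$. By feasibility, $\pi$ is witnessed by a concrete run, namely a sequence of program states $\rho_0, \ldots, \rho_{n+1}$ with $\langle \sigma_i, \rho_i \rangle \Downarrow \rho_{i+1}$ at every non-call/return step and with call/return steps respecting $\rightsquigarrow$. Since $\program$ is the abstraction \emph{before} the refinement, $\pi$ is $\program$-feasible through some state sequence $(s_0 \ldots s_{n+1}, \rightsquigarrow)$ in the sense of Definition~\ref{def:feas-path}; this is my starting point.

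Next I would construct the witnessing state sequence in $\program'$ by selecting, for each $s_i$, the unique clone dictated by the concrete run. If $\locof(s_i) \notin \textsf{dom}(\Psi)$, then \textsc{Refine} leaves the state untouched and I set $s_i' = s_i$. Otherwise, writing $s_i = (\ploc, \varphi)$, the function \textsf{CloneStates} replaces it by the states $(\ploc, \varphi \land c)$ for $c \in \ccubes(\Psi(\ploc))$; since $\rho_i \models \varphi$ (because $\pi$ is $\program$-feasible through $s_i$) and $\rho_i$ satisfies exactly one complete cube $c_i$, I set $s_i' = (\ploc, \varphi \land c_i)$. This makes the clone choice well defined and guarantees $\rho_i \models \predof(s_i')$ for all $i$.

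Then I would verify that $(s_0' \ldots s_{n+1}', \rightsquigarrow)$ is $\program'$-feasible through $\pi$ by checking the clauses of Definitions~\ref{def:feas-state-seq} and~\ref{def:feas-path} against the edges retained by \textsf{UpdateTransitions}. For a non-call/return step, $(s_i, \sigma_i, s_{i+1})$ is an edge of $\program$, and \textsf{UpdateTransitions} keeps $(s_i', \sigma_i, s_{i+1}')$ exactly when $SAT(sp(\sigma_i, \predof(s_i')) \land \predof(s_{i+1}'))$; the pair $(\rho_i, \rho_{i+1})$ is such a witness, since $\rho_i \models \predof(s_i')$, $\rho_{i+1} \models \predof(s_{i+1}')$, and $\rho_{i+1}$ is the post-state of $\sigma_i$ on $\rho_i$. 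For a matching pair $i \rightsquigarrow j$, the concrete states $\rho_i, \rho_{j-1}, \rho_j$ jointly witness $SAT(\predof(s_i') \land \predof(s_{j-1}') \land \predof(s_j'))$—using that traces are in SSA form, so the three predicates admit a common model—while the entry/exit membership conditions $s_{i+1}' \in \initsof(\program'[m])$ and $s_{j-1}' \in \finsof(\program'[m])$ are inherited from $\program$ because cloning does not alter locations. Feeding the resulting $(s_0' \ldots s_{n+1}', \rightsquigarrow)$ into Lemma~\ref{lem:deriv2seq} applied to $\program'$ then supplies the derivation demanded by Theorem~\ref{thm:soundness}.

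The hard part will be the third step: arguing that \textsf{UpdateTransitions} never discards an edge that the concrete run actually traverses. The crux is that its feasibility test is an over-approximation check, so a real execution—supplying a concrete pre/post-state model at every edge—can never fail it. Extra care is needed for call edges, where the $sp$ of a \code{call} statement must be read consistently with the nested-trace semantics; I would sidestep this by appealing to clause~(1) of Definition~\ref{def:feas-state-seq} rather than to an intra-procedural $sp$, letting the callee's contribution be absorbed into the concrete states $\rho_{j-1}$ and $\rho_j$ at the matching return.
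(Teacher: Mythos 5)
Your proposal is correct and rests on the same underlying fact as the paper's proof---that \textsc{Refine} splits a state into clones carrying a \emph{complete} set of cubes, so a genuinely reachable configuration cannot be excluded by all clones at once---but the two arguments are executed differently. The paper proceeds by contradiction under normalizing assumptions (first refinement, each location occurring once in the counterexample): for every edge into a cloned location to vanish, $sp(\sigma_j,\predof(s_j))$ would have to be inconsistent with both $I_{j+1}$ and $\neg I_{j+1}$, hence unsatisfiable, contradicting feasibility; the general case is only sketched. You instead argue constructively: the concrete run $\rho_0,\ldots,\rho_{n+1}$ selects, at each cloned location, the unique cube of $\ccubes(\Psi(\ploc))$ that $\rho_i$ satisfies, and the pair $(\rho_i,\rho_{i+1})$ is then an explicit model for the $\mathit{feasible}$ test in \textsf{UpdateTransitions}. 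Your route buys two things: it handles repeated locations and multiple predicates per location uniformly, and---more importantly---it yields a single \emph{coherent} choice of clones along the entire path, whereas the paper's edge-by-edge contradiction must implicitly rely on the surviving choices being chainable into one state sequence. Two small points you should make explicit. First, your step ``$\rho_i \models \varphi$ because $\pi$ is $\program$-feasible through $s_i$'' is not literally given by Definition~\ref{def:feas-path} (which only asserts satisfiability of certain conjunctions, not that your particular run models them); it is a strengthened induction invariant---the run satisfies the predicates of \emph{some} witnessing state sequence---which holds trivially for the initial all-$\mathit{true}$ PCFA and is re-established by your own clone-selection rule after each refinement. Second, \textsc{Refine} clones one location per iteration of its loop, so an edge whose two endpoints are both refined is re-filtered twice; your model-based witness survives each pass, but the argument should say so.
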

\begin{proof}
  Here we make the same assumptions as in Lemma~\ref{lem:compl}, but
  as before the proof generalizes for the same reasons.

  Let's assume that $(\sigma_0 ... \sigma_n, \rightsquigarrow)$ is the
  spurious counterexample that triggered the refinement. Again, by
  Definition~\ref{def:feas-path} and assumptions 1 and 2 from
  Lemma~\ref{lem:compl}, there exists a \emph{singe} state sequence
  $(s_0 ... s_{n+1}, \rightsquigarrow)$ such that $\program, (s_0
  ... s_{n+1}) \vdash (\sigma_0 ... \sigma_{n}, \rightsquigarrow)$
  (i.e., the spurious counterexample is $\program$-feasible). Now, let
  $s_0', s_0'', ... s_n',s_n''$ be the states that replace each state
  $s_i$ in $\program$ where $\predof(s_i') = I_i$ and $\predof(s_i'')
  = \neg I_i$. Now, in order for an execution path to be eliminated by
  procedure $\textsc{Refine}$ it must have a statement that labels an
  edge whose source or destination state is one of the states $s_i$
  (the rest of the graph does not change). We prove this is impossible
  by contradiction:

  Let's assume that there exists an execution path $(\pi = \sigma_0
  ... \sigma_m, \rightsquigarrow)$ for which there does not exist
  state sequence $(s0 ... s_{m+1}, \rightsquigarrow)$ in $\program'$
  such that $\program', (s0 ... s_{m+1}, \rightsquigarrow) \nvdash
  (\pi, \rightsquigarrow)$ and also one of the statements $\sigma_j$
  in $\pi$ labels an edge whose target state\footnote{The case where
    it labels an edge whose source state is one from the
    counterexample is similar.} is one of the cloned states $s_{j+1}'$
  or $s_{j+1}''$. Now in order for this execution path to not be
  feasible we must have one of the following:
  \begin{itemize}
    \item If $\sigma_j$ is not a return statement, then we must have
      that $sp(\sigma_j, \predof(s_{j})) \land \predof(s_{j+1}'))$ and
      $sp(\sigma_j, \predof(s_{j})) \land \predof(s_{j+1}''))$ which is
      impossible since $\predof(s_{j+1}') = I_{j+1}$ and
      $\predof(s_{j+1}'') = \neg I_{j+1}$.
    \item If $\sigma_j$ is a return statement, then we must have that
      $sp(\sigma_j, \predof(s_{j}) \land \predof(s_{i})) \land
      \predof(s_{j+1}'))$ and $sp(\sigma_j, \predof(s_{j}) \land
      \predof(s_{i})) \land \predof(s_{j+1}''))$ for $i
      \rightsquigarrow j+1$. Which again it is impossible for the same
      reason as above.
  \end{itemize}
\end{proof}

}\fi

\end{document}